\newcommand*\samethanks[1][\value{footnote}]{\footnotemark[#1]}
\newcommand{\tw}{{\sf tw}}
\newcommand{\p}{{\sf p}}
\newcommand{\N}{\mathbb{N}}
\newcommand{\MSO}{{\sf MSO}\xspace}
\newcommand{\hh}{\end{document}}
\newcommand{\bd}{{\sf bd}}
\newcommand{\inter}{{\sf int}}
\newcommand{\remove}[1]{}
\newcommand{\bigmid}{\;\big|\;}
\newcommand{\cupall}{\pmb{\pmb{\bigcup}}}
\newcommand{\FPT}{{\sf FPT}\xspace}
\DeclareSymbolFont{symbolsstix}{LS1}{stixscr}{m}{n}
\DeclareMathSymbol{\mathvisiblespace}{0}{symbolsstix}{"B6}
\newcommand{\mathspace}{\mathvisiblespace}
\newcommand{\eqdef}{\stackrel{{\scriptsize\rm def}}{=}}
\definecolor{MidnightBlack}{rgb}{0.1,0.1,.32}
\definecolor{MidnightBlue}{rgb}{0.1,0.1,0.44}
\definecolor{Black}{rgb}{0,0, 0}
\definecolor{Blue}{rgb}{0, 0 ,1}
\definecolor{Red}{rgb}{1, 0 ,0}
\definecolor{White}{rgb}{1, 1, 1}
\definecolor{Grey}{rgb}{.6, .6, .6}
\definecolor{Mygreen}{rgb}{.0, .7, .0}
\definecolor{Yellow}{rgb}{.55,.55,0}
\definecolor{darkmagenta}{rgb}{0.30, 0.0, 0.30}
\definecolor{darkorange}{rgb}{1.0, 0.55, 0.0}
\definecolor{ao}{rgb}{1.0, 0.13, 0.32}
\definecolor{brandeisblue}{rgb}{0.0, 0.44, 1.0}
\definecolor{resu}{HTML}{e5e7fe}
    \pgfarrowshullpoint{\pgfarrowlength}{0pt}
\else\pgfsetlinewidth{+\pgfarrowlinewidth}\fi
    \pgfarrowshullpoint{\pgfarrowlength}{0pt}
    \pgfarrowshullpoint{\pgfarrowinset}{0pt}
\else\pgfsetlinewidth{+\pgfarrowlinewidth}\fi
\newdimen\ipeminipagewidth
\tikzstyle{ipe import} = [
\tikzset{
  rgb color/.code args={#1=#2}{%
    \definecolor{tempcolor-#1}{rgb}{#2}%
    \tikzset{#1=tempcolor-#1}%
  },
}
\newcounter{func}
\newcommand{\newfun}[1]{f_{\refstepcounter{func}\label{#1}\thefunc}}
\newcommand{\funref}[1]{\hyperref[#1]{f_{\ref*{#1}}}} 
\newcounter{con}
\newcommand{\conref}[1]{\hyperref[#1]{c_{\ref*{#1}}}} 
\newcommand{\mynewtheorem}[2]{
	\newaliascnt{#1}{dummy}
	\newtheorem{#1}[#1]{#2}
	\aliascntresetthe{#1}
}
\theoremstyle{plain}
\theoremstyle{definition}
\theoremstyle{remark}
\Crefname{subsection}{Subsection}{Subsections}
\Crefname{result}{Theorem}{Theorems}
\Crefname{importantdef}{Definition}{Definitions}
\tikzstyle{ipe stylesheet} = [
\definecolor{red}{rgb}{1,0,0}
\definecolor{blue}{rgb}{0,0,1}
\definecolor{green}{rgb}{0,1,0}
\definecolor{yellow}{rgb}{1,1,0}
\definecolor{orange}{rgb}{1,0.647,0}
\definecolor{gold}{rgb}{1,0.843,0}
\definecolor{purple}{rgb}{0.627,0.125,0.941}
\definecolor{gray}{rgb}{0.745,0.745,0.745}
\definecolor{brown}{rgb}{0.647,0.165,0.165}
\definecolor{navy}{rgb}{0,0,0.502}
\definecolor{pink}{rgb}{1,0.753,0.796}
\definecolor{seagreen}{rgb}{0.18,0.545,0.341}
\definecolor{turquoise}{rgb}{0.251,0.878,0.816}
\definecolor{violet}{rgb}{0.933,0.51,0.933}
\definecolor{darkblue}{rgb}{0,0,0.545}
\definecolor{darkcyan}{rgb}{0,0.545,0.545}
\definecolor{darkgray}{rgb}{0.663,0.663,0.663}
\definecolor{darkgreen}{rgb}{0,0.392,0}
\definecolor{darkmagenta}{rgb}{0.545,0,0.545}
\definecolor{darkorange}{rgb}{1,0.549,0}
\definecolor{darkred}{rgb}{0.545,0,0}
\definecolor{lightblue}{rgb}{0.678,0.847,0.902}
\definecolor{lightcyan}{rgb}{0.878,1,1}
\definecolor{lightgray}{rgb}{0.827,0.827,0.827}
\definecolor{lightgreen}{rgb}{0.565,0.933,0.565}
\definecolor{lightyellow}{rgb}{1,1,0.878}
\definecolor{black}{rgb}{0,0,0}
\definecolor{white}{rgb}{1,1,1}
\title{Parameterizing the quantification of CMSO: model checking on minor-closed graph classes\thanks{All authors were supported by  the French-German Collaboration ANR/DFG Project UTMA (ANR-20-CE92-0027). The first author was also supported by  ANR project ELIT (ANR-20-CE48-0008). 
The second author was also supported by the project BOBR that is funded from the European Research Council (ERC) under the European Union's Horizon 2020 research and innovation programme with grant agreement No. 948057. The third author was also supported by the Franco-Norwegian project PHC AURORA 2024 (Projet n° 51260WL).
Emails: \texttt{ignasi.sau@lirmm.fr}, \texttt{giannos.stamoulis@mimuw.edu.pl}, \texttt{sedthilk@thilikos.info}.}}
\author{\bigskip
Ignasi Sau%
\thanks{LIRMM, Univ Montpellier, CNRS, Montpellier, France.} 
\and
Giannos Stamoulis\thanks{University of Warsaw, Poland. Part of the research work for this paper was conducted when G.S. was affiliated with LIRMM, Univ Montpellier, CNRS, Montpellier, France.}
\and
Dimitrios M. Thilikos\samethanks[2]}
\date{}
\begin{document}

\maketitle

\thispagestyle{empty}

\begin{abstract}
\noindent Given a graph $G$ and a vertex set $X$,
the \emph{annotated treewidth} $\tw(G,X)$ of $X$ in $G$ is  the maximum treewidth of an $X$-rooted minor of $G$, i.e., a minor $H$
where the model of each vertex of $H$ contains some vertex of $X$.
That way, $\tw(G,X)$ can be seen as a measure of the contribution of $X$ to the tree-decomposability  of $G$.
We introduce the logic \textsf{CMSO/tw} as the fragment of  monadic second-order logic on graphs obtained by restricting set
quantification to sets of bounded annotated treewidth.
We prove the following Algorithmic Meta-Theorem (AMT): \emph{for every non-trivial
minor-closed graph class,  model checking for} \textsf{CMSO/tw} \emph{formulas can be done in quadratic
time}.
Our proof works for the more general \textsf{CMSO/tw}+\textsf{dp} logic,
that is  \textsf{CMSO/tw} enhanced by disjoint-path predicates.
Our AMT can be seen as an extension of Courcelle's theorem to minor-closed
graph classes  where the bounded-treewidth condition in the input graph is
replaced by the bounded-treewidth quantification in the formulas.
Our results  yield, as special cases, all known AMTs whose combinatorial restriction is non-trivial minor-closedness.
\bigskip

\end{abstract}

\noindent {\bf Keywords:} Algorithmic Meta-Theorems, Monadic second-order Logic, Model checking, Annotated treewidth, Hadwiger number.

\begin{textblock}{20}(12.2, 3.2)
   \includegraphics[width=30px]{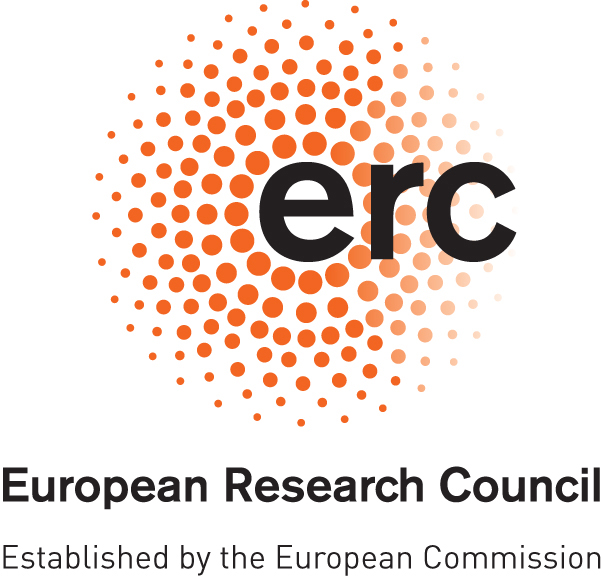}%
\end{textblock}
\begin{textblock}{20}(12.2, 3.8)
   \includegraphics[width=30px]{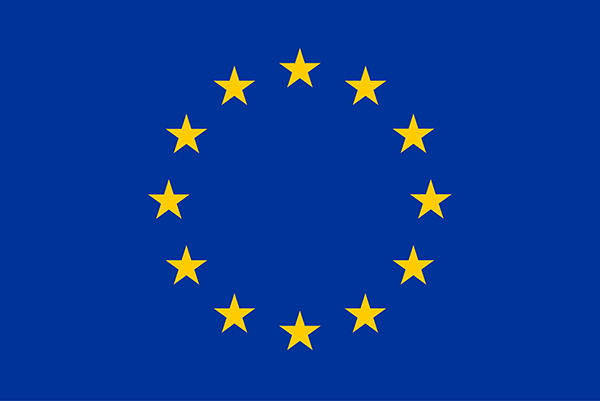}%
\end{textblock}

\newpage

\tableofcontents

\pagenumbering{arabic}
\thispagestyle{empty}
\newpage

\setcounter{page}{1}

\section{Introduction}
\label{sec-intro}

A very active line of research lying in the territory between Logic and Algorithms deals with the derivation of \emph{Algorithmic Meta-Theorems}, or AMTs for short.
According to Grohe and Kreutzer~\cite{GroheK09},
AMTs state that certain families of algorithmic problems, typically defined by some \emph{logical condition}, are ``tractable'' when the input graphs satisfy some specific property, typically defined by some \emph{combinatorial condition}.
When the logical condition is expressibility in a logic ${\sf L}$ and the structural condition is membership  in some graph class ${\cal C}$, the notion of ``tractability'' is most commonly considered as the following: if $\varphi$ is a formula in ${\sf L}$ and $G$ is a graph in ${\cal C}$, then deciding
whether $G\models \varphi$ (this is known as the \textsc{Model Checking} problem for ${\sf L}$) can be done in time
$f(|\varphi|, {\cal C})\cdot |G|^{{\cal O}(1)},$ for some function $f$.

Clearly, there are two natural orthogonal directions to boost the power of an AMT: consider logics ${\sf L}$ that are as general as possible or/and consider graph classes ${\cal C}$ that are as general as possible.
There is an abundance of such results concerning different compromises between the logical and combinatorial conditions \cite{Kreutzer11,GroheK09,Grohe08}.
Among them, a considerable amount of research deals with the definition of logics where AMTs can be derived under the  combinatorial condition
that $\mathcal{C}$ is some non-trivial {\sl minor-closed} graph class\cite{FominGST22,FlumG01fixe,Bojanczyk21separ,SchirrmacherSV23first,SchirrmacherSSTV24mode}.%
\footnote{A graph class $\mathcal{C}$ is \emph{minor-closed} if it contains all minors of its graphs (for the definition of the minor relation, see \Cref{def_twMSO}).
 $\mathcal{C}$  is ``non-trivial'' if it does not contain all graphs.
According to Robertson and Seymour's theorem \cite{RobertsonS04}, if $\mathcal{C}$  is non-trivial and minor-closed then it excludes some big enough clique as a minor.}
The pursuit of such  AMTs aimed at the definition of  several logics able to express, in the most general manner, the rich algorithmic
potential of the Graph Minors series of Robertson and Seymour.

In this paper we introduce a  fragment of counting monadic second-order logic that is able to subsume all\footnote{This assertion excludes variants based on cardinality constraints (see e.g.~\cite{KnopKMT19simp}).}
known AMTs on minor-closed graph classes to a single meta-algorithmic statement.
The key idea
in our logic is to introduce parameterization in  the quantification
of \textsf{CMSO} formulas, as discussed in~\Cref{sec_contribution}.

\paragraph{Courcelle's theorem: limitations and extensions.}
The archetypical example of an AMT is Courcelle's theorem~\cite{Courcelle90them,Courcelle97,Courcelle92} (see also \cite{BoriePT92auto,ArnborgLS91easy}), asserting that graph properties definable in \textsf{CMSO} (counting monadic second-order logic)
are decidable in linear time on graphs of bounded treewidth
(for the formal definition of \textsf{CMSO}, see \Cref{basic_logic}).
Due to the \textsl{grid-exclusion theorem} \cite{RobertsonS90b},
this AMT can be rewritten, using minor-exclusion, as follows.
\begin{proposition}
\label{question}
There exist a function $f: \mathbb{N}^2\to\mathbb{N}$ and an algorithm that,  given a planar graph $H$, a formula $φ\in\mathsf{CMSO}$,  and an $H$-minor-free graph $G$,
decides whether $G\models φ$ in time $f(|\varphi|,|H|)\cdot |G|$.
\end{proposition}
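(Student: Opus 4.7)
The plan is to derive the statement as a direct corollary of Courcelle's theorem combined with the grid-exclusion theorem of Robertson and Seymour. The key observation is that the planarity of $H$ forces the treewidth of any $H$-minor-free graph to be bounded by a function of $|H|$ alone, reducing the problem to the classical bounded-treewidth setting.

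First, I would invoke the fact that every planar graph $H$ is a minor of some sufficiently large grid: specifically, there is a function $k_1(|H|)$ such that $H$ is a minor of the $k_1(|H|)\times k_1(|H|)$ grid. Consequently, any graph $G$ containing a large enough grid as a minor also contains $H$ as a minor. Contrapositively, if $G$ is $H$-minor-free, then $G$ excludes the $k_1(|H|)\times k_1(|H|)$ grid as a minor. Next, I would apply the grid-exclusion theorem of Robertson and Seymour (or any of its quantitative refinements), which asserts the existence of a function $g$ such that every graph excluding the $k\times k$ grid as a minor has treewidth at most $g(k)$. Combining these two steps gives $\tw(G)\le g(k_1(|H|))=:w(|H|)$.

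Having established that $\tw(G)$ is bounded by a function of $|H|$ alone, I would next compute a tree decomposition of $G$ of width at most $w(|H|)$ in time linear in $|G|$ (with a constant depending on $|H|$) using Bodlaender's algorithm. Finally, I would feed this tree decomposition together with the formula $\varphi$ into the algorithm of Courcelle's theorem, which evaluates whether $G\models\varphi$ in time $f_0(|\varphi|,\tw(G))\cdot |G|$ for some function $f_0$. Setting $f(|\varphi|,|H|):=f_0(|\varphi|,w(|H|))$ plus the overhead from tree-decomposition computation yields the claimed running time $f(|\varphi|,|H|)\cdot |G|$.

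There is essentially no mathematical obstacle here, since every component is a standard black-box result; the entire content of the statement lies in the observation that for \emph{planar} excluded minors, treewidth is bounded, so Courcelle's theorem applies unchanged. The only delicate point worth flagging is the qualitative dependence of the function $f$ on $|H|$: if one wanted explicit or polynomial bounds on $w(|H|)$, one would need to appeal to the polynomial grid-exclusion theorem, but for the statement as phrased (merely the existence of some function $f$) the classical form suffices. This proposition thus serves as the baseline whose combinatorial scope, namely planar excluded minors, the rest of the paper aims to extend to arbitrary non-trivial minor-closed classes via the enriched logic \textsf{CMSO/tw}.
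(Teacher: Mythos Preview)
Your proposal is correct and follows exactly the approach the paper itself indicates: the paper states this proposition as a direct reformulation of Courcelle's theorem via the grid-exclusion theorem, without giving a formal proof beyond the one-line remark ``Due to the \textsl{grid-exclusion theorem}, this AMT can be rewritten, using minor-exclusion, as follows.'' Your expansion of this into the explicit chain---planar $H$ embeds in a large grid, hence $H$-minor-free implies bounded treewidth, hence Courcelle applies---is precisely what is intended.
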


Interestingly, the planarity condition for the excluded minor $H$ in \autoref{question}
is \textsl{unavoidable} if one wishes to stick to the  full expressibility potential of  \textsf{CMSO}  (see \cite{Kreutzer12}).
Therefore, the emerging challenge is the following:
\begin{eqnarray}
\begin{minipage}{14cm}
\textsl{Identify fragments of} \textsf{CMSO} \textsl{that permit the proof of  AMTs as the one in \autoref{question} without the planarity condition for the excluded minor $H$.}
\end{minipage}\label{question2}
\end{eqnarray}
The first result in this direction dates back to \cite{FlumG01fixe}, where
Flum and  Grohe showed that a ``planarity-free'' version of \autoref{question} holds if we consider
first-order logic  (\textsf{FO})
instead of  the more expressive \textsf{CMSO}.
Here we should stress that
the meta-algorithmics of \textsf{FO} offered a rich line of research that nowadays
has achieved to prove AMTs under combinatorial
conditions that are much more general than minor-closedness (see \cite{Seese96line,DawarGK07loca,DvorakKT13firs,NesetrilM12spar,NesetrilO08,NesetrilO08I,NesetrilO08II,GroheKS17dec,BonnetKTW22twinI,DreierMS23}).
While the meta-algorithmics of \textsf{FO}  are pretty well studied  (see \cite{KreutzerD09,DvorakKT13firs,BonnetGMSTT22twin}),
the challenge in \eqref{question2} has established an independent line of research
that resulted in the introduction of  {intermediate} logics
that enjoy some of the expressibility characteristics of  \textsf{CMSO}
and permit tractable model checking on the more general realm of minor-closed graph classes.

The first result in this direction consisted in
enhancing \textsf{FO} with extra predicates expressing  the existence of a path avoiding some fixed number of vertices.
Such predicates are not \textsf{FO}-expressible, as they require quantification over sets of vertices.
This extension  was introduced independently
by  Schirrmacher, Siebertz, and Vigny
in \cite{SchirrmacherSV23first} (under the name {\sf FO+conn}) and
by  Bojańczyk in \cite{Bojanczyk21separ} (under the name \emph{separator logic}).
The second and more expressive extension of \textsf{FO}, also introduced by Schirrmacher, Siebertz, and Vigny in \cite{SchirrmacherSV23first}, is {\sf FO+dp}, which enhances \textsf{FO} with predicates expressing the existence of disjoint paths between certain pairs of vertices.
For {\sf FO+conn}, a tractable model checking algorithm in graphs excluding a (topological) minor
has been given by Pilipczuk, Schirrmacher, Siebertz,  Toruńczyk, and Vigny~\cite{PilipczukSSTV22algor}.
For the more expressive {\sf FO+dp}, a tractable model checking algorithm in graphs excluding a minor has been  given by
Golovach, Stamoulis, and Thilikos in~\cite{GolovachST22model} (this result has recently been extended for graphs excluding a topological minor in~\cite{SchirrmacherSSTV24mode}).
Note that, in both aforementioned logics,
expressibility of \textsf{FO} is boosted by allowing some particular and restricted \textsl{set} quantification ``hidden'' in the introduced predicates.
However, these logics do not allow further queries on the ``hidden'' quantified sets, and therefore they can be viewed more
as enhancements of  \textsf{FO} rather than restrictions of \textsf{CMSO}.
As an attempt to
go further than this, another enhancement of \textsf{FO} was recently introduced by Fomin,  Golovach, Sau,  Stamoulis, and   Thilikos~\cite{FominGSST23}, the so-called \emph{compound logic} $\tilde{Θ}$, which combines \textsf{FO} and \textsf{CMSO} and is essentially a language that was tailored so as to express general families of graph modification problems.
In~\cite{FominGSST23}, the authors provide an  AMT
on minor-excluding graph classes  for the enhancement of the compound logic with disjoint-paths predicates, denoted by $\tilde{Θ}^{\text{\textsf{dp}}}$.

\subsection{Our contribution}
\label{sec_contribution}

In this paper we deviate from the above approaches that deal with {enhancements of {\sf FO}}.
In the logic that we introduce, rather than allowing only particular combinations of quantified sets, we now allow arbitrary quantification on sets by restricting the {\sl structure} of each quantified set.
Namely,
we demand them to have bounded \emph{annotated treewidth},
a parameter introduced in~\cite{ThilikosW23excl},
which is defined as follows: given a graph $G$ and a set $X\subseteq V(G)$, the \emph{annotated treewidth} of $X$ in $G$, denoted by $\tw(G,X)$, is the maximum treewidth of an \emph{$X$-rooted minor} of $G$, i.e., a minor $H$ where the model of each vertex of $H$ contains some vertex of $X$.
That way, $\tw(G,X)$ can be seen as a measure of the contribution of $X$ to the tree-decomposability of $G$, in the sense that it corresponds to the largest treewidth of all minors of $G$ in which the set $X$ is ``spread''  across the vertex models.
For instance, if $Γ_{n}$ is a the $(n\times n)$-grid and $X$ are the vertices of its ``perimeter'', then $\tw(Γ_{n},X)=2,$
while if $X$ contains all the vertices of some ``diagonal'' of  $Γ_{n}$
or of the ``middle vertical'' path of $Γ_{n}$, then $\tw(Γ_{n},X)=Ω(n^{1/2})$.

More formally, we denote by $\mathsf{CMSO/tw}$ the restriction of \textsf{CMSO} where instead of using the quantifier $\exists X$ (resp.
$\forall X$) for a set variable $X$,
we introduce the quantifier $\exists_{k} X$ (resp.
$\forall_{k} X$) for some $k\in\mathbb{N}$, where $\exists_{k} X$ and $\forall_{k} X$ mean that the quantification is applied on sets $X\in 2^{V(G)}\cup 2^{E(G)}$ where $\tw(G,X) \leq k$.\footnote{For the case of quantification on a set $X\subseteq E(G)$ of edges,
$\tw(G,X) \leq k$ means that the set of the endpoints of the edges in $X$ has annotated treewidth at most $k$.}
In particular, we define
\begin{eqnarray}
\exists_{k} X\ φ \coloneqq \exists X\ (\tw(G,X)≤k \wedge φ)  & \mbox{and} & \forall_{k} X\ φ \coloneqq \neg(\exists_{k} X\ \neg φ).\label{quant}
\end{eqnarray}
(Note that $\forall_{k} X\ φ = \forall X\ (\tw(G,X)≤k \rightarrow φ)$.)

The main conceptual novelty of our approach is to  {\sl parameterize the quantification} allowed in the considered $\mathsf{CMSO}$ formulas
and consider the parameter of annotated treewidth
as the measure of this parametrization.
To the best of our knowledge, this is the first time that such a ``parameterized fragment'' of \textsf{CMSO} is considered.
Let us mention here existing work on \emph{extensions} of \textsf{MSO} by generalized quantifiers for \textsf{MSO}~\cite{Andersson02onse,Kontinen10defi,BurtschickV98lind} originating in the corresponding work for \textsf{FO}~\cite{Mostowski57onag,Lindstrom66gene} (see~\cite{Vaananen99gene} and~\cite[Chapter 12]{EbbinghausF99fini} for surveys -- for second order Category and Measure
quantifiers, see~\cite{MioSH18mona,BojanczykKS21msop,MichalewskiM16meas,Bojanczyk16thin}).

Concerning the choice of the parameter $\tw$ in the above definition, we should stress that among all possible minor-monotone
parameters,\footnote{A graph parameter is \emph{minor-monotone} if its value on a graph is never smaller than its value on its minors.} treewidth is the best possible choice. Indeed, as we argue in \autoref{sec_conclusions}, if $\p$ is a minor-monotone parameter that is not functionally
larger\footnote{We say that a parameter $\p$ is \emph{functionally larger} than a parameter $\p'$ if there is a function $f:\N\to\N$ such that for every graph
$G$, $\p'(G)≤f(\p(G))$.
} than treewidth, then the logic $\mathsf{CMSO/\p}$
has the same expressive power as the general $\mathsf{CMSO}$.

It is important to agree that,   when evaluating the length $|φ|$ of a formula
$φ\in\mathsf{CMSO/tw},$ the length of the numerical representation of $k$ is also taken into account.
{Note also that, for every fixed $k$, the property that $\tw(G,X) \leq k$ is \textsf{MSO}-expressible, and therefore $\mathsf{CMSO/tw}$ can indeed be seen as a fragment of  $\textsf{CMSO}$.}
Our results indicate that the quantification defined in \eqref{quant} is the key ingredient  for a
general answer to question \eqref{question2}.

We denote by $\mathsf{CMSO/tw}\!+\!\mathsf{dp}$ the logic obtained from $\mathsf{CMSO/tw}$ by allowing also disjoint-paths predicates, as it was already done in \cite{SchirrmacherSV23first,GolovachST22model} for the case of \textsf{FO}.
Note that  the application of this enhancement
on $\mathsf{CMSO}$ does not add any expressibility power, as disjoint-paths predicates are already expressible in $\mathsf{CMSO}$.
However,
$\mathsf{CMSO/tw}\!+\!\mathsf{dp}$ is more general than $\mathsf{CMSO/tw}$,
as the disjoint-paths predicates may involve quantification on vertex sets of \textsl{unbounded} annotated treewidth.
Our result, in its most general form, is the following.

\begin{result}\label{main_theorem_intro}
There exist a function $f: \mathbb{N}^2\to\mathbb{N}$ and an algorithm that,  given a  graph $H$, a formula $φ\in\mathsf{CMSO/tw}\!+\!\mathsf{dp}$,  and an $H$-minor-free graph $G$,
decides whether $G\models φ$ in time $f(|\varphi|,|H|)\cdot |G|^2$.
\end{result}

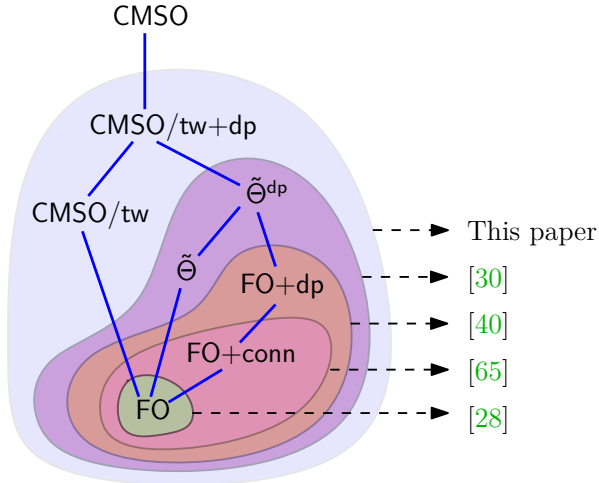
\begin{figure}[ht]
\centering
\scalebox{.88}{
\tikzstyle{ipe stylesheet} = [
  ipe import,
  even odd rule,
  line join=round,
  line cap=butt,
  ipe pen normal/.style={line width=0.4},
  ipe pen heavier/.style={line width=0.8},
  ipe pen fat/.style={line width=1.2},
  ipe pen ultrafat/.style={line width=2},
  ipe pen normal,
  ipe mark normal/.style={ipe mark scale=3},
  ipe mark large/.style={ipe mark scale=5},
  ipe mark small/.style={ipe mark scale=2},
  ipe mark tiny/.style={ipe mark scale=1.1},
  ipe mark normal,
  /pgf/arrow keys/.cd,
  ipe arrow normal/.style={scale=7},
  ipe arrow large/.style={scale=10},
  ipe arrow small/.style={scale=5},
  ipe arrow tiny/.style={scale=3},
  ipe arrow normal,
  /tikz/.cd,
  ipe arrows, 
  <->/.tip = ipe normal,
  ipe dash normal/.style={dash pattern=},
  ipe dash dotted/.style={dash pattern=on 1bp off 3bp},
  ipe dash dashed/.style={dash pattern=on 4bp off 4bp},
  ipe dash dash dotted/.style={dash pattern=on 4bp off 2bp on 1bp off 2bp},
  ipe dash dash dot dotted/.style={dash pattern=on 4bp off 2bp on 1bp off 2bp on 1bp off 2bp},
  ipe dash normal,
  ipe node/.append style={font=\normalsize},
  ipe stretch normal/.style={ipe node stretch=1},
  ipe stretch normal,
  ipe opacity 10/.style={opacity=0.1},
  ipe opacity 30/.style={opacity=0.3},
  ipe opacity 50/.style={opacity=0.5},
  ipe opacity 75/.style={opacity=0.75},
  ipe opacity opaque/.style={opacity=1},
  ipe opacity opaque,
]
\definecolor{red}{rgb}{1,0,0}
\definecolor{blue}{rgb}{0,0,1}
\definecolor{green}{rgb}{0,1,0}
\definecolor{yellow}{rgb}{1,1,0}
\definecolor{orange}{rgb}{1,0.647,0}
\definecolor{gold}{rgb}{1,0.843,0}
\definecolor{purple}{rgb}{0.627,0.125,0.941}
\definecolor{gray}{rgb}{0.745,0.745,0.745}
\definecolor{brown}{rgb}{0.647,0.165,0.165}
\definecolor{navy}{rgb}{0,0,0.502}
\definecolor{pink}{rgb}{1,0.753,0.796}
\definecolor{seagreen}{rgb}{0.18,0.545,0.341}
\definecolor{turquoise}{rgb}{0.251,0.878,0.816}
\definecolor{violet}{rgb}{0.933,0.51,0.933}
\definecolor{darkblue}{rgb}{0,0,0.545}
\definecolor{darkcyan}{rgb}{0,0.545,0.545}
\definecolor{darkgray}{rgb}{0.663,0.663,0.663}
\definecolor{darkgreen}{rgb}{0,0.392,0}
\definecolor{darkmagenta}{rgb}{0.545,0,0.545}
\definecolor{darkorange}{rgb}{1,0.549,0}
\definecolor{darkred}{rgb}{0.545,0,0}
\definecolor{lightblue}{rgb}{0.678,0.847,0.902}
\definecolor{lightcyan}{rgb}{0.878,1,1}
\definecolor{lightgray}{rgb}{0.827,0.827,0.827}
\definecolor{lightgreen}{rgb}{0.565,0.933,0.565}
\definecolor{lightyellow}{rgb}{1,1,0.878}
\definecolor{black}{rgb}{0,0,0}
\definecolor{white}{rgb}{1,1,1}
\begin{tikzpicture}[ipe stylesheet]
  \filldraw[shift={(141.791, 420.716)}, xscale=0.6364, yscale=0.5814, ipe pen heavier, fill=blue, ipe opacity 10]
    (0, 0)
     .. controls (9.1, 37.0305) and (27.3, 67.4145) .. (63.05, 80.7075)
     .. controls (98.8, 94.0005) and (152.1, 90.2025) .. (187.6667, 58.3638)
     .. controls (223.2333, 26.5252) and (241.0667, -33.3542) .. (246.7867, -84.6272)
     .. controls (252.5067, -135.9002) and (246.1133, -178.5668) .. (206.6467, -199.9002)
     .. controls (167.18, -221.2335) and (94.64, -221.2335) .. (53.17, -213.6375)
     .. controls (11.7, -206.0415) and (1.3, -190.8495) .. (-3.9, -174.708)
     .. controls (-9.1, -158.5665) and (-9.1, -141.4755) .. (-9.1, -111.0915)
     .. controls (-9.1, -80.7075) and (-9.1, -37.0305) .. cycle;
  \filldraw[shift={(186.909, 338.605)}, xscale=0.6364, yscale=0.5814, ipe pen heavier, line cap=round, fill=white]
    (0, 0)
     -- (0, 0);
  \filldraw[draw=black, ipe pen heavier, fill=darkmagenta, ipe opacity 30]
    (192.668, 374.8977)
     .. controls (202.1817, 389.2503) and (203.8363, 413.5397) .. (217.0727, 425.6843)
     .. controls (230.309, 437.829) and (255.127, 437.829) .. (272.3937, 415.09)
     .. controls (289.6603, 392.351) and (299.3757, 346.873) .. (280.7182, 323.2348)
     .. controls (262.0607, 299.5967) and (215.0303, 297.7983) .. (187.5152, 300.8992)
     .. controls (160, 304) and (152, 312) .. (148.7348, 321.2855)
     .. controls (145.4697, 330.571) and (146.9393, 341.142) .. (156.3605, 348.6355)
     .. controls (165.7817, 356.129) and (183.1543, 360.545) .. cycle;
  \filldraw[draw=black, ipe pen heavier, fill=darkorange, ipe opacity 30]
    (214.9348, 368.7367)
     .. controls (225.2757, 376.465) and (226.5513, 388.93) .. (237.1165, 394.0585)
     .. controls (247.6817, 399.187) and (267.5363, 396.979) .. (277.05, 381.5222)
     .. controls (286.5637, 366.0653) and (285.7363, 337.3597) .. (268.7772, 321.3508)
     .. controls (251.818, 305.342) and (218.727, 302.03) .. (197.2422, 303.921)
     .. controls (175.7573, 305.812) and (165.8787, 312.906) .. (162.5697, 322.2905)
     .. controls (159.2607, 331.675) and (162.5213, 343.35) .. (173.8547, 350.6833)
     .. controls (185.188, 358.0167) and (204.594, 361.0083) .. cycle;
  \filldraw[draw=black, ipe pen heavier, fill=violet, ipe opacity 30]
    (242.082, 364.116)
     .. controls (258.8183, 366.5823) and (269.9547, 366.5467) .. (273.341, 358.1103)
     .. controls (276.7273, 349.674) and (272.3637, 332.837) .. (256.1818, 322.4185)
     .. controls (240, 312) and (212, 308) .. (196, 309.3333)
     .. controls (180, 310.6667) and (176, 317.3333) .. (175.4455, 325.2855)
     .. controls (174.891, 333.2377) and (177.782, 342.4753) .. (190.3957, 349.5783)
     .. controls (203.0093, 356.6813) and (225.3457, 361.6497) .. cycle;
  \filldraw[draw=black, ipe pen heavier, fill=lightgreen, ipe opacity 50]
    (208.8001, 336.2559)
     .. controls (214.591, 331.8396) and (216.2456, 325.2152) .. (214.591, 321.351)
     .. controls (212.9365, 317.4868) and (207.9728, 316.3827) .. (201.8058, 315.4019)
     .. controls (195.6388, 314.421) and (188.2684, 313.5633) .. (184.9593, 318.5316)
     .. controls (181.6502, 323.4999) and (182.4024, 334.2941) .. (187.7422, 338.5872)
     .. controls (193.0819, 342.8803) and (203.0092, 340.6721) .. cycle;
  \node[ipe node, font=\large]
     at (190.601, 321.903) {${\sf FO}$};
  \node[ipe node, font=\large]
     at (145.864, 406.752) {${\sf CMSO/tw}$};
  \node[ipe node, font=\large]
     at (180.672, 490.825) {${\sf CMSO}$};
  \node[ipe node, font=\large]
     at (235.337, 376.237) {$\text{\textsf{FO}+\textsf{dp}}$};
  \node[ipe node, font=\large]
     at (207.973, 381.522) {${\sf \tilde{\Theta}}$};
  \node[ipe node, font=\large]
     at (237.755, 414.645) {${\sf \tilde{\Theta}}^{\sf dp}$};
  \node[ipe node, font=\large]
     at (170.746, 444.454) {$\text{\textsf{CMSO/tw}+\textsf{dp}}$};
  \node[ipe node, font=\large]
     at (212.201, 346.398) {$\text{\textsf{FO}+\textsf{conn}}$};
  \begin{scope}[shift={(49.4545, 208.373)}, xscale=0.6364, yscale=0.5814]
    \filldraw[draw=blue, ipe pen fat, fill=white]
      (236, 400)
       -- (292, 368);
    \filldraw[draw=blue, ipe pen fat, fill=white]
      (316, 308)
       -- (304, 348);
    \filldraw[draw=blue, ipe pen fat, fill=white]
      (252, 292)
       -- (232, 212);
    \filldraw[draw=blue, ipe pen fat, fill=white]
      (186, 334)
       -- (224, 212);
    \filldraw[draw=blue, ipe pen fat, fill=white]
      (190, 360)
       -- (220, 400);
    \filldraw[draw=blue, ipe pen fat, fill=white]
      (228, 424)
       -- (228, 480);
    \filldraw[draw=blue, ipe pen fat, fill=white]
      (292, 352)
       -- (264, 316);
    \filldraw[draw=blue, ipe pen fat, fill=white]
      (292, 252)
       -- (316, 280);
    \filldraw[draw=blue, ipe pen fat, fill=white]
      (244, 208)
       -- (280, 232);
  \end{scope}
  \filldraw[shift={(214.909, 324.651)}, xscale=0.6364, yscale=0.5814, ipe pen heavier, ipe dash dashed, ->, fill=darkmagenta]
    (0, 0)
     -- (172, 0);
  \filldraw[shift={(273.978, 343.245)}, xscale=0.6364, yscale=0.5814, ipe pen heavier, ipe dash dashed, ->, fill=darkmagenta]
    (0, 0)
     -- (79.177, 0.019);
  \node[ipe node, font=\large]
     at (328.673, 338.825) {\cite{PilipczukSSTV22algor}};
  \node[ipe node, font=\large]
     at (328.673, 318.825) {\cite{FlumG01fixe}};
  \node[ipe node, font=\large]
     at (328.673, 358.825) {\cite{GolovachST22model}};
  \node[ipe node, font=\large]
     at (328.673, 378.825) {\cite{FominGSST23}};
  \node[ipe node, font=\large]
     at (328.673, 398.825) {~This paper};
  \filldraw[ipe pen heavier, ipe dash dashed, ->, fill=darkmagenta]
    (292.189, 403.071)
     -- (324.416, 403.018);
  \filldraw[shift={(287.645, 382.992)}, xscale=0.6364, yscale=0.5814, ipe pen heavier, ipe dash dashed, ->, fill=darkmagenta]
    (0, 0)
     -- (57.225, 0);
  \filldraw[shift={(282.728, 362.997)}, xscale=0.6364, yscale=0.5814, ipe pen heavier, ipe dash dashed, ->, fill=darkmagenta]
    (0, 0)
     -- (65.026, 0.014);
\end{tikzpicture}

}
\caption{The Hasse diagram of the poset of some logics between $\mathsf{FO}$ and $\mathsf{CMSO}$ along with the corresponding known and recent results on AMTs whose combinatorial condition is minor-exclusion. An edge between two logics means that the formulas of the lower one are expressible by the formulas of the higher one.}
\label{fig-Hasse}
\end{figure}

Note that \Cref{main_theorem_intro} can be rephrased as the property that model checking for $\mathsf{CMSO/tw}\!+\!\mathsf{dp}$ is tractable on non-trivial
minor-closed graph classes, and subsumes {\sl all} known meta-algorithmic results on the tractability of logics on non-trivial minor-closed graph classes, such as those in~\cite{FlumG01fixe,GolovachST22model,FominGSST23,FominGST22}.
This is schematized in the Hasse diagram of \autoref{fig-Hasse}.
On the other hand, recall that Courcelle's theorem, stated in \autoref{question},
asserts that $\mathsf{CMSO}$ model checking is tractable on graphs of bounded treewidth, and note that, when restricted to this class of graphs, the logics $\mathsf{CMSO}$ and $\mathsf{CMSO/tw}$ {(as well as $\mathsf{CMSO/tw}\!+\!\mathsf{dp}$)}
have the same expressive power,
since on bounded-treewidth graphs every set of vertices/edges trivially has bounded annotated treewidth.
Therefore, \Cref{main_theorem_intro} can be interpreted as a far-reaching generalization of Courcelle's theorem under a new paradigm in which the bounded-treewidth restriction is not anymore applied to the {\sl input graph}, but is instead applied, in its annotated form, to the quantification of the considered {\sl fragment of $\mathsf{CMSO}$}.
Our logic allows to lift the tractability from bounded-treewidth graphs to non-trivial minor-closed graph classes, and with the additional ``bonus'' of the $\mathsf{dp}$-extension of the $\mathsf{CMSO/tw}$ logic.

We should stress that the  proof of \Cref{main_theorem_intro}
is arguably more comprehensive, both from the conceptual and the technical point of view, {than
those of the  AMTs in \cite{GolovachST22model,FominGSST23,FominGST22}.}

\subsection{Our techniques}
\Cref{main_theorem_intro} capitalizes on the main algorithmic potential of the Graph Minors series of Robertson and Seymour.
Namely, it exploits the ``locality'' of sets of bounded annotated treewidth  and follows the scheme of the irrelevant-vertex technique of~\cite{RobertsonS95XIII}.
This approach is part of a greater framework that we introduce, concerning \emph{reductions between meta-theorems}, where
the task of finding irrelevant vertices for a problem is ``localized'' in a way that makes it (meta-)algorithmically tractable.
In what follows, we provide a brief exposition of the principal ideas of our proofs.

\vspace{-2mm}
\paragraph{Simplifying instances.}
Our approach to model checking is based on the irrelevant-vertex technique, which was introduced by Robertson and Seymour for the design of an algorithm solving the \textsc{Disjoint Paths} problem~\cite{RobertsonS95XIII}.
The core idea of this technique, from its original setting~\cite{RobertsonS95XIII} to a striking majority of its subsequent applications
in algorithm design for minor-free graphs, is that a candidate solution for the problem can be ``rerouted away'' from some sufficiently internal part of a grid, typically emerging from a structure called \emph{flat wall}.
This allows the characterization of the ``avoidable'' vertices as \emph{irrelevant}, in the sense that their removal does not change the {\sf yes}/{\sf no} status of the instance.
The repeated removal of such vertices results in an instance of bounded treewidth that is equivalent to the original instance. In the reduced instance, bounded treewidth can be used
algorithmically, typically via explicit dynamic programming or using the meta-result of~\Cref{question}, in order to solve the problem.

However, the {above approach}  does not readily work for model checking. The reason is that, even for \textsf{FO},
the removal of irrelevant vertices from a graph that has unbounded treewidth (or is even just planar) for some given fixed formula $\varphi$ would result to a bounded-treewidth graph that is equivalent, in terms of satisfaction of $\varphi$, this 
would contradict the undecidability of the satisfiability problem
(i.e., whether, for a given formula $\varphi$,
there is some graph satisfying $\varphi$)
for \textsf{FO};
see e.g.~\cite[Section~9.1]{Libkin04elem}.
Such an obstacle can be avoided by ``shifting'' from the initial problem with input $(G,\varphi)$
to an alternative problem in which the input graph $G$
comes together with an annotation $R_1,\ldots,R_h \subseteq V(G)$ of its vertex set, for some $h\in\mathbb{N}$ depending on $\varphi$,
and structures of the form $(G,R_1,\ldots,R_h)$, called \emph{annotated graphs},
are asked to satisfy some new formula $\hat{\varphi}$.
Initially, each $R_i$ is set to be equal to $V(G)$ and $(G,R_1,\ldots,R_h)$ satisfies $\hat{\varphi}$ if and only if $G$ satisfies $\varphi$, and our goal is to obtain some ``simplified'' instance $(G',R_1',\ldots,R_h')$ that is equivalent to $(G,R_1,\ldots,R_h)$
with respect to the satisfaction of $\hat{\varphi}$.
This idea, originating in~\cite{GolovachKMT17thep}, plays a key role in the proof of the AMTs in~\cite{GolovachST22model,FominGSST23,FominGST22,SchirrmacherSSTV24mode} and is also used in this work.

\paragraph{Annotated types of annotated graphs.}
In this paper, the formulas for which we find irrelevant vertices are the ones expressing the \emph{annotated type} of an annotated graph.
Given an $h\in\mathbb{N}$,
the annotated $h$-type of an annotated graph $(G,R_1,\ldots,R_h)$ can be seen as a ``tree-like'' encoding of all sentences of $h$
(first-order and monadic second-order)
variables in prenex normal form\footnote{For the current section, we assume that all considered formulas are in \emph{prenex normal form}, i.e., they are ``split'' into a prefix part that contains only quantifiers and a suffix part that is quantifier-free.
The quantifier-free part can be seen as a boolean combination of \emph{atomic formulas}, for some appropriate definition of atomic formulas depending on the logic we consider. For the corresponding definitions for $\mathsf{CMSO}/\tw,\!+\!\mathsf{dp}$, see~\Cref{def_twMSO} and~\Cref{lem_trans_CMSO/tw_MSO}.} that $G$ satisfies, where the $i$-th quantified variable is asked to be interpreted in $R_i$.
This corresponds to the annotated variant of \emph{$h$-rank} \textsf{FO}/\textsf{MSO}-\emph{types} given, for instance, by Shelah~\cite{Shelah75them}, Ebbinghaus and Flum~\cite[Definition~2.2.5]{EbbinghausF99fini}, and Libkin~\cite[Section~3.4]{Libkin04elem}; see also~\cite{BonnetKTW22twinI,gradel2003automata,GajarskyGK20diff,gajarsky16algorithmic,GolovachST22model,SchirrmacherSSTV24mode}.
{Our notation builds on the one used in~\cite{GolovachST22model,GolovachST22model_arXiv,SchirrmacherSSTV24mode},
since this facilitates the exposition of our proofs.}

Since annotated $h$-types fully determine which formulas of $h$ quantified variables
are satisfied by
an annotated graph $(G,R_1,\ldots,R_h)$, the \textsc{Model Checking} problem for any formula $\varphi$ on a input graph $G$
is translated to the computation of the (annotated) $h$-type of the graph $G$ annotated with $h$ copies of its vertex set.
To compute the annotated $h$-type, rather than computing the atomic type of every $h$-tuple of sets/vertices (which yields an algorithm running in time $n^{\mathcal{O}(h)}$),
we opt for the design of a procedure that \textsl{detects} and \textsl{removes} sets $S_0,\ldots,S_h$ of vertices that are ``irrelevant'',
in the sense that they can be safely discarded from the sets $V(G),R_1,\ldots,R_h,$ respectively, so as to obtain an annotated (sub){graph}  $(G',R_1',\ldots,R_h')$
that has the \textsl{same annotated type} as $(G,R_1,\ldots,R_h)$. By repeating this process, we obtain an equivalent annotated instance (of the same annotated type as the original instance) which has small treewidth and, using Courcelle's theorem, we can finally compute its annotated type (which indeed is \textsf{MSO}-expressible).
For the irrelevant-vertex arguments, it is important to note that the number of different possible annotated types, over all annotated graphs $(G,R_1,\ldots,R_h)$, depends only on $h$ and the number of different (up to logical equivalence) atomic types on $h$ variables.
A preliminary version of the above scheme {(restricted to \textsf{FO}{+\textsf{dp}} formulas)} appears in~\cite{GolovachST22model,GolovachST22model_arXiv}.

\paragraph{Reductions between AMTs.}
{In our setting},
irrelevant vertices for the annotated type are obtained via a ``local-global'' argument.
We detect an area of the graph where ``\textsl{locally}'' irrelevant vertices for the $R_i$'s, for some appropriate ``local variant'' of the annotated type, are also ``\textsl{globally}'' irrelevant for the (global) annotated type.
In other words, we describe a way to localize the task of finding irrelevant vertices in a two-fold manner:
``locally'' irrelevant vertices are also ``globally'' irrelevant, and such vertices can be found \textsl{locally} using black-box application of some meta-algorithmic result.

We make the above idea more abstract by introducing a notion of \emph{reductions between meta-theorems}.
In the viewpoint we adopt, graph classes can be expressed as families of graphs where some (structural) parameter is bounded.
Therefore, AMTs are stated for \textsl{Logic/Parameter pairs}, also called \emph{LP-pairs}, of the form $(\mathsf{L},\p)$, where  $\mathsf{L}$ is a logic and $\p$ is a graph parameter, mapping graphs to non-negative integers.
In its simplest form, reductions between AMTs are defined as follows.
We say that $(\mathsf{L}_1,\p_1)$ \emph{is reduced to} $(\mathsf{L}_2,\p_2)$ if the following conditions hold:
\begin{description}
\item \textbf{(Definability Condition)} For every $h\in\mathbb{N}$, annotated $h$-types $\Psi_h$ for $\mathsf{L}_1$ can be defined in $\mathsf{L}_2$,
\item \textbf{(Local-Global-Irrelevancy Condition)}
There is an effectively computable set $\Phi_h$ of $\mathsf{L}_2$-formulas, for every $h\in\mathbb{N}$,
and an algorithm $\mathbb{A}$ that given an $h\in\mathbb{N}$, a graph~$G$,
and an $h$-tuple $\bar{R}=(R_1,\ldots,R_{h})$ of subsets of $V(G)$,
if $\p_2(G)$ is ``large''  as a function of $h$ and $\p_1(G)$,
then $\mathbb{A}$ outputs, in time $f(h,\p_1(G))\cdot |G|^{\mathcal{O}(1)}$, a set $Y\subseteq V(G)$ such that
\begin{description}
\item (a) $\p_2(G[Y])$ is ``small'' , i.e., upper-bounded by a function of $h$ and $\p_1(G)$,
\item (b) irrelevant vertices for $\Phi_h$, in $(G[Y],\bar{R}\cap Y)$, are also irrelevant for $\Psi_h$ in $(G,\bar{R})$,
\item (c) such irrelevant vertices for $\Phi_h$ exist in $Y$.
\end{description}
\end{description}
If in the running time of $\mathbb{A}$, the dependency on $|G|$ is linear, then we say that $(\mathsf{L}_1,\p_1)$ \emph{is linear-time reduced to} $(\mathsf{L}_2,\p_2)$.
Reductions are defined formally and in a more general setting in~\Cref{reduct_sec} (see~\Cref{def_red}). We prove that if $(\mathsf{L}_1,\p_1)$ can be reduced to $(\mathsf{L}_2,\p_2)$ and the \textsl{evaluation} of $\mathsf{L}_2$-expressible formulas is tractable on graphs of bounded $\p_2$, then $\mathsf{L}_1$ model checking is tractable on graphs of bounded $\p_1$; see~\Cref{thm_reduce}.
In this case, the model checking algorithm for $\mathsf{L}_1$ has two stages.
First, it consists of an iterative application of algorithm $\mathbb{A}$ of the Local-Global-Irrelevancy Condition, in order to detect and remove locally irrelevant vertices which we know that are also globally irrelevant. This is done ``locally'' using the evaluation meta-algorithm for $(\mathsf{L}_2,\p_2)$.
The final outcome, after repeated applications of that procedure, is an annotated graph $(G',R_1',\ldots,R_h')$
with the same annotated $h$-type as $G$ (annotated with $h$ copies of $V(G)$) where $\p_2(G')$ is ``small'', i.e, upper-bounded by some  function of $\p_1(G)$ and $h$.
Therefore, using again the evaluation meta-algorithm for $(\mathsf{L}_2,\p_2)$, we can compute the annotated $h$-type of $(G',R_1',\ldots,R_h')$ and therefore conclude whether the initial graph $G$ satisfies a given formula $\varphi$ of $h$ quantifiers.

Let us mention here that reductions between AMTs already exist in the literature in the form of \textsl{transductions}; see e.g.~\cite[Theorem~4.3]{ArnborgLS91easy} and~\cite[Section 5]{GroheK09}.
In our case, reductions focus on iteratively translating ``local parts'' of the input (and the ``candidate'' annotated type) in order to gradually simplify the original instance,
rather than using syntactic interpretations to completely shift the \textsc{Model Checking} problem into another class of inputs (which is typically the case in the setting of syntactic interpretations; see e.g. the recent powerful results of~\cite{GroheK09,Bonnet22model,GajarskyHOLR20anew,GajarskyKNMPST20firs,DreierMS23}).

To prove that an LP-pair $(\mathsf{L}_1,\p_1)$ is reduced to an LP-pair $(\mathsf{L}_2,\p_2)$,
one has to come up with some appropriate set of formulas $\Phi_h$ and provide a suitable algorithm $\mathbb{A}$
that, when combined, yield the  \textsl{Local-Global-Irrelevancy Condition}.
In such an enterprise, the most demanding task is to prove that
``locally'' irrelevant vertices are also ``globally'' irrelevant.
{For this, one has to develop a ``composition'' argument on how the global annotated type can be split into two \textsl{partial} parts, one of them corresponding to some part of the graph with ``small $\p_2$'' (which can be detected by $\mathbb{A}$) and explain how computing this partial part of the annotated type can give us information about the global annotated type. These ideas date back to the work of Feferman and Vaught~\cite{FefermanV59firs}.}
For the setting where annotated types are ``glued'' in a common set of unbounded size, previous work was done in~\cite{FominGST22,FominGSST23,GolovachST22model,GolovachST22model_arXiv}.

After defining the above reduction framework  between AMTs, our goal is to prove that
\begin{equation}
\emph{$(\mathsf{CMSO/tw}\!+\!\mathsf{dp},\mathsf{hw})$ is linear-time reduced to $(\mathsf{CMSO},\mathsf{tw})$},\label{eq_red}
\end{equation}
where $\mathsf{hw}(G)$ is the \emph{Hadwiger number} of $G$, i.e., the minimum $k$ such that $G$ does not contain $K_k$ as a minor, while $\tw(G)$ is the treewidth of $G$.
This, combined with~\Cref{thm_reduce} and the \textsl{evaluation variant} of Courcelle's theorem~\cite[Theorem 2.10]{CourcelleM93} (see also,~\cite{ArnborgLS91easy} and~\cite[Corollary 4.15]{FlumFG02quer})
implies tractable model checking for $\mathsf{CMSO/tw}\!+\!\mathsf{dp}$ on classes of bounded Hadwiger number (\Cref{main_theorem_intro}).
Using the framework of reductions, one can also prove that if
$\mathsf{L}\in\{\mathsf{FO},\text{\textsf{FO}+\textsf{dp}},\mathsf{CMSO}/\tw,\mathsf{CMSO}/\tw\!+\!\mathsf{dp}\}$ and
$\p\in\{\mathsf{eg},\mathsf{hw}\}$ (here $\mathsf{eg}$ corresponds to Euler genus),
then $(\mathsf{L},\p)$ is linear-time reduced to $(\mathsf{CMSO},\mathsf{tw})$ (for $(\mathsf{L},\p) = (\text{\textsf{FO}+\textsf{dp}},\mathsf{hw})$,
the proof is in~\cite{GolovachST22model,GolovachST22model_arXiv}).
However, the statement of \eqref{eq_red} subsumes all these statements.
Note that, using the same framework, one can show that if
$\mathsf{L}\in\{\text{\textsf{FO}+\textsf{sdp}},\mathsf{CMSO}/\tw\!+\!\mathsf{sdp}\}$,
then $(\mathsf{L},\mathsf{eg})$ is linear-time reduced to $(\mathsf{CMSO},\mathsf{tw})$\footnote{$\text{\textsf{FO}+\textsf{dsp}}$ and $\mathsf{CMSO/tw}\!+\!\mathsf{sdp}$ are the extensions of $\mathsf{FO}$ and $\mathsf{CMSO/tw}$ with \emph{scattered disjoint-paths predicates}, as defined in~\cite{GolovachST22model,GolovachST22model_arXiv}.}.

\paragraph{Novelties of our approach.}
As discussed earlier, the notion of reduction between AMTs that we introduce is a new concept that stems from the approach of~\cite{FominGSST23,GolovachST22model} and evolves it in a more abstract and general way, aiming to establish it as a base for the study and the proof of other AMTs. To the best of our knowledge, it is the first time that the irrelevant-vertex technique is abstracted as a \emph{method} for the derivation of AMTs in such a general setting.

The crucial property of having bounded annotated treewidth is not only allowing us to obtain a fragment of $\mathsf{CMSO}$ for which model checking \textsl{can} be tractable on unbounded-treewidth inputs (without contradicting general complexity assumptions; see~\cite[Section 9]{GroheK09}),
but also permits to see sets of bounded annotated treewidth as ``local'' sets inside a grid-like structure. In fact, sets of bounded annotated treewidth are \textsl{local} inside \emph{flat railed annuli}: systems of concentric cycles in a graph together with a collection of disjoint paths orthogonally traversing them, in such a way that there are no long ``jumps'' in the graph that can connect distant cycles or paths (see~\Cref{fig:railedann}).
The locality of sets of  bounded annotated treewidth  allow us to lift locality arguments for \textsf{FO} and, in particular, argue that for every tuple of sets of bounded annotated treewidth inside a flat railed annulus, we can find a ``strip'' of the railed annulus that is disjoint of these sets and crops out some part of them; see~\autoref{lemma:buffer}. This property lies in the very base of our arguments toward showing the Local-Global-Irrelevancy Condition in our reduction.

In fact, this property is essential to prove both items (b) and (c) of the \textsl{Local-Global-Irrelevancy Condition}. Let us particularly stress that, in order to prove (c), one has to show that there is a way to ``locally'' reduce the $R_i$'s while not changing the (partial) annotated type of the (partial) annotated graph.
For $R_i$'s that correspond to first-order quantification, one may simply locally restrict $R_i$ only to ``representative'' vertices that represent
the satisfaction of all different annotated types of one level less and discard duplicates, eventually leading to a ``small'' number of representative vertices for each $R_i$. This was the approach of~\cite{GolovachST22model,GolovachST22model_arXiv}.
In the case of \MSO-quantification,
this approach fails because the (union of) these representative sets have unbounded size and therefore we should not hope to bound the total number of sets in a locally representative $R_i'$.
To deal with this, we use the fact that the union of representative sets has bounded annotated treewidth (\Cref{prop:union}).
In a large flat railed annulus, we can find a strip disjoint from these sets. We use this observation to iteratively ``zoom'' inside such a strip to
prove that we can restrict each $R_i$ after in an area that is disjoint from the already restricted $R_1,\ldots,R_{i-1}$; see~\Cref{lemma:reducing_annotation}.

Another cornerstone of our approach is that the quantifier $\exists_k X$ (as well as $\forall_k X$) can be expressed in $\text{\textsf{FO}+\textsf{dp}}$, via the excluded-minor characterization of bounded annotated treewidth (see~\Cref{formulaanntw}). Such a translation of the bounded annotated treewidth demand for a set $X$ into a query on the satisfiability of certain disjoint-paths queries between terminals in $X$, allows to focus our study on ``just'' (more complicated) queries on disjoint paths between terminals that are picked in certain (quantified) sets; see~\Cref{lem_trans_CMSO/tw_MSO}.
Shifting to disjoint paths, we can use existing tools to achieve the \textsl{Local-Global-Irrelevancy Condition} in the presence of disjoint-paths predicates,
as the ones developed in~\cite{GolovachST22model,GolovachST22model_arXiv}.
In our work,
when it comes to dealing with the disjoint-paths predicates,
we further develop the technical lemmas of~\cite{GolovachST22model,GolovachST22model_arXiv}, which we  use in our proofs either in a black-box manner (\Cref{lem:translation-dp-apices,lem_colomodelsrerout,lemma:irrelevant_area}) or by proving them in a more general setting (\Cref{lem_inter2equiv,lemma:intermediate1}).
The framework of flat railed annuli that we present in this paper, which serves as the main combinatorial base of the \textsl{Local-Global-Irrelevancy Condition}, evolves the one of~\cite{GolovachST22model,GolovachST22model_arXiv,FominGSST23} and originates from the combinatorial tools studied in~\cite{SauST24amor}.

\paragraph{Organization of the paper.}
The rest of the paper is structured as follows.
\Cref{sec_two} is dedicated to the presentation of the main definitions of this paper, including the formal definition of the logic $\mathsf{CMSO/tw}\!+\!\mathsf{dp}$ and of the reductions between AMTs, as well as the formal statement of our results.
The proof of~\eqref{eq_red} spans \Cref{sec_gametrees,sec_flatannuli,sec_enhancedstruct,sec_exchangability}.
In particular, in~\Cref{sec_gametrees} we prove the Definability Condition,
and in~\Cref{sec_enhancedstruct,sec_exchangability} we prove the Local-Global-Irrelevancy Condition.
\Cref{sec_flatannuli} contains the framework of flat railed annuli, which is used as the underlying combinatorial base of the definitions and proofs of~\Cref{sec_enhancedstruct,sec_exchangability}.
Concluding discussions and directions for future research can be found in~\Cref{sec_conclusions}.

\section{Main definitions and formal statement of the results}
\label{sec_two}

In this section we provide the main definitions and statements involved in the proof of \Cref{main_theorem_intro}.
In this paper, we consider only \textsl{finite relational} {\em vocabularies} consisting of relation and constant symbols (we do not use function symbols).
Every relation symbol $R$ is associated with a positive integer, its {\em arity}.
A {\em structure $\mathfrak{A}$ of vocabulary $\sigma$}, in short a {\em $\sigma$-structure}, consists of a non-empty \textsl{finite} set  $V(\mathfrak{A}),$ called the {\em universe} of $\mathfrak{A},$
an $r$-ary relation $R^{\mathfrak{A}}\subseteq {V(\mathfrak{A})}^r$ for each relation symbol $R\in \sigma$ of arity $r\geq  1,$
and an element
 $c^{\mathfrak{A}}\in V(\mathfrak{A})$ for each constant symbol $c\in\sigma$.
We refer to $R^\mathfrak{A}$ (resp. $c^{\mathfrak{A}}$) as the {\em interpretation of the symbol $R$ (resp. $c$) in the structure $\mathfrak{A}$}.
When there is no ambiguity, we will not distinguish between relation symbols and their interpretations.
We use $|\mathfrak{A}|$ to denote $|V(\mathfrak{A})|$.
An undirected graph  without loops can be seen as an $\{E\}$-structure $\mathfrak{G} = (V(\mathfrak{G}),E^{\mathfrak{G}}),$ where $E^{\mathfrak{G}}$ is a binary relation that is symmetric and anti-reflexive.
A \emph{colored-graph vocabulary} contains a binary relation symbol $E$ that is always interpreted as a symmetric and anti-reflexive binary relation (corresponding to edges), a collection of unary relation symbols $Y_1,\ldots, Y_h$ (corresponding to colors on the vertices), and a collection of constant symbols $c_1,\ldots,c_l$ (corresponding to ``root'' vertices).
Given a $\sigma$-structure $\mathfrak{G}$, we use $\mathsf{Gaifman}(\mathfrak{G})$ to denote the Gaifman graph\footnote{For a structure $\mathfrak{G}$ of vocabulary $\sigma$, its \emph{Gaifman graph} is the graph whose vertex set is $V(\mathfrak{G})$ and two vertices are adjacent if they both appear in a tuple of the same relation. For colored-graph vocabularies, this is $(V(\mathfrak{G}),E^{\mathfrak{G}})$.}
 of $\mathfrak{G}$.
Given two integers $p,q$ such that $p\leq q$, we set $[p,q]:=\{p,\ldots,q\}$
and, if $p\geq 1$, $[p]:=[1,p]$.

\subsection{Basic definitions from logic}
\label{basic_logic}
We now define the syntax and the semantics of first-order logic and monadic second-order logic of a vocabulary $\sigma.$ We refer the reader to~\cite{EbbinghausF99fini,Libkin04elem,CourcelleE12grap}
for a broader discussion on logical structures, first-order logic, and monadic second-order logic, from the viewpoint of graphs.
We assume the existence of a countable infinite set of {\em first-order variables},
usually denoted by lowercase symbols $x_1,x_2,\ldots,$
and of a countable infinite set of {\em set variables},
usually denoted by uppercase symbols $X_1,X_2, \ldots.$
A {\em first-order term} is either a first-order variable or a constant symbol.

A {\em first-order logic formula}, in short {\em {\sf FO}-formula}, of vocabulary $\sigma$ is built from atomic formulas of the form $x=y$
and $(x_1, \ldots, x_r)\in R,$ where $R\in \sigma$ and has arity $r\geq  1,$ on first-order terms $x,y,x_1,\ldots, x_r,$
by using the logical connectives $\vee,$ $\neg$ and the
quantifier $\exists$ on first-order variables.

A \emph{monadic second-order logic formula}, in short \textsf{MSO}-\emph{formula}, of vocabulary $\sigma$ is obtained by enhancing
the syntax of {\sf FO}-formulas by allowing atomic formulas of the form $x\in X,$
for some first-order term $x$ and some set variable $X,$ and allowing quantification both on first-order and set variables\footnote{
{We clarify that
\textsf{MSO} is commonly referred in the literature as {\sf MSO}$_1$.
For the vocabulary of graphs, {\sf MSO}$_2$ is defined by also allowing quantification over edges and edge sets.
Every $\mathsf{MSO}_2$-formula on $G$ can be turned into an $\mathsf{MSO}_1$-formula on the \emph{incidence graph} $I(G)$ of $G$, defined as $(V(G)\cup E(G), \big\{\{v,e\}\mid v\in e\big\})$.
In this paper, for the vocabulary of graphs, we use \textsf{MSO} to denote $\mathsf{MSO}_2$. Note that for a vertex/edge set $X$ in $G$, $\tw(G,X) = \tw(I(G),X)$.
}
}.
We use $\forall x \varphi$ and $\forall X \varphi$ as an abbreviation for $\neg (\exists x\ (\neg \varphi))$ and $\neg (\exists X\ (\neg \varphi))$, respectively.
A {\em counting monadic second-order logic formula}, in short {\em {\sf CMSO}-formula}, of vocabulary $\sigma$ is obtained by enhancing the syntax of \textsf{MSO}-formulas by allowing predicates
of the form ${\sf card}_p(X),$ expressing that $|X|$ is a multiple of an integer $p>1$, for a fixed integer $p>1$.

Given a logic $\mathsf{L}$, we denote by $\mathsf{L}[\sigma]$ the set of all \textsf{L}-formulas of vocabulary $\sigma.$
Given a formula $\varphi$ of some logic $\mathsf{L}$,  the {\em free variables} of $\varphi$ are its variables
that are not in the scope of any quantifier.
We write $\varphi(x_1,\ldots, x_k, X_1,\ldots, X_\ell)$ to indicate that the free variables of the formula $\varphi$ are among $x_1, \ldots,x_k,X_1, \ldots,X_\ell$.
A {\em sentence} is a formula without free variables.
Let $\mathsf{L}$
be a logic and $\sigma$ be a vocabulary.
Given a $\sigma$-structure $\mathfrak{A},$ a formula $\varphi(x_1,\ldots, x_k, X_1,\ldots, X_\ell)\in \mathsf{L}[\sigma],$
$a_1,\ldots, a_k\in V(\mathfrak{A}),$ and $A_1,\ldots,A_\ell \subseteq V(\mathfrak{A})$,
we write $\mathfrak{A}\models \varphi(a_1, \ldots, a_k, A_1,\ldots,A_\ell)$ to denote that $\varphi(x_1,\ldots, x_k, X_1,\ldots, X_\ell)$ holds in $\mathfrak{A}$ if, for every $i\in[k],$ the variable $x_i$ is interpreted as $a_i$ and, for every $j\in [\ell]$, the variable $X_j$ is interpreted as $A_j$.
If $\varphi$ is a sentence, then we write $\mathfrak{A}\models \varphi$ to denote that $\mathfrak{A}$ satisfies $\varphi$.
For every considered set of formulas $\Phi$, we assume that we have fixed an encoding
$\gamma:\Phi\to\{0,1\}^*$, and we set the \emph{length of $\varphi$}, denoted by $|\varphi|$, to be the length of $\gamma(\varphi)$. Note that this encoding takes into account the maximum value of $p$ and $k$, where ${\sf card}_p(X)$ and $\exists_k X$ are used in $\varphi$.

\subsection{Definition of \texorpdfstring{$\mathsf{CMSO/tw}$}{CMSO/tw}}\label{def_twMSO}

\paragraph{Annotated treewidth.}
We refer the reader to~\cite{ThilikosW23excl}, where annotated treewidth was first defined.
Let $G$ be a graph and $X\subseteq V(G)$.
We say that a graph $H$ is an \emph{$X$-rooted minor} of $G$ (or, simply, an \emph{$X$-minor of $G$})
if there is a collection $\mathcal{S}=\{S_{v}\mid v\in V(H))\}$ of pairwise disjoint subsets of $V(G),$ each inducing a connected subgraph of $G$ and containing at least one vertex of $X$, and such that, for every edge $xy\in E(H),$ the set $S_{x}\cup S_{y}$ is connected in $G.$ %
A graph $H$ is a \emph{minor} of $G$ if it is a $V(G)$-rooted minor of $G$.
Given a graph $G$ and a set $X\subseteq V(G),$ we define $\tw(G,X)$ as the maximum treewidth of an $X$-minor of $G$ (see~\Cref{subsec:findingflat} for a formal definition of treewidth).
Note that $\tw(G,V(G))=\tw(G)$ and that $\tw(G,X)≤\tw(G)$, for every $X\subseteq V(G)$.
See~\cite[Subsection~1.5]{ThilikosW23excl} for a comparison with other types of annotated extensions of treewidth.

\paragraph{Syntax and semantics of $\mathsf{CMSO/tw}$.}
The terms, (atomic) formulas, and free variables of $\mathsf{CMSO/tw}$ are defined as the ones of $\mathsf{CMSO}$
with the only exception that instead of using the quantifier $\exists X$, where $X$ is a set variable,
we use the quantifier $\exists_{k} X$ (semantically defined below) for some $k\in\mathbb{N}$.
Also, we use $\forall_k X \varphi$ as an abbreviation of $\neg (\exists_k X \neg \varphi)$.

The semantics of $\mathsf{CMSO/tw}$ are as the ones of $\mathsf{CMSO}$, where for a given vocabulary $\sigma$,
and $\varphi\in\mathsf{CMSO/tw}[\sigma]$,
a $\sigma$-structure $\mathfrak{G}$ satisfies
$\exists_{k}X \varphi$, denoted by $\mathfrak{G}\models \exists_{k}X \varphi$, if and only if
there is an $S\subseteq V(\mathfrak{G})$ such that
$\tw(\mathsf{Gaifman}(\mathfrak{G}),S)\leq k$ and
$\mathfrak{G}$ satisfies $\varphi$ when $X$ is interpreted as $S$.

\paragraph{The logic \texorpdfstring{$\mathsf{CMSO/tw}\!+\!\mathsf{dp}$}{CMSO/tw+dp}.}
For every positive integer $k$,
we define the $2k$-ary \emph{disjoint-paths predicate} ${\sf dp}_k(x_1, y_1, \ldots, x_k, y_k)$,
where $x_1, y_1, \ldots, x_k, y_k$ are first-order terms.
We use ${\sf dp}$ instead of ${\sf dp}_k$ when $k$ is clear from the context.
A formula of $\mathsf{CMSO/tw}\!+\!\mathsf{dp}$
is obtained by enhancing the syntax of $\mathsf{CMSO/tw}$ by allowing atomic formulas of the form ${\sf dp}_k(x_1, y_1, \ldots, x_k, y_k)$ on first-order terms $x_1, y_1, \ldots, x_k, y_k$, for $k\geq 1$.
The satisfaction relation between a $\sigma$-structure $\mathfrak{G}$ and formulas $\varphi$ of $\mathsf{CMSO/tw}\!+\!\mathsf{dp}[\sigma]$ is as for $\mathsf{CMSO/tw}$,
where ${\sf dp}_k(x_1, y_1, \ldots, x_k, y_k)$ evaluates true in $\mathfrak{G}$ if and only if
there are paths
$P_1,\ldots, P_k$ of $\mathsf{Gaifman}(\mathfrak{G})$
between (the interpretations of) $x_i$ and $y_i$ for all $i\in[k]$ such that for every distinct $i,j\in[k]$, $V(P_i)\cap V(P_j)=\emptyset.$
It is easy to see that ${\sf dp}_k$ is $\mathsf{MSO}_1$-expressible.
Also, note that $\tw(G,X)$ is $\mathsf{MSO}_1$-expressible via the obstruction characterization of treewidth and hence $\mathsf{CMSO/tw}\!+\!\mathsf{dp}$ is ``contained'' in $\mathsf{CMSO}$; see~\Cref{sec_gametrees}.
{Let us also stress that the problem of deciding whether a given formula in \textsf{CMSO} has an equivalent formula in $\mathsf{CMSO/tw}\!+\!\mathsf{dp}$ is undecidable.}

Another feature of our proofs is that they may produce AMTs for more expressive extensions of $\mathsf{CMSO/tw}$. For this, we may enhance $\mathsf{CMSO/tw}$
with more general predicates of the form ${\sf dp}^+_k(Z,x_1, y_1, \ldots, x_k, y_k)$, where
$Z$ is a vertex set variable and $x_1, y_1, \ldots, x_k, y_k$ are vertex variables, that evaluates true in $\mathfrak{G}$ if and only if
there are paths
$P_1,\ldots, P_k$ in $\mathsf{Gaifman}(\mathfrak{G})$
between (the interpretations of) $x_i$ and $y_i$ for all $i\in[k]$ such that for every distinct $i,j\in[k]$, $V(P_i)\cap V(P_j)=\emptyset,$ and additionally none of these paths contains any of the vertices in the interpretation  of $Z$. As we argue in \autoref{sec_conclusions}, minor modifications
 of our proofs can also provide a proof of \Cref{main_theorem_intro}
 for $\mathsf{CMSO/tw}\!+\!\mathsf{dp}^+$ (see the conclusion section for other possible extensions).

\subsection{A framework for the derivation of AMTs}
\label{reduct_sec}

Given a vocabulary $\sigma$ and a set $\Phi$ of formulas of vocabulary $\sigma$,
we say that two $\sigma$-structures $\mathfrak{G}_1$ and $\mathfrak{G}_2$ are \emph{$\Phi$-equivalent}, and we denote it by
$\mathfrak{G}_1 \equiv_{\Phi} \mathfrak{G}_2$,
if for every $\varphi\in\Phi$,
$\mathfrak{G}_1\models \varphi \iff \mathfrak{G}_2 \models \varphi$.

Let $\sigma$ be a vocabulary, let $R_1,\ldots, R_{h},h\in\mathbb{N},$ be a collection of unary relation symbols not contained in $\sigma$.
Also, let $\Phi$ be a set of formulas of vocabulary $\sigma \cup\{R_1,\ldots,R_h\}$.
Given a $\sigma$-structure $\mathfrak{G}$ and
sets $R_1,\ldots,R_h\subseteq V(\mathfrak{G})$,
we say that a tuple $(S_0,\ldots,S_h)$ where $S_0,\ldots,S_h\subseteq V(\mathfrak{G})$
is \emph{$\Phi$-irrelevant} in $(\mathfrak{G},R_1,\ldots,R_h)$
if for every $i\in[h]$, $R_i\setminus S_i \subseteq V(\mathfrak{G})\setminus S_0$ and
$$(\mathfrak{G}\setminus S_0, R_1\setminus S_1,\ldots, R_h\setminus S_h)
\equiv_{\Phi}
(\mathfrak{G},R_1,\ldots,R_h).$$
Note that, given a $\Phi\subseteq \mathsf{L}[\sigma]$ such that $\mathsf{FO}\subseteq \mathsf{L}$,
$\Phi$-irrelevancy is $\mathsf{L}[\sigma]$-expressible; see~\Cref{obs_irrformula}.

\vspace{-2mm}
\paragraph{Reductions between meta-theorems.}
A \emph{logic-parameter pair}, in short  \emph{an LP-pair}, is a pair $\mathbf{M}=(\mathsf{L},\p)$, where $\mathsf{L}$ is a logic
and $\p$ is a graph parameter, mapping graphs to non-negative integers.

\begin{importantdef}\label{def_red}
Given a positive integer $c$, and two LP-pairs  $\mathbf{M}_1=(\mathsf{L}_1,\p_1)$ and $\mathbf{M}_2=(\mathsf{L}_2,\p_2)$,
where for every graph $G$, $\p_2(G) = \mathcal{O}(|G|)$,
we say that
\emph{$\mathbf{M}_1$ is $\mathsf{n}^{c}$-reducible to $\mathbf{M}_2$}, denoted by $\mathbf{M}_1 \leq_{\mathsf{n}^{c}} \mathbf{M}_2$, if the following conditions are satisfied:
\begin{description}
\item \textbf{(Definability Condition)}
For every $d\in\mathbb{N}$ and every formula $\varphi\in\mathsf{L}_{1}[\{E\}]$ of length $d$,
there is an $h_d\in\mathbb{N}$, a set $\Phi_{d}\subseteq \mathsf{L}_2[\{E\}]$, and a formula $\delta_{\varphi}\in \mathsf{L}_2[\{E,R_1,\ldots,R_{h_d}\}]$ such that for every
two graphs $G,G'$ and every $h_d$-tuple $\bar{R}$ of subsets of $V(G')$,
\[ \text{if $(G,V(G)^{h_d})\equiv_{\Phi_{d}} (G',\bar{R})$, then $G\models \varphi \iff (G',\bar{R})\models \delta_{\varphi}$}.
\]
Moreover, $h_d,\Psi_d,$ and $\delta_\varphi$ are effectively computable from $d$ and $\varphi$.
\item \textbf{(Local-Global-Irrelevancy Condition)}
There is a colored-graph vocabulary $\sigma_{d,l}$ and a finite set
$\Psi_{d,l}\subseteq \mathsf{L}_2 [\sigma_{d,l}\cup\{R_1,\ldots,R_{h_d}\}],$ both effectively computable for given $d,l\in\mathbb{N}$,
and
there is a function $f:\mathbb{N}^2\to\mathbb{N}$ and an algorithm $\mathbb{A}$ that, given a $d\in\mathbb{N}$, a graph $G$, where $l:=\p_1(G)$,
and an $h_d$-tuple $\bar{R}$ of subsets of $V(G)$,
either reports that $\p_{2}(G) \leq f(d,l)$,
or outputs a set $Y\subseteq V(G)$, a $\sigma_{d,l}$-structure $\mathfrak{B}$, where $Y\subseteq V(\mathfrak{B})$,
an $h_d$-tuple $\bar{R}^\mathfrak{B}$ of subsets of $V(\mathfrak{B})$, and a set $Z\subseteq Y$ such that
\begin{description}
\item (a) $|\mathfrak{B}| = \mathcal{O}(|Y|)$
and $\p_{2}(\mathsf{Gaifman}(\mathfrak{B}))\leq f(d, l)$,

\item (b) for every vertex subsets $S_0,\ldots, S_{h_d}\subseteq Z$
it holds that
\begin{eqnarray}
& & \text{if $(S_0,\ldots,S_{h_d})$ is $\Psi_{d,l}$-irrelevant in $(\mathfrak{B},\bar{R}^\mathfrak{B})$,}\notag
\\
& & \text{then $(S_0,\ldots,S_{h_d})$ is
$\Phi_{d}$-irrelevant in $(G,\bar{R})$.}\notag
\end{eqnarray}
\vspace{-7mm}

\item (c) there exist vertex subsets $S_0,\ldots, S_{h_d}\subseteq Z$,
with $S_0\neq\emptyset$, such that $(S_0,\ldots,S_{h_d})$ is $\Psi_{d,l}$-irrelevant in $(\mathfrak{B},\bar{R}^\mathfrak{B})$.
\end{description}
Moreover, algorithm $\mathbb{A}$ runs in time $\mathcal{O}_{d,l}(|G|^{c})$.
\end{description}
\end{importantdef}

\paragraph{Decision and evaluation AMTs.}
Let $\mathsf{L}$ be a logic and $\sigma$ be a vocabulary.
The \textsc{Model Checking} problem for $\mathsf{L}$
asks, given as input a $\sigma$-structure $\mathfrak{G}$ and a $φ\in\mathsf{L}[\sigma]$, whether $\mathfrak{G}\models φ$.
The  \textsc{Evaluation} problem for $\mathsf{L}$ asks, given as input a $\sigma$-structure $\mathfrak{G}$ and a $\varphi(\bar{X})\in\mathsf{L}[\sigma],$
for a tuple $\bar{S}$ such that $\mathfrak{G}\models \varphi(\bar{S})$,
or a report that no such a tuple exists.
Given an integer $c\geq 1$, we say that
an LP-pair $(\mathsf{L},\p)$ is
an $\mathsf{n}^{c}$-\emph{\FPT-decision} (resp.
\emph{evaluation}) Algorithmic Meta-Theorem,
in short, an $\mathsf{n}^{c}$-\emph{\FPT-decision AMT}
(\emph{resp.~an} $\mathsf{n}^{c}$-\emph{\FPT-evaluation AMT}), if there is a computable function $f:\mathbb{N}^2\to\mathbb{N}$ and an algorithm solving the \textsc{Model Checking} (resp. \textsc{Evaluation}) problem for $\mathsf{L}$ in time $f(|φ|,\p(\mathsf{Gaifman}(\mathfrak{G})))\cdot |\mathfrak{G}|^{c}$, where $(φ,\mathfrak{G})$ is the input instance.

\paragraph{Our results.} We prove the following two results.
Given $p\in\mathbb{N}$, we set $\mathbb{N}_{\geq p}:=\mathbb{N}\setminus[0,p-1]$.
\begin{result}\label{thm_reduce}
If $c,c'\in\mathbb{N}_{\ge 1}$ and  $(\mathsf{L}_1,\p_1)$ and   $(\mathsf{L}_2,\p_2)$ are  LP-pairs, where
$(\mathsf{L}_1,\p_1)\leq_{\mathsf{n}^{c'}}  (\mathsf{L}_2,\p_2)$
and $(\mathsf{L}_2,\p_2)$ is an $\mathsf{n}^{c}$-\FPT-evaluation AMT, then $(\mathsf{L}_1,\p_1)$
is an $\mathsf{n}^{\max\{c,c'\}+1}$-\FPT-decision AMT.
\end{result}

\begin{result}\label{thm_CMSO/tw_MSO}
$(\mathsf{CMSO/tw}\!+\!\mathsf{dp},\mathsf{hw})\leq_{\mathsf{n}^1}  (\mathsf{CMSO},\mathsf{tw})$.
\end{result}

Since
$(\mathsf{CMSO},\mathsf{tw})$ is an $\mathsf{n}$-\FPT-evaluation AMT~\cite[Theorem 2.10]{CourcelleM93} (see also~\cite{ArnborgLS91easy} and~\cite{FlumFG02quer}),
\Cref{thm_reduce,thm_CMSO/tw_MSO},
imply that $(\mathsf{CMSO/tw}\!+\!\mathsf{dp},\mathsf{hw})$ is an $\mathsf{n}^{2}$-$\FPT$-decision AMT (\Cref{main_theorem_intro}).

\subsection{Proof of \texorpdfstring{\Cref{thm_reduce}}{Theorem 4}}
In this subsection, we prove~\Cref{thm_reduce}. We start with the observation that, given a logic $\mathsf{L}$ such that $\mathsf{FO}\subseteq \mathsf{L}$ and a $\Phi\subseteq \mathsf{L}[\sigma]$,
$\Phi$-irrelevancy is $\mathsf{L}[\sigma]$-expressible.

\begin{observation}\label{obs_irrformula}
Let $\mathsf{L}$ be a logic such that $\mathsf{FO}\subseteq\mathsf{L}$,
let $h\in\mathbb{N}$, let $\sigma$ be a vocabulary, let $R_1,\ldots, R_h$ be a collection of unary relation symbols,
and let $\Phi\subseteq\mathsf{L}[\sigma \cup\{R_1,\ldots, R_h\}]$.
There is a formula $\gamma_{\Phi\text{-}\mathsf{irr}}(X_0,X_1,\ldots,X_{h})$
in $\mathsf{L}[\sigma\cup \{R_1,\ldots,R_h\}]$ such that
for every $\sigma$-structure $\mathfrak{G}$, every $\bar{R}\in (2^{V(\mathfrak{G})})^h$, and every $S_0,\ldots,S_h\subseteq V(\mathfrak{G})$,
it holds that $(S_0,\ldots,S_h)$ is $\Phi$-irrelevant in $(\mathfrak{G},\bar{R})$ if and only if $(\mathfrak{G},\bar{R})\models\gamma_{\Phi\text{-}\mathsf{irr}}(S_0,\ldots,S_h)$.
Moreover, the length of $\gamma_{\Phi\text{-}\mathsf{irr}}$ depends only on $|\Phi|$.
\end{observation}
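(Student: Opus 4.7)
The plan is to explicitly construct $\gamma_{\Phi\text{-}\mathsf{irr}}$ by unpacking the two clauses in the definition of $\Phi$-irrelevancy and expressing each as a formula in $\mathsf{L}[\sigma\cup\{R_1,\ldots,R_h\}]$. Recall that $(S_0,\ldots,S_h)$ is $\Phi$-irrelevant in $(\mathfrak{G},R_1,\ldots,R_h)$ exactly when (i) for every $i\in[h]$ one has $R_i\setminus S_i\subseteq V(\mathfrak{G})\setminus S_0$, and (ii) for every $\varphi\in\Phi$, $(\mathfrak{G},R_1,\ldots,R_h)\models\varphi$ if and only if $(\mathfrak{G}\setminus S_0,R_1\setminus S_1,\ldots,R_h\setminus S_h)\models\varphi$.

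Clause (i) is immediate in $\mathsf{FO}\subseteq\mathsf{L}$: for each $i\in[h]$ I would set
\[
\alpha_i(X_0,X_i) \coloneqq \forall x\,\big(\big(R_i(x)\wedge \neg X_i(x)\big)\rightarrow \neg X_0(x)\big).
\]

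The heart of the argument is clause (ii). The key observation is that $(\mathfrak{G}\setminus S_0, R_1\setminus S_1,\ldots,R_h\setminus S_h)$ is exactly the substructure of $(\mathfrak{G},R_1,\ldots,R_h)$ on the universe $V(\mathfrak{G})\setminus S_0$, with each unary relation $R_i$ further shrunk by removing $S_i$. Hence, satisfaction of $\varphi\in\Phi$ in this substructure can be simulated inside $(\mathfrak{G},R_1,\ldots,R_h)$ by the standard syntactic \emph{relativization} of $\varphi$ with respect to the free set variables $X_0,X_1,\ldots,X_h$. Concretely, I would define $\varphi^{\text{rel}}(X_0,\ldots,X_h)$ by recursion on $\varphi$: atomic formulas over $E$ and equalities stay unchanged; each atom $R_i(x)$ is replaced by $R_i(x)\wedge\neg X_i(x)$; Boolean connectives commute with the recursion; a first-order quantifier $\exists x\,\psi$ becomes $\exists x\,(\neg X_0(x)\wedge \psi^{\text{rel}})$; a set quantifier $\exists Y\,\psi$ becomes $\exists Y\,(\forall y\,(Y(y)\rightarrow \neg X_0(y))\wedge \psi^{\text{rel}})$; and counting predicates ${\sf card}_p(Y)$ pass through unchanged, since once $Y$ has been confined to $V(\mathfrak{G})\setminus S_0$ by the enclosing relativized set quantifier, its cardinality is already evaluated in the substructure. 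A routine induction on $\varphi$ then yields, for every $\sigma$-structure $\mathfrak{G}$, every $\bar R\in(2^{V(\mathfrak{G})})^h$, and every $S_0,\ldots,S_h\subseteq V(\mathfrak{G})$,
\[
(\mathfrak{G}\setminus S_0,R_1\setminus S_1,\ldots,R_h\setminus S_h)\models\varphi \iff (\mathfrak{G},\bar R)\models \varphi^{\text{rel}}(S_0,\ldots,S_h).
\]

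I would then set
\[
\gamma_{\Phi\text{-}\mathsf{irr}}(X_0,\ldots,X_h) \coloneqq \bigwedge_{i\in[h]}\alpha_i(X_0,X_i)\;\wedge\;\bigwedge_{\varphi\in\Phi}\big(\varphi\leftrightarrow \varphi^{\text{rel}}(X_0,\ldots,X_h)\big),
\]
obtaining a formula in $\mathsf{L}[\sigma\cup\{R_1,\ldots,R_h\}]$ with exactly the declared equivalence. The length bound is automatic: relativization inflates each $\varphi$ by at most a constant factor per syntactic node, and the $\alpha_i$'s add only $\mathcal{O}(h)$ extra symbols, so the total length depends only on $\Phi$ (and $h$, which is determined by $\Phi$ through its vocabulary). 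The only mild subtlety I anticipate is checking, case by case, that relativization preserves the intended semantics for every quantifier or predicate $\mathsf{L}$ provides beyond pure $\mathsf{FO}$ (second-order quantifiers, counting predicates, and whatever further connectives on which $\Phi$ might draw); this is the standard Feferman--Vaught-style bookkeeping and is more a routine verification than a genuine obstacle.
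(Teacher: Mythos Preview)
The paper states this result as an observation without proof, so there is no argument to compare against; your construction via syntactic relativization is the standard and correct way to establish it. One small remark: the paper's vocabularies may contain constant symbols, so for the relativized formula $\varphi^{\text{rel}}$ to correctly simulate satisfaction in $\mathfrak{G}\setminus S_0$ you implicitly need the interpretations of all constants to lie outside $S_0$ (otherwise $\mathfrak{G}\setminus S_0$ is not even a $\sigma$-structure); this is harmless in all uses of the observation in the paper, but you may want to either note it or conjoin the guard $\bigwedge_j \neg X_0(c_j)$ to $\gamma_{\Phi\text{-}\mathsf{irr}}$.
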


The proof of~\Cref{thm_reduce} is split into two parts.
First, assuming that we are given an annotated graph $(G,\bar{R})$,
where $\p_2(G)$ is ``large'',
we show how to use the \textsl{Local-Global-Irrelevancy Condition} in order to obtain an annotated graph $(G',\bar{R}')$ such that
$\p_2(G')$ is ``small'' and $(G,\bar{R})\equiv_{\Phi_{d}} (G',\bar{R}')$.

\begin{lemma}\label{lemma_irr}
Let
$c,c'\in\mathbb{N}_{\ge 1}$ and let $(\mathsf{L}_1,\p_1)$ and   $(\mathsf{L}_2,\p_2)$ be two LP-pairs, where
$(\mathsf{L}_1,\p_1)\leq_{\mathsf{n}^{c'}}  (\mathsf{L}_2,\p_2)$
and $(\mathsf{L}_2,\p_2)$ is an $\mathsf{n}^{c}$-{\FPT-evaluation AMT}.
There is an algorithm that given a $d\in\mathbb{N}$, a graph $G$,
where $l:=\p_1(G)$,
and an $h_d$-tuple  $\bar{R}$ of subsets of $V(G)$,
outputs
a subgraph $G'$ of $G$ and an $h_d$-tuple $\bar{R}'$ of subsets of $V(G')$,
such that $\p_2(G')\leq
f(d,l)$ and
$(G,\bar{R})\equiv_{\Psi_{d}} (G',\bar{R}')$.
Moreover, this algorithm runs in time
$\mathcal{O}_{d,\p_{1}(G)}(|G|^{\max\{c,c'\}+1})$.
\end{lemma}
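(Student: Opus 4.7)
The plan is to design an iterative irrelevant-vertex removal loop that maintains the invariant $(G_t, \bar{R}_t) \equiv_{\Phi_d} (G, \bar{R})$, where $(G_0, \bar{R}_0) = (G, \bar{R})$, and halts as soon as algorithm $\mathbb{A}$ of the Local-Global-Irrelevancy Condition certifies that $\p_2(G_t) \leq f(d, l)$. (We read the ``$\Psi_d$'' of the statement as the family $\Phi_d$ supplied by the Definability Condition, which is exactly what the invariant will preserve.)

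At each iteration, we invoke $\mathbb{A}$ on $(G_t, \bar{R}_t)$ in time $\mathcal{O}_{d, l}(|G_t|^{c'})$. If $\mathbb{A}$ reports $\p_2(G_t) \leq f(d, l)$, we output $(G', \bar{R}') \coloneqq (G_t, \bar{R}_t)$. Otherwise, $\mathbb{A}$ returns a quadruple $(Y, \mathfrak{B}, \bar{R}^{\mathfrak{B}}, Z)$ with $|\mathfrak{B}| = \mathcal{O}(|Y|)$ and $\p_2(\mathsf{Gaifman}(\mathfrak{B})) \leq f(d, l)$. Item (c) guarantees the existence of a tuple $(S_0, \ldots, S_{h_d}) \subseteq Z^{h_d + 1}$ with $S_0 \neq \emptyset$ that is $\Psi_{d, l}$-irrelevant in $(\mathfrak{B}, \bar{R}^{\mathfrak{B}})$; by item (b), any such tuple is automatically $\Phi_d$-irrelevant in $(G_t, \bar{R}_t)$.

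To extract such a tuple algorithmically, we appeal to \Cref{obs_irrformula} to obtain a formula $\gamma_{\Psi_{d, l}\text{-}\mathsf{irr}}(X_0, \ldots, X_{h_d}) \in \mathsf{L}_2[\sigma_{d, l} \cup \{R_1, \ldots, R_{h_d}\}]$ whose length depends only on $|\Psi_{d, l}|$, and we enrich $\mathfrak{B}$ with a fresh unary predicate interpreted as $Z$ (which leaves the Gaifman graph, and hence $\p_2$, unchanged). We then form the $\mathsf{L}_2$-formula
\[
\delta(X_0, \ldots, X_{h_d}) \coloneqq (X_0 \neq \emptyset) \wedge \bigwedge_{i = 0}^{h_d} (X_i \subseteq Z) \wedge \gamma_{\Psi_{d, l}\text{-}\mathsf{irr}}(X_0, \ldots, X_{h_d})
\]
and invoke the $\mathsf{n}^c$-\FPT-evaluation AMT for $(\mathsf{L}_2, \p_2)$ on the enriched structure and $\delta$; since its $\p_2$-parameter is bounded by $f(d, l)$ and $|\delta|$ depends only on $d$ and $l$, this call produces, in time $\mathcal{O}_{d, l}(|\mathfrak{B}|^c) = \mathcal{O}_{d, l}(|G_t|^c)$, witnessing sets $(S_0, \ldots, S_{h_d})$, whose existence is guaranteed by item (c). We update $(G_{t+1}, \bar{R}_{t+1}) \coloneqq (G_t \setminus S_0,\ (R_{t, 1} \setminus S_1, \ldots, R_{t, h_d} \setminus S_{h_d}))$ and continue.

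Since $S_0 \neq \emptyset$ at every iteration, $|V(G_t)|$ strictly decreases, so the loop terminates after at most $|V(G)|$ steps, each costing $\mathcal{O}_{d, l}(|G|^{\max\{c, c'\}})$; this yields the claimed $\mathcal{O}_{d, \p_1(G)}(|G|^{\max\{c, c'\} + 1})$ running time. Transitivity of $\equiv_{\Phi_d}$ preserves the invariant, delivering the required equivalence at termination. The main subtlety I anticipate is the bookkeeping needed to cast the witness search as a single $\mathsf{L}_2$-evaluation call on a structure whose $\p_2$-parameter stays bounded by $f(d, l)$ throughout the loop; this rests on \Cref{obs_irrformula} and on item (a) of the Local-Global-Irrelevancy Condition, and is what keeps the evaluation subroutine's cost independent of $|G|$ apart from the factor $|G_t|^c$ inherited from the meta-algorithm.
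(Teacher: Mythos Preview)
Your proof is correct and follows essentially the same approach as the paper's: an iterative loop that calls $\mathbb{A}$, then invokes the $\mathsf{L}_2$-evaluation AMT on $(\mathfrak{B},\bar{R}^{\mathfrak{B}})$ with the irrelevance formula from \Cref{obs_irrformula} to extract the sets $S_i$, updates, and repeats until $\p_2$ is small. You are slightly more explicit than the paper in spelling out how to enforce the side-constraints $S_i\subseteq Z$ and $S_0\neq\emptyset$ via a fresh unary predicate for $Z$ conjoined into the evaluated formula, and in arguing termination via the strict decrease of $|V(G_t)|$; both refinements are correct and in the spirit of the paper's argument.
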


\begin{proof}
Let  $\mathbb{A}$ be the algorithm that certifies that $(\mathsf{L}_1,\p_1)\leq_{\mathsf{n}^{c'}}  (\mathsf{L}_2,\p_2)$.
Also, let $\mathbb{B}$ be an algorithm certifying that
$(\mathsf{L}_2,\p_2)$ is an $\mathsf{n}^{c}$-{\FPT-evaluation AMT}.
We start by setting
$(G',\bar{R}'):=
(G,\bar{R})$.
We also set $(R_1',\ldots,R_{h_d}'):=\bar{R}'$.
The algorithm proceeds in two steps:
\begin{description}
\item[Step 1.] Apply algorithm $\mathbb{A}$ with input $d$ and $(G',\bar{R}')$.
If $\mathbb{A}$ reports that $\p_2(G') \leq f(d,l)$,
then stop.
If it outputs a set $Y\subseteq V(G)$, a $\sigma_{d,l}$-structure $\mathfrak{B}$, where $Y\subseteq V(\mathfrak{B})$,
an $h_d$-tuple $\bar{R}^\mathfrak{B}$ of subsets of $V(\mathfrak{B})$, and a set $Z\subseteq Y$ as in~\Cref{def_red},
then proceed to Step 2.

\item[Step 2.]
Let $\gamma_{\Psi\text{-}\mathsf{irr}}$ be the formula of~\Cref{obs_irrformula} for $\Psi =\Psi_{d,l}$.
Apply algorithm $\mathbb{B}$ with input $(\mathfrak{B},\bar{R}^\mathfrak{B})$ to find some $S_0,\ldots,S_{h_d}\subseteq Z$
where $S_0\neq \emptyset$ and $(S_0,\ldots,S_{h_d})$ is
$\Psi_{d,l}$-irrelevant in $(\mathfrak{B},\bar{R}^\mathfrak{B})$.
Set $(G',\bar{R}'):=(G'\setminus S_0,R_1'\setminus S_1,\ldots, R_{h_d}'\setminus S_{h_d})$ and repeat Step 1.
\end{description}
The correctness of the algorithm follows from the definition of $(S_0,\ldots,S_{h_d})$ being $\Psi_{d,l}$-irrelevant in $(G',\bar{R}')$.
The algorithm terminates because $\p_2(G) = \mathcal{O}(|G|)$.
For the running time, note that Step 1 is performed in time $\mathcal{O}_{d,\p_{1}(G)}(|G|^{c})$.
Also, $|\gamma_{\Psi\text{-}\mathsf{irr}}|$ depends only on $|\Psi_{d,l}|$,
which, in turn, depends only on $d$
and $l$ (which is equal to $\p_1(G)$).
Moreover, $\p_{2}(G')\leq
f(d,l)$ and $|\mathfrak{B}| = \mathcal{O}(|G'|)$.
Thus, Step 2 is performed in time $\mathcal{O}_{d,\p_1(G)}(|G|^{c'})$ and the overall running time is
$\mathcal{O}_{d,\p_1(G)}(|G|^{\max\{c,c'\}+1})$.
\end{proof}

\begin{proof}[Proof of~\Cref{thm_reduce}]
Let
$c,c'\in\mathbb{N}_{\ge 1}$ and let $(\mathsf{L}_1,\p_1)$ and   $(\mathsf{L}_2,\p_2)$ be two LP-pairs, where
$(\mathsf{L}_1,\p_1)\leq_{\mathsf{n}^{c'}}  (\mathsf{L}_2,\p_2)$.
Also,
let $\mathbb{B}$ be an algorithm certifying that $(\mathsf{L}_2,\p_2)$ is an $\mathsf{n}^{c}$-{\FPT-evaluation AMT}.
We present an algorithm that certifies that
$(\mathsf{L}_1,\p_1)$ is an
$\mathsf{n}^{\max\{c,c'\}+1}$-\FPT-decision AMT.
First, given a sentence $\varphi\in\mathsf{L}_1[\{E\}]$,
we compute the integer $h_{|\varphi|}$,
the set $\Phi_{|\varphi|}\subseteq \mathsf{L}_2[\{E\}]$ and the formula $\delta_\varphi\in\mathsf{L}_2[\{E,R_1,\ldots,R_{h_{|\varphi|}}\}]$.
Note that $|\delta_{\varphi}|$ depends only on $|\varphi|$.
To decide whether a given graph $G$ satisfies a given sentence $\varphi\in\mathsf{L}_1[\{E\}]$, we perform the following two steps.
\begin{description}

\item[Step 1.] Apply the algorithm of~\Cref{lemma_irr} with input $|\varphi|,G,(V(G))^{h_{|\varphi|}}$.
This algorithm outputs a subgraph $G'$ of $G$ and an $h_d$-tuple $\bar{R}'$ of subsets of $V(G')$,
such that $\p_2(G')\leq
f(|\varphi|,\p_1(G))$ and
$(G,V(G)^{h_{|\varphi|}})\equiv_{\Phi_{|\varphi|}} (G',\bar{R}')$.

\item[Step 2.]
Apply algorithm $\mathbb{B}$ with input $(G',\bar{R}')$, which decides whether $(G',\bar{R}')\models \delta_\varphi$ and therefore whether $G\models\varphi$.
\end{description}
Note that $G\models\varphi\iff(G',\bar{R}')\models \delta_\varphi$, because $(G,V(G)^{h_{|\varphi|}})\equiv_{\Phi_{|\varphi|}} (G',\bar{R}')$; see the \textsl{Definability Condition} of~\Cref{def_red}.
Step 1 is performed in time $\mathcal{O}_{|\varphi|,\p_{1}(G)}(|G|^{\max\{c,c'\}+1})$, while Step 2 in time
$\mathcal{O}_{|\varphi|,\p_{1}(G)}(|G|^{c'})$, and thus the overall running time is
$\mathcal{O}_{|\varphi|,\p_{1}(G)}(|G|^{\max\{c,c'\}+1})$.
\end{proof}

\section{Annotated types for \texorpdfstring{$\mathsf{CMSO/tw}\!+\!\mathsf{dp}$}{CMSO/tw+dp}}
\label{sec_gametrees}

The following sections are dedicated to the proof of~\Cref{thm_CMSO/tw_MSO}.
In this section, we show the \textsl{Definability Condition}
(see~\Cref{def_red})
for $(\mathsf{CMSO/tw}\!+\!\mathsf{dp},\mathsf{hw})\leq_{\mathsf{n}^1}  (\mathsf{CMSO},\mathsf{tw})$.
We start by showing that every formula of $\mathsf{CMSO/tw}\!+\!\mathsf{dp}$ can be expressed in $\mathsf{CMSO}$.

\begin{lemma}\label{lem_CMSO/twdp_MSO}
Let $\sigma$ be a vocabulary.
For every $\varphi\in\mathsf{CMSO/tw}\!+\!\mathsf{dp}[\sigma]$ there is a $\psi\in\mathsf{CMSO}[\sigma]$, where $|\psi|$ is
depends only on $|\varphi|$,
such that for every $\sigma$-structure $\mathfrak{G}$, it holds that
$\mathfrak{G}\models \varphi \iff \mathfrak{G}\models \psi.$
\end{lemma}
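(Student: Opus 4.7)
My plan is to define a translation $T$ from $\mathsf{CMSO/tw}\!+\!\mathsf{dp}[\sigma]$-formulas to $\mathsf{CMSO}[\sigma]$-formulas by structural induction. The only two features of $\mathsf{CMSO/tw}\!+\!\mathsf{dp}$ that are not already in the syntax of $\mathsf{CMSO}$ are the disjoint-paths atoms $\mathsf{dp}_k$ and the restricted set quantifier $\exists_k X$. On every native atomic formula ($x=y$, $(x_1,\ldots,x_r)\in R$, $x\in X$, and $\mathsf{card}_p(X)$), on every boolean connective, and on every first-order quantifier $\exists x$, the translation $T$ would act as the identity on the top-level symbol and be pushed recursively into sub-formulas, so these cases contribute no technical work.

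For $\mathsf{dp}_k(x_1,y_1,\ldots,x_k,y_k)$ I would define
\[
T(\mathsf{dp}_k)\ \coloneqq\ \exists Z_1\cdots \exists Z_k\ \Big(\bigwedge_{1\le i<j\le k}(Z_i\cap Z_j=\emptyset)\ \wedge\ \bigwedge_{i\in[k]}\big(x_i\in Z_i\wedge y_i\in Z_i\wedge \mathsf{conn}(Z_i,x_i,y_i)\big)\Big),
\]
where $\mathsf{conn}(Z,u,v)$ is the standard $\mathsf{MSO}$ formula asserting that $u$ and $v$ lie in the same connected component of the subgraph induced by $Z$ in $\mathsf{Gaifman}(\mathfrak{G})$; this is routinely definable, e.g.\ by requiring $u,v\in Z$ together with the statement that every subset of $Z$ containing $u$ and closed under taking $E$-neighbours inside $Z$ must also contain $v$. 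The size of this replacement is a fixed function of $k$, hence bounded by the contribution of the symbol $\mathsf{dp}_k$ to $|\varphi|$.

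For the quantifier $\exists_k X\,\varphi'$ the key step is to convert the treewidth side-condition $\tw(G,X)\le k$ into a pure $\mathsf{MSO}$-condition. I would invoke the excluded-minor characterization of bounded annotated treewidth that is recalled in the paragraph preceding the lemma: for every fixed $k\in\mathbb{N}$ there is a finite family $\mathcal{F}_k$ of obstructions such that $\tw(G,X)\le k$ if and only if no element of $\mathcal{F}_k$ is an $X$-rooted minor of $G$. Both the $X$-rooted minor relation (by existentially quantifying a vertex-partition of $G$ witnessing the branch sets, together with the connectivity and $X$-hitting conditions) and membership in a fixed finite family of graphs are $\mathsf{MSO}$-definable, producing an $\mathsf{MSO}$-formula $\tau_k(X)$ equivalent to $\tw(G,X)\le k$. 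I would then set
\[
T(\exists_k X\,\varphi')\ \coloneqq\ \exists X\ \big(\tau_k(X)\wedge T(\varphi')\big),
\]
which by the semantics fixed in Subsection~\ref{def_twMSO} is equivalent to $\exists_k X\,\varphi'$; the dual case $\forall_k X$ is handled through the abbreviation $\neg\exists_k X\neg$.

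The length bound is then immediate: each occurrence of $\mathsf{dp}_k$ or $\exists_k X$ in $\varphi$ already contributes the numerical encoding of $k$ to $|\varphi|$, and its local replacement has size a computable function of $k$ alone, so $|T(\varphi)|$ is bounded by a computable function of $|\varphi|$. The only step I expect to require genuine combinatorial input, rather than syntactic bookkeeping, is the $\mathsf{MSO}$-expressibility of $\tau_k(X)$; I plan to use this as a black box, relying on the obstruction characterization of annotated treewidth stated in the paper.
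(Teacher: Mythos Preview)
Your proposal is correct and follows essentially the same approach as the paper: replace $\mathsf{dp}_k$ by an $\mathsf{MSO}$ formula and replace $\exists_k X$ by $\exists X(\tau_k(X)\wedge\cdots)$ where $\tau_k$ expresses $\tw(G,X)\le k$ via the finite obstruction characterization of annotated treewidth. The only minor variation is that the paper takes a slight detour, showing that $\tw(G,X)\le k$ is even $\mathsf{FO}+\mathsf{dp}$-expressible (by encoding $X$-rooted minors through $X$-rooted topological minors and disjoint-paths predicates), whereas you express $X$-rooted minor containment directly in $\mathsf{MSO}$ via branch sets; for the purposes of this lemma both routes are equally valid.
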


To prove~\Cref{lem_CMSO/twdp_MSO}, it suffices to show that the property of bounded annotated treewidth and the disjoint-paths predicate are $\mathsf{CMSO}$-expressible.
For the disjoint-paths predicate, it is easy to observe that it is $\mathsf{MSO}_1$-expressible.

\begin{observation}\label{lemma_dpMSO}
Let $k\in\mathbb{N}$. There is a $\xi_k(x_1,y_1,\ldots,x_k,y_k)\in\mathsf{MSO}_1[\{E\}]$ such that
for every graph $G$ and every $x_1,y_1,\ldots,x_k,y_k\in V(G)$,
$G\models\mathsf{dp}(x_1,y_1,\ldots,x_k,y_k)\iff G\models\xi_k(x_1,y_1,\ldots,x_k,y_k)$.
\end{observation}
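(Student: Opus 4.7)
The plan is to express the existence of $k$ pairwise vertex-disjoint paths by existentially quantifying, in $\mathsf{MSO}_1$, over the vertex sets $X_1,\ldots,X_k$ of the sought paths. First, I would introduce an auxiliary $\mathsf{MSO}_1[\{E\}]$ predicate $\mathsf{conn}(X)$ asserting that the subgraph induced by a vertex set $X$ is connected. This is done via the standard encoding: every non-empty proper subset of $X$ admits an edge to its complement in $X$; this uses one additional set quantifier ranging only over vertex subsets, so $\mathsf{conn}(X)$ is indeed $\mathsf{MSO}_1$-definable.

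Next, I would set $\xi_k(x_1,y_1,\ldots,x_k,y_k)$ to be the formula $\exists X_1 \cdots \exists X_k\, \big( \bigwedge_{i \in [k]} (x_i \in X_i \wedge y_i \in X_i \wedge \mathsf{conn}(X_i)) \wedge \bigwedge_{i \neq j} \forall v\, (v \notin X_i \vee v \notin X_j) \big)$. Since $k$ is a fixed integer, the two big conjunctions are finite and $\xi_k$ is a well-formed $\mathsf{MSO}_1[\{E\}]$ formula whose length depends only on $k$.

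For correctness I would argue the two implications separately. If $G \models \mathsf{dp}(x_1,y_1,\ldots,x_k,y_k)$ with witness paths $P_1,\ldots,P_k$, then taking $X_i := V(P_i)$ yields a satisfying assignment for $\xi_k$, since the vertex set of a path induces a connected subgraph containing its endpoints, and the $V(P_i)$ are pairwise disjoint by the definition of $\mathsf{dp}$. Conversely, if $X_1,\ldots,X_k$ witness $\xi_k$, then each induced subgraph $G[X_i]$ is connected and contains both $x_i$ and $y_i$, so it contains a path $P_i$ from $x_i$ to $y_i$; the pairwise disjointness of the $X_i$ forces these paths to be pairwise vertex-disjoint, certifying $\mathsf{dp}(x_1,y_1,\ldots,x_k,y_k)$.

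There is essentially no technical obstacle here: the only nontrivial ingredient is the well-known $\mathsf{MSO}_1$-expressibility of vertex connectivity, and the rest is a direct translation between the semantics of $\mathsf{dp}$ and the witnesses of $\xi_k$. This is precisely the reason the statement is phrased as an \emph{Observation} rather than a lemma.
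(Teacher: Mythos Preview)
Your proposal is correct and follows the standard route that the paper implicitly has in mind; the paper itself does not give a proof of this Observation, merely stating that ``it is easy to observe that it is $\mathsf{MSO}_1$-expressible''. Your construction via existentially quantified vertex sets together with an $\mathsf{MSO}_1$ connectivity predicate is exactly the expected argument.
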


Also, bounded annotated treewidth is definable in the weaker logic $\text{\textsf{FO}+\textsf{dp}}$; see~\cite{SchirrmacherSV23first,GolovachST22model}.
\Cref{lem_CMSO/twdp_MSO} follows from~\Cref{lemma_dpMSO} and~\Cref{formulaanntw}.

\begin{lemma}\label{formulaanntw}
For every $k\in\mathbb{N}$, there is a formula $\zeta_k\in\text{\emph{\textsf{FO}+\textsf{dp}}}$ such that
for every graph $G$ and every $X\subseteq V(G)$, it holds that
$\tw(G,X)\leq k$ if and only if $G\models \zeta_k(X)$.
\end{lemma}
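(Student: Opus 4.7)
My plan is to combine the excluded-minor characterization of bounded treewidth with an $\text{\textsf{FO}+\textsf{dp}}$-encoding of rooted minor containment. The class $\mathcal{T}_k = \{G : \tw(G) \leq k\}$ is minor-closed, so by the Robertson--Seymour Graph Minors theorem it has a finite (and effectively computable) minor-obstruction set $\mathcal{F}_k$. The first step is to prove the equivalence
\[
\tw(G,X) \leq k \iff \text{no $H\in\mathcal{F}_k$ is an $X$-rooted minor of $G$.}
\]
The forward direction is immediate since every $X$-rooted minor of $G$ has treewidth at most $\tw(G,X)\leq k$. For the backward direction, if some $X$-rooted minor $H'$ of $G$ witnesses $\tw(G,X)>k$, then $H'$ contains some $H\in\mathcal{F}_k$ as a standard minor, and composing the branch sets of this inner minor with those of $H'$ lifts $H$ to an $X$-rooted minor of $G$. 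Hence it suffices to produce, for each fixed $H$, an $\text{\textsf{FO}+\textsf{dp}}$-formula $\eta_H(X)$ expressing that $H$ is an $X$-rooted minor of $G$; the desired formula is then $\zeta_k(X) := \bigwedge_{H\in\mathcal{F}_k} \neg \eta_H(X)$.

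For the encoding, I would use a star-replacement reduction from minors to topological minors: define $H^{*}$ by replacing each vertex $v\in V(H)$ by a star $K_{1,\deg_H(v)}$ with a distinguished center $c_v$ whose leaves are indexed by the edges of $H$ incident to $v$, and then, for every edge $\{u,v\}$ of $H$, connecting the two corresponding leaves by a new edge. Then $H$ is an $X$-rooted minor of $G$ if and only if $H^{*}$ is a topological minor of $G$ whose center images $\{\pi(c_v)\}_{v\in V(H)}$ lie in $X$. The formula $\eta_H(X)$ then existentially quantifies one branch vertex in $V(G)$ for every vertex of $H^{*}$, enforces distinctness of these branch vertices and $\pi(c_v)\in X$ for each original vertex $v$, and, for every edge $\{w,w'\}$ of $H^{*}$, existentially quantifies two auxiliary vertices $u_{w,w'}$ and $u_{w',w}$ that are asserted to be adjacent in $G$ to $\pi(w)$ and $\pi(w')$ respectively. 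A single $\mathsf{dp}$-predicate then demands vertex-disjoint paths between $u_{w,w'}$ and $u_{w',w}$ for every edge of $H^{*}$; prepending the branch-vertex endpoints to these paths furnishes the required subdivision paths of $H^{*}$ in $G$.

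The main obstacle is that $\mathsf{dp}$ requires \emph{fully} vertex-disjoint paths, whereas subdivision paths of a topological minor are only \emph{internally} vertex-disjoint (they can, and typically do, share branch vertices). I would resolve this with a ``trivial-path'' trick: extend the single $\mathsf{dp}$-call by additional demands of the form $(\pi(w_i),\pi(w_i))$, one per branch vertex of $H^{*}$. Since the only path from $\pi(w_i)$ to itself that can appear in the $\mathsf{dp}$-witness is the one-vertex path $\{\pi(w_i)\}$, the vertex-disjointness requirement forces every non-trivial path in the tuple to avoid all branch vertices; and whenever two edges of $H^{*}$ share an endpoint, the same vertex-disjointness automatically forces their auxiliary neighbors to be distinct, faithfully recovering the internal disjointness of the subdivision paths. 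The length of $\eta_H$ depends only on $|H|$, hence that of $\zeta_k$ depends only on $k$, as required.
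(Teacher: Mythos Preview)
Your overall architecture is the same as the paper's: pass from $\tw(G,X)\le k$ to the finite obstruction set $\mathcal{F}_k$, reduce $X$-rooted minor containment to (a rooted form of) topological-minor containment, and encode the latter in $\text{\textsf{FO}+\textsf{dp}}$. The problem is the middle step.

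Your graph $H^{*}$ is obtained by subdividing every edge of $H$ twice (each edge $\{u,v\}$ becomes the path $c_u\!-\!\ell_{u,e}\!-\!\ell_{v,e}\!-\!c_v$). Hence $H^{*}$ is homeomorphic to $H$, and ``$H^{*}$ is a topological minor of $G$'' is exactly ``$H$ is a topological minor of $G$''. But minor and topological-minor containment coincide only when the pattern has maximum degree at most~$3$; for $k\ge 3$ the obstruction set for $\tw\le k$ contains $K_{k+2}$, whose degree is $k+1\ge 4$. Concretely, with $H=K_5$ and $G$ the Petersen graph (and $X=V(G)$), $H$ is an $X$-rooted minor of $G$ but $H^{*}$ is not a topological minor of $G$, so your claimed equivalence fails. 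The underlying reason is that a connected branch set $S_v$ need not admit $\deg_H(v)$ internally vertex-disjoint paths from a single root $x_v\in S_v\cap X$ to the portals towards neighbouring branch sets. The standard (and the paper's) fix is to replace the single $H^{*}$ by a finite \emph{family} $\mathcal{F}_H$: for each vertex $v$ one ranges over all tree shapes of maximum degree~$3$ with $\deg_H(v)$ leaves and one distinguished node (the intended $X$-vertex), and $H$ is an $X$-rooted minor of $G$ iff some member of $\mathcal{F}_H$ occurs as a topological minor with the distinguished nodes in $X$. You then take a disjunction over this finite family in $\eta_H$.

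The $\mathsf{dp}$-encoding with the ``trivial path'' trick is a reasonable idea and mostly works once the above is repaired, but note the edge case where a subdivision path has length one (a direct edge between two branch vertices): there is then no room for the auxiliary neighbours $u_{w,w'},u_{w',w}$ to be distinct from the branch vertices, and your single $\mathsf{dp}$-call cannot be satisfied. This is easily patched by a finite disjunction over which edges of the pattern are realised by single edges of $G$.
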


\begin{proof}
First, observe that containment of a given graph $H$ as an $X$-rooted minor can be expressed in $\text{\textsf{FO}+\textsf{dp}}$.
To see this, note that
for every graph $H$, there is a finite family $\mathcal{F}_H$
of graphs (their size depends on $|H|$)
such that
$H$ is a $X$-rooted minor in $G$ if and only if
$G$ contains some $F\in\mathcal{F}$ as an \emph{$X$-rooted topological minor}, i.e., there is a set $V$ of $|F|$ vertices in $X$, a mapping $\mu:V\to V(F)$, and a collection $\mathcal{P}$ of disjoint paths in $G$ connecting the vertices in $V$ such that $u,v\in V$ are connected by a path in $\mathcal{P}$ if and only if $\mu(u)$ and $\mu(v)$ are adjacent in $F$.
Therefore, $X$-rooted topological minor (and, thus, $X$-rooted minor) containment is definable in $\text{\textsf{FO}+\textsf{dp}}$.
Then, note that $\tw(G,X)\leq k$ is characterized by forbidding a finite (depending on $k$) set of graphs of at most $f(k)$ vertices, for some function $f$, as $X$-minors; see~\cite[Theorem 5.9]{Lagergren87uppe}.
\end{proof}

\paragraph{Containment of $\tilde{\Theta}^{\sf dp}$ in $\mathsf{CMSO/tw}\!+\!\mathsf{dp}$.}
At this point, let us mention that it can be shown that every formula of $\tilde{\Theta}^{\sf dp}$ is definable in $\mathsf{CMSO/tw}\!+\!\mathsf{dp}$.
The essential part of such proof is to show that
existence of a path between two vertices is expressed in $\mathsf{CMSO/tw}\!+\!\mathsf{dp}$
and that
there is a formula $\psi_k\in \mathsf{CMSO/tw}\!+\!\mathsf{dp}$
such that for every $(G,X)$, $\mathsf{tw}(\mathsf{torso}(G,X))\leq k$ if and only if $G\models \psi_k (X)$;
see~\cite{FominGSST23,FominGSST21comp_arxiv} for a formal definition of $\tilde{\Theta}^{\sf dp}$ and $\mathsf{tw}(\mathsf{torso}(G,X))$.
The former is (trivially) expressible with the use of the disjoint-paths predicate, while definability of $\mathsf{tw}(\mathsf{torso}(G,X))\leq k$ in $\mathsf{CMSO/tw}\!+\!\mathsf{dp}$ follows from the fact that $\mathsf{tw}(G,X)\leq\mathsf{tw}(\mathsf{torso}(G,X))$; see~\cite[Observation 1.16]{ThilikosW23excl}.

\subsection{Atomic types for \texorpdfstring{$\mathsf{CMSO/tw}\!+\!\mathsf{dp}$}{CMSO/tw+dp}}
\label{subsec:atp}
Recall that, given a colored-graph vocabulary $\sigma$,
the atomic formulas of $\mathsf{CMSO/tw}\!+\!\mathsf{dp}[\sigma]$
are of the form
$x=y$, $\{x,y\}\in E$, $x\in X$,  ${\sf card}_p(X),$ and $\mathsf{dp}_k(x_1,y_1,\ldots,x_k,y_k)$ for $k\in\mathbb{N}$,
where $x,y,x_1,y_1,\ldots,x_k,y_k$ denote first-order terms and $X$ denotes either a set variable or a unary predicate symbol in $\sigma$.
Given $m,r\in\mathbb{N}$ and $p\in\mathbb{N}_{\geq 2}$,
we use $\mathsf{Atp}^{m,r}[\sigma]$ to denote the set of all formulas $\varphi(X_1,\ldots,X_m,x_1,\ldots,x_r)$ (up to logical equivalence)
that are boolean combinations of atomic formulas of $\mathsf{CMSO/tw}\!+\!\mathsf{dp}[\sigma]$ on the variables $X_1,\ldots,X_m,x_1,\ldots,x_r$.

\paragraph{The set $\mathcal{L}_{p,t}^{m,r}[\sigma]$.}
We use $\mathcal{L}_{p,t}^{m,r}[\sigma]$ to denote the set of all different (up to logical equivalence) formulas of the form
$\psi(\bar{X},\bar{x})\in \mathsf{Atp}^{m,r}[\sigma]$
such that for every $\sigma$-structure $\mathfrak{G}$, every $\bar{V}=(V_1, \ldots,V_m)\in (2^{V(\mathfrak{G})})^m$ and every $\bar{v}\in V(\mathfrak{G})^r$, if $\mathfrak{G}\models \psi(\bar{V},\bar{v})$ then
for each $i\in[m]$, if $\mathsf{card}_j(X_i)$ appears in $\varphi$, then $j\in[2,p]$,
and for each $i\in[r]$, there is some $t_i\in[0,t]$ such that $\tw(\mathfrak{G},V_i)\leq t_{i}$
(for this, alternatively, one may conjoin the formulas in $\mathsf{Atp}^{m,r}$ with the quantifier-free part of $\zeta_{t_i}(X_i)$ from~\Cref{formulaanntw}, for every $i\in[m]$).
Note that $\mathcal{L}_{p,t}^{m,r}[\sigma]\subseteq \mathsf{CMSO}[\sigma]$ and $|\mathcal{L}_{p,t}^{m,r}[\sigma]|$
depends (only) on $|\sigma|$, $m,$ $r,$ $p$, and $t$.

Given a $\varphi\in\mathsf{CMSO/tw}\!+\!\mathsf{dp}[\sigma]$,
we use $p(\varphi)$ (resp. $t(\varphi)$) to denote
the maximum $p$ (resp. $t$) such that a predicate of the form $\mathsf{card}_p(X)$ (resp. $\exists_t X$) appears in $\varphi$.
Using standard arguments for translating a given formula to one of prenex normal from, we can effectively translate every  $\mathsf{CMSO/tw}\!+\!\mathsf{dp}$-formula to a $\mathsf{CMSO}$-formula of a form like in the following statement.

\begin{lemma}\label{lem_trans_CMSO/tw_MSO}
Let $\sigma$ be a vocabulary.
For every $\varphi\in\mathsf{CMSO/tw}\!+\!\mathsf{dp}[\sigma]$,
there are $m,r\in\mathbb{N}$ and $\varphi'\in\mathsf{CMSO}[\sigma]$ such that
$\varphi'$ can be written as
\[Q_1 X_1\ldots Q_m X_m Q_{m+1}x_1\ldots Q_{m+r}x_r\ \psi(\bar{X},\bar{x}),\]
where for every $i\in[m+r]$ $Q_i \in \{\forall,\exists\}$ and
$\psi(\bar{X},\bar{x})\in \mathcal{L}_{p,t}^{m,r}[\sigma]$,
where $p:=p(\varphi)$ and $t:=t(\varphi)$,
and for every $\sigma$-structure $\mathfrak{G}$ it holds that $\mathfrak{G}\models \varphi \iff \mathfrak{G}\models \varphi'$.
Moreover, $m,r,$ and $\varphi'$ can be effectively computed from $\varphi,p,t$.
\end{lemma}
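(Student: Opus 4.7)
The plan is a three-stage syntactic rewriting of $\varphi$. First, I would eliminate the features of $\mathsf{CMSO/tw}\!+\!\mathsf{dp}$ that are absent from plain $\mathsf{CMSO}$: by~\Cref{lemma_dpMSO} every atom $\mathsf{dp}_k(x_1,y_1,\dots,x_k,y_k)$ is replaced by the equivalent $\mathsf{MSO}_1$-formula $\xi_k(x_1,y_1,\dots,x_k,y_k)$, and by~\Cref{formulaanntw} every quantifier $\exists_{k_i}X_i\,\chi$ (respectively $\forall_{k_i}X_i\,\chi$) is rewritten as $\exists X_i\,(\zeta_{k_i}(X_i)\wedge \chi)$ (respectively $\forall X_i\,(\zeta_{k_i}(X_i)\to \chi)$), and finally the $\mathsf{dp}$-atoms sitting inside each $\zeta_{k_i}$ are unfolded once more via~\Cref{lemma_dpMSO}. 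This yields a formula $\varphi_0\in \mathsf{CMSO}[\sigma]$ logically equivalent to $\varphi$ in which every cardinality atom has modulus at most $p=p(\varphi)$ and every treewidth threshold carried by a conjoined $\zeta_{k_i}$ is at most $t=t(\varphi)$.

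Second, I would bring $\varphi_0$ into prenex form through the classical $\mathsf{CMSO}$ manipulations: $\alpha$-renaming bound variables to avoid clashes, pushing $\neg$ through quantifiers (flipping $\exists$ and $\forall$), and extracting quantifiers across $\wedge,\vee,\to$ whenever the quantified variable does not occur free on the opposite side. Next, to reshape the resulting quantifier prefix into the announced form ``$m$ set quantifiers followed by $r$ first-order quantifiers'', I would use the free swap of any two adjacent same-polarity quantifiers (independently of their kinds) together with the $\mathsf{MSO}$ equivalences $\exists x\,\chi(x)\equiv \exists X\,(|X|=1\wedge \exists x\,(x\in X\wedge \chi(x)))$ and $\forall x\,\chi(x)\equiv \forall X\,(|X|=1\to \forall x\,(x\in X\to \chi(x)))$ that wrap a first-order quantifier inside a same-polarity set quantifier. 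A careful iteration of these moves yields a prenex of the form $Q_1X_1\cdots Q_mX_m\,Q_{m+1}x_1\cdots Q_{m+r}x_r$ with a quantifier-free matrix, for some $m,r$ effectively computable from $|\varphi|$.

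Third, I would verify that the resulting quantifier-free matrix $\psi(\bar X,\bar x)$ lies in $\mathcal{L}_{p,t}^{m,r}[\sigma]$. By construction $\psi$ is a boolean combination of $\mathsf{CMSO}$-atoms in the free variables $\bar X,\bar x$, so it lies in $\mathsf{Atp}^{m,r}[\sigma]$; no rewriting step introduces a new cardinality atom, so every $\mathsf{card}_j$ appearing in $\psi$ has modulus $j\in[2,p]$; and the atoms contributed to $\psi$ by each translated $\zeta_{t_i}$, together with the associated quantifiers pulled out to the outer prefix, witness that any interpretation $V_i$ of $X_i$ that satisfies $\psi$ (under a consistent assignment of the remaining variables) has $\tw(\mathfrak{G},V_i)\le t_i$ for some $t_i\le t$, as required by the definition of $\mathcal{L}_{p,t}^{m,r}[\sigma]$. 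Since every step preserves logical equivalence, $\mathfrak{G}\models \varphi\iff \mathfrak{G}\models \varphi'$, and the integers $m,r$ and the formula $\varphi'$ are all effectively computable from $\varphi,p,t$.

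The main obstacle I expect is the prefix-reshaping step in the second stage. Since $\forall x\exists X$ and $\exists X\forall x$ are not equivalent in general, the commutation of a first-order quantifier with a set quantifier of opposite polarity cannot be performed naïvely and must be handled through the singleton-absorption equivalences above in combination with variable-renaming; in addition, some bookkeeping is needed to guarantee that the cardinality moduli and the annotated-treewidth thresholds carried by the various $\zeta_{t_i}$'s stay within the bounds $p$ and $t$ throughout all rewriting steps.
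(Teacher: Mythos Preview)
The paper itself gives no detailed proof of this lemma, merely invoking ``standard arguments for translating a given formula to one of prenex normal form'' in the sentence preceding the statement. Your three-stage plan is exactly the kind of standard argument the paper has in mind, and the prefix-reordering via singleton-absorption is the right idea.

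There is, however, one concrete gap. In the first stage you replace every $\mathsf{dp}_k$ atom by the $\mathsf{MSO}_1$-formula $\xi_k$ of \Cref{lemma_dpMSO}, both directly and inside each $\zeta_{k_i}$. This is unnecessary: the matrix is required to lie in $\mathsf{Atp}^{m,r}[\sigma]$, which is defined as boolean combinations of atomic formulas of $\mathsf{CMSO/tw}\!+\!\mathsf{dp}$ and thus explicitly allows $\mathsf{dp}_k$ atoms (this is why the paper can still write $\mathcal{L}_{p,t}^{m,r}[\sigma]\subseteq\mathsf{CMSO}[\sigma]$, up to logical equivalence). Worse, unpacking $\mathsf{dp}_k$ introduces new set quantifiers whose witnesses are vertex sets of paths, and these do \emph{not} have bounded annotated treewidth in general; the paper's introduction gives the example of a diagonal of a grid with $\tw(G,X)=\Omega(n^{1/2})$. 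All of these new set variables join the tuple $(X_1,\ldots,X_m)$, and the definition of $\mathcal{L}_{p,t}^{m,r}[\sigma]$ demands the annotated-treewidth bound for \emph{every} $i\in[m]$, not only for those $X_i$ that came from an original $\exists_k$-quantifier. Your verification in the third stage addresses only the latter and is therefore incomplete. The fix is simply to leave $\mathsf{dp}_k$ as atomic in the matrix; then the only set quantifiers in $\varphi'$ come from the original $\exists_k X_i$ (whose bound is carried by $\zeta_{k_i}$) and from your singleton-absorption trick (which can be handled by conjoining with the quantifier-free part of $\zeta_0$, since singletons have annotated treewidth~$0$).
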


\paragraph{Some conventions on notation.}
Let us note that, throughout the paper,
given a set $A$ and elements $a_1,\ldots,a_r\in A$,
we use either $\bar{a}$ or $a_1a_2\ldots a_r$ to denote the tuple $(a_1,\ldots,a_r)$, whenever it is clear from the context, and $V(\bar{a})$ to denote the set $\{a_1,\ldots,a_r\}$.
Given a set $A$, we use $2^A$ to denote the set of all of its subsets and $A^c$ to denote the product $A\times \cdots \times A$ of $c$ copies of $A$.
If ${\cal S}$ is a collection of objects where the operation $\cup$ is defined, then we denote $\cupall \mathcal{S}=\bigcup_{X\in \mathcal{S}}X.$
Also, to facilitate reading, in the rest of the paper we assume that we always deal with formulas where $p(\varphi)$ is bounded by a fixed  constant $\hat{c}$.
For this reason, we use  $\mathcal{L}_{t}^{m,r}[\sigma]$ to denote $\mathcal{L}_{\hat{c},t}^{m,r}[\sigma]$.

\subsection{Annotated types for \texorpdfstring{$\mathsf{CMSO/tw}\!+\!\mathsf{dp}$}{CMSO/tw+dp}}\label{subsec:anntypes}

In this subsection we define \emph{annotated types}.
These correspond to the annotated variant of $h$-rank types, as defined e.g.,
by Shelah~\cite{Shelah75them}, Ebbinghaus and Flum~\cite[Definition~2.2.5]{EbbinghausF99fini}, and Libkin~\cite[Section~3.4]{Libkin04elem}.
We refer the reader to~\cite{BonnetKTW22twinI,gradel2003automata,GajarskyGK20diff,gajarsky16algorithmic,GolovachST22model,GolovachST22model_arXiv,SchirrmacherSSTV24mode}
for similar ``game-like'' definitions of the same notion.
The framework presented in this section is inspired from the one of~\cite{GolovachST22model_arXiv,GolovachST22model}.
In the following definition, we use $\mathsf{tp}$ as an abbreviation of \emph{annotated type} of $\mathsf{CMSO/tw}\!+\mathsf{dp}$.
Keep in mind that
$\mathcal{L}_{t}^{m,r}[\sigma]$ denotes $\mathcal{L}_{\hat{c},t}^{m,r}[\sigma]$, for some fixed $\hat{c}\in\mathbb{N}$.

\begin{importantdef}
\label{def:tp}
Let $\sigma$ be a colored-graph vocabulary and let $m,r,t\in\mathbb{N}$.
Let $\mathfrak{G}$ be a $\sigma$-structure and let $\bar{R} = (R_1,\ldots, R_{m+r})$ be a tuple of subsets of $V(\mathfrak{G})$.

For every $m$-tuple $\bar{V}\in (2^{V(\mathfrak{G})})^m$ and every $r$-tuple $\bar{v}\in {V(\mathfrak{G})}^r$,
we set
$$\mathsf{tp}_{m,r,t}^0(\mathfrak{G},\bar{R},\bar{V},\bar{v})
:=
\{\psi(\bar{X},\bar{x}) \in \mathcal{L}_{t}^{m,r}[\sigma]
\mid \mathfrak{G}\models \psi(\bar{V},\bar{v})\}.$$
For every $i\in[r]$, every $m$-tuple $\bar{V}\in (2^{V(\mathfrak{G})})^m$, and every  $(r-i)$-tuple $\bar{v}\in V(\mathfrak{G})^{r-i}$,
we set
$$\mathsf{tp}_{m,r,t}^i(\mathfrak{G},\bar{R},\bar{V},\bar{v}) := \{\mathsf{tp}_{m,r,t}^{i-1}(\mathfrak{G},\bar{R},\bar{V},\bar{v}u) \mid {u\in R}_{m+r-(i-1)}\}.$$
Also, for every $i\in[1+r,m+r]$ and every $(m+r-i)$-tuple $\bar{V}\in (2^{V(\mathfrak{G})})^{m+r-i}$,
we set
$$\mathsf{tp}_{m,r,t}^i(\mathfrak{G},\bar{R},\bar{V}) := \{\mathsf{tp}_{m,r,t}^{i-1}(\mathfrak{G},\bar{R},\bar{V}U) \mid U\subseteq R_{m+r-(i-1)}\}.$$
\end{importantdef}
Observe that for every $i\in[1+r, m+r]$ and every $\bar{V}\in (2^{V(\mathfrak{G})})^{m+r-i}$,
if $\mathsf{tp}_{m,r,t}^i(\mathfrak{G},\bar{R},\bar{V})$ contains $\mathsf{tp}_{m,r,t}^{i-1}(\mathfrak{G},\bar{R},\bar{V}U)$, then
$U\subseteq R_{m+r-(i-1)}$ and $U$ satisfies one of the formulas $\zeta_0,\ldots, \zeta_t$ from~\Cref{formulaanntw}.

\paragraph{Definability Condition for $\mathsf{CMSO/tw}\!+\!\mathsf{dp}$.}
The \textsl{Definability Condition}
(see~\Cref{def_red})
for $(\mathsf{CMSO/tw}\!+\!\mathsf{dp},\mathsf{hw})\leq_{\mathsf{n}^1}  (\mathsf{CMSO},\mathsf{tw})$ is implied by the two following results (\Cref{prop:patterns,lemgametreesCMSO/tw}).
Their proof is a reformulation of folklore results on $\mathsf{CMSO}$-types, written for annotated types of $\mathsf{CMSO/tw}\!+\!\mathsf{dp}$;
see~\cite[Lemma 2.1]{Shelah75them},~\cite[Definition 2.2.5]{EbbinghausF99fini},~\cite[Section 3.4]{Libkin04elem}, and~\cite[Subsection~4.2]{GajarskyGK20diff}. The first states that the number of different annotated types of $\mathsf{CMSO/tw}\!+\!\mathsf{dp}$ for fixed $m,r,t,\sigma$ is upper-bounded by a function of $m,r,t,|\sigma|$ and they can be defined in $\mathsf{CMSO}$ (the latter follows from~\Cref{lem_trans_CMSO/tw_MSO}).

\begin{lemma}\label{prop:patterns}
There is a function $\newfun{fun_ipatterns}:\mathbb{N}^4\to\mathbb{N}$ such that
for every colored-graph vocabulary $\sigma$ and every $m,r,t\in\mathbb{N}$,
it holds that the set
$$\mathcal{B}_{\sigma}^{m,r,t}:=\{
\mathsf{tp}_{m,r,t}^{m+r}(\mathfrak{G},\bar{R})\mid
\mathfrak{G}\in\mathbb{STR}[\sigma] \text{ and }
\bar{R}\in (2^{V(\mathfrak{G})})^{m+r}\}
$$ has size at most $\funref{fun_ipatterns}(|\sigma|,m,r,t)$.
Moreover,  if $P_1,\ldots,P_{c}$ is an enumeration of all sets in $\mathcal{B}_{\sigma}^{m,r,t}$,
then there is a set $\Phi_{m,r,t}:=\{\alpha_1,\ldots,\alpha_c\}\subseteq \mathsf{CMSO}[\sigma\cup\{R_1,\ldots, R_{m+r}\}]$
such that
for every $i\in[c]$, every $\sigma$-structure $\mathfrak{G}$, and every $\bar{R}\in (2^{V(\mathfrak{G})})^{m+r}$,
it holds that
$(\mathfrak{G},\bar{R})\models \alpha_i \iff \mathsf{tp}_{m,r,t}^{m+r}(\mathfrak{G},\bar{R}) = P_i.$
Moreover, $\Phi_{m,r,t}$ can be effectively computed from $m,r,$ and $t$.
\end{lemma}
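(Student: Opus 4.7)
\medskip
\noindent\textbf{Proof plan for \Cref{prop:patterns}.} The plan is to proceed by induction on the level index $i$ in~\Cref{def:tp}, showing simultaneously the two claims of the lemma at each level. Concretely, for each $i\in[0,m+r]$, I will produce an upper bound $N_i(|\sigma|,m,r,t)$ on the number of distinct values of $\mathsf{tp}_{m,r,t}^i(\mathfrak{G},\bar{R},\cdot)$ over all $\sigma$-structures $\mathfrak{G}$, tuples $\bar{R}$, and admissible remaining arguments; and, for each such potential type value $P$, a formula $\chi^i_P\in\mathsf{CMSO}[\sigma\cup\{R_1,\ldots,R_{m+r}\}]$ (possibly with extra free variables matching the signature of $\mathsf{tp}^i$) defining precisely the structures/tuples whose level-$i$ type equals $P$. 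Taking $\Phi_{m,r,t}:=\{\chi^{m+r}_P \mid P\in\mathcal{B}_\sigma^{m,r,t}\}$ at the end yields the statement.

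\medskip
\noindent\textbf{Base case ($i=0$).} The set $\mathcal{L}^{m,r}_t[\sigma]$ is finite up to logical equivalence, with cardinality bounded by a computable function of $|\sigma|,m,r,t$; this uses that there are finitely many atomic formulas of $\mathsf{CMSO/tw}\!+\!\mathsf{dp}$ over the variables $\bar X,\bar x$ (once the constants $p\le \hat c$, $k\le $ some bound, and the $\tw$-bounds $t_i\le t$ are fixed), and that each atomic formula involving the $\mathsf{dp}$ predicate or the bounded-annotated-treewidth constraint is $\mathsf{CMSO}$-expressible by~\Cref{lemma_dpMSO} and~\Cref{formulaanntw}. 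Hence $\mathsf{tp}^0$ takes at most $2^{|\mathcal{L}^{m,r}_t[\sigma]|}$ distinct values, and for each such $P$ the defining formula is
\[
\chi^0_P(\bar X,\bar x)\ :=\ \bigwedge_{\psi\in P}\psi(\bar X,\bar x)\ \wedge\ \bigwedge_{\psi\in\mathcal{L}^{m,r}_t[\sigma]\setminus P}\neg\psi(\bar X,\bar x).
\]

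\medskip
\noindent\textbf{Inductive step, vertex levels ($1\le i\le r$).} A level-$i$ type is a subset of the set of possible level-$(i-1)$ types, so $N_i\le 2^{N_{i-1}}$. For a level-$i$ type $S=\{T_1,\ldots,T_s\}$, the defining formula states that every element of $R_{m+r-(i-1)}$ realizes one of the $T_j$'s and every $T_j$ is realized by some element of $R_{m+r-(i-1)}$:
\[
\chi^i_S(\bar X,\bar x)\ :=\ \Bigl(\bigwedge_{j\in[s]}\exists u\bigl(u\in R_{m+r-(i-1)}\wedge\chi^{i-1}_{T_j}(\bar X,\bar x,u)\bigr)\Bigr)\wedge\forall u\Bigl(u\in R_{m+r-(i-1)}\to\bigvee_{j\in[s]}\chi^{i-1}_{T_j}(\bar X,\bar x,u)\Bigr).
\]

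\medskip
\noindent\textbf{Inductive step, set levels ($r+1\le i\le m+r$).} The construction is analogous, but we now quantify over subsets $U\subseteq R_{m+r-(i-1)}$ rather than over elements; moreover, since the $\mathsf{tp}^0$ level already records which of the bounds $\tw(\mathfrak{G},V_j)\le t_j$ hold, we can simply quantify $U$ unrestrictedly inside $R_{m+r-(i-1)}$ and let the already-encoded treewidth data in $\chi^{i-1}_{T_j}$ carry the relevant information:
\[
\chi^i_S(\bar X)\ :=\ \Bigl(\bigwedge_{j\in[s]}\exists U\bigl(U\subseteq R_{m+r-(i-1)}\wedge\chi^{i-1}_{T_j}(\bar X,U)\bigr)\Bigr)\wedge\forall U\Bigl(U\subseteq R_{m+r-(i-1)}\to\bigvee_{j\in[s]}\chi^{i-1}_{T_j}(\bar X,U)\Bigr).
\]
At the top level $i=m+r$ there are no free variables left, and $\Phi_{m,r,t}$ is obtained by enumerating the finitely many consistent level-$(m+r)$ types, with $f_{\ref{fun_ipatterns}}(|\sigma|,m,r,t)$ being an iterated exponential in the base-case bound $N_0$.

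\medskip
\noindent\textbf{Main obstacle.} The conceptual step is routine (this is the annotated analogue of the classical Hintikka/$\mathsf{MSO}$-type construction); the one nontrivial point is that we need the defining formulas to live in $\mathsf{CMSO}$ rather than in $\mathsf{CMSO/tw}\!+\!\mathsf{dp}$, even though the types we are classifying are those of $\mathsf{CMSO/tw}\!+\!\mathsf{dp}$-formulas. This is precisely what~\Cref{lemma_dpMSO} and~\Cref{formulaanntw} buy us: the $\mathsf{dp}$ predicate and the $\tw(\cdot,\cdot)\le k$ predicate used inside $\mathcal{L}^{m,r}_t[\sigma]$ are both $\mathsf{CMSO}$-expressible, so all bounded-annotated-treewidth side conditions baked into $\mathsf{tp}^0$ are $\mathsf{CMSO}$-definable and propagate through the inductive construction without ever requiring a parameterized set quantifier. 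The secondary bookkeeping obstacle is keeping track of the index shift $m+r-(i-1)$ between the level $i$ of the induction and the predicate $R_j$ being peeled off; this is purely notational and is handled by the conventions fixed in~\Cref{def:tp}.
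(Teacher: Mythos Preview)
Your proof is correct and is exactly the approach the paper has in mind: the paper does not spell out a proof of \Cref{prop:patterns} but instead cites it as ``a reformulation of folklore results on $\mathsf{CMSO}$-types'' (Shelah, Ebbinghaus--Flum, Libkin, Gajarsk\'y et al.), pointing to \Cref{lem_trans_CMSO/tw_MSO} for the fact that the defining formulas land in $\mathsf{CMSO}$. Your level-by-level Hintikka-formula construction, together with the observation that the $\mathsf{dp}$- and $\tw(\cdot,\cdot)\le k$-predicates baked into $\mathcal{L}^{m,r}_t[\sigma]$ are $\mathsf{CMSO}$-expressible (via \Cref{lemma_dpMSO} and \Cref{formulaanntw}), is precisely that folklore argument made explicit.
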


By using~\Cref{lem_trans_CMSO/tw_MSO},
we can also prove the following result, formulated for the set $\Phi_{m,r,t}$ of~\Cref{prop:patterns}.

\begin{lemma}\label{lemgametreesCMSO/tw}
Let $\sigma$ be a colored-graph vocabulary.
For every sentence $\varphi\in\mathsf{CMSO/tw}\!+\!\mathsf{dp}[\sigma]$, where $t:=t(\varphi)$,
there are $m,r\in\mathbb{N}$ and a sentence $\delta_\varphi\in\mathsf{CMSO}[\sigma\cup\bar{R}]$ such that for every two $\sigma$-structures $\mathfrak{G},\mathfrak{G}'$ and every $\bar{R}\in (2^{V(\mathfrak{G}')})^{m+r}$,
\[
\text{if $(\mathfrak{G},(V(\mathfrak{G}))^{m+r}) \equiv_{\Phi_{m,r,t}} (\mathfrak{G}',\bar{R})$, then
$\mathfrak{G}\models \varphi \iff (\mathfrak{G}',\bar{R})\models \delta_\varphi.$}\]
Moreover, $m,r,$ and $\delta_\varphi$ can be effectively computed from $\varphi$, $p(\varphi)$, and $t(\varphi)$.
\end{lemma}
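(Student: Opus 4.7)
The plan is to first normalize $\varphi$ and then read $\delta_\varphi$ directly off the annotated-type machinery provided by~\Cref{prop:patterns}. Applying~\Cref{lem_trans_CMSO/tw_MSO}, we may replace $\varphi$ by an equivalent prenex-form formula
\[
\varphi' \;=\; Q_1 X_1 \cdots Q_m X_m\, Q_{m+1} x_1 \cdots Q_{m+r} x_r\ \psi(\bar{X},\bar{x}),
\]
with $Q_i\in\{\exists,\forall\}$ and $\psi\in\mathcal{L}_t^{m,r}[\sigma]$, where $m,r$ are effectively computable from $\varphi$. By~\eqref{quant}, the set quantifiers $Q_iX_i$, $i\in[m]$, semantically range over subsets $U\subseteq V(\mathfrak{G})$ with $\tw(\mathfrak{G},U)\leq t$, and the first-order quantifiers $Q_{m+j}x_j$, $j\in[r]$, range over $V(\mathfrak{G})$.

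The heart of the argument is the claim that, for any $\sigma$-structure $\mathfrak{G}$, whether $\mathfrak{G}\models\varphi'$ is already encoded in the single object $\mathsf{tp}_{m,r,t}^{m+r}(\mathfrak{G},(V(\mathfrak{G}))^{m+r})$. The plan is to prove this by induction on $i\in[0,m+r]$: for every ``prefix'' of outer interpretations $\bar{V},\bar{v}$, the truth value of the tail formula $Q_{m+r-i+1}\cdots Q_{m+r}\,\psi$ in $\mathfrak{G}$ is determined by $\mathsf{tp}_{m,r,t}^i(\mathfrak{G},(V(\mathfrak{G}))^{m+r},\bar V,\bar v)$. The base case $i=0$ is immediate: since $\psi\in\mathcal{L}_t^{m,r}[\sigma]$, whether $\mathfrak{G}\models\psi(\bar V,\bar v)$ is exactly the question of whether (an equivalent of) $\psi$ lies in $\mathsf{tp}^0$. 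For the inductive step, \Cref{def:tp} builds $\mathsf{tp}^i$ by collecting $\mathsf{tp}^{i-1}$ over every $u\in R_{m+r-(i-1)}=V(\mathfrak{G})$ when $i\leq r$, or over every $U\subseteq R_{m+r-(i-1)}=V(\mathfrak{G})$ when $i>r$; in the latter case the observation following~\Cref{def:tp} together with~\Cref{formulaanntw} guarantees that only subsets with $\tw(\mathfrak{G},U)\leq t$ contribute. These ranges match precisely the semantics of the corresponding quantifier $Q_{m+r-i+1}$, so the truth value at level $i$ is the existential or universal Boolean combination of the truth values recorded by the children in $\mathsf{tp}^i$, and the induction goes through.

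The claim yields a (computable) subset $\mathcal{T}_\varphi\subseteq\mathcal{B}_\sigma^{m,r,t}$ such that $\mathfrak{G}\models\varphi$ iff $\mathsf{tp}_{m,r,t}^{m+r}(\mathfrak{G},(V(\mathfrak{G}))^{m+r})\in\mathcal{T}_\varphi$. Invoking the enumeration $P_1,\ldots,P_c$ of $\mathcal{B}_\sigma^{m,r,t}$ and the corresponding sentences $\alpha_1,\ldots,\alpha_c\in\Phi_{m,r,t}$ from~\Cref{prop:patterns}, I define
\[
\delta_\varphi \;:=\; \bigvee_{i\,:\, P_i\in\mathcal{T}_\varphi}\alpha_i \;\in\; \mathsf{CMSO}[\sigma\cup\{R_1,\ldots,R_{m+r}\}].
\]
Each $\alpha_i$ isolates exactly the annotated structures whose $(m+r)$-type equals $P_i$, so the hypothesis $(\mathfrak{G},(V(\mathfrak{G}))^{m+r})\equiv_{\Phi_{m,r,t}}(\mathfrak{G}',\bar R)$ forces the two annotated types to coincide; hence $\mathfrak{G}\models\varphi$ iff this common type belongs to $\mathcal{T}_\varphi$ iff $(\mathfrak{G}',\bar R)\models\delta_\varphi$. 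Effectiveness of $\delta_\varphi$ is inherited from~\Cref{lem_trans_CMSO/tw_MSO} and~\Cref{prop:patterns}, together with the inductive computation of $\mathcal{T}_\varphi$ on the finite set $\mathcal{B}_\sigma^{m,r,t}$.

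The only delicate point I anticipate is matching the semantic constraint ``$\tw(\mathfrak{G},U)\leq t$'' imposed by the $\exists_t X_i$ quantifiers against the subsets $U$ that actually appear in the union defining $\mathsf{tp}^i$ for $i>r$: when $\bar R=(V(\mathfrak{G}))^{m+r}$ the side condition $U\subseteq R_{m+r-(i-1)}$ is vacuous, and the effective restriction on $U$ reduces to satisfying one of $\zeta_0,\ldots,\zeta_t$ from~\Cref{formulaanntw}, which is exactly the treewidth bound demanded by~\eqref{quant}. On the $(\mathfrak{G}',\bar R)$ side the analogous restriction already lives inside the definition of $\mathsf{tp}$ and is transported through the $\alpha_i$'s. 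Once this bookkeeping is verified the inductive step is routine and the proof closes.
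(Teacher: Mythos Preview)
Your approach is correct and is exactly what the paper sketches (it only says the result follows by using \Cref{lem_trans_CMSO/tw_MSO} together with the folklore type machinery of \Cref{prop:patterns}). There is, however, one conceptual slip in your explanation that you should clean up: once you apply \Cref{lem_trans_CMSO/tw_MSO}, the resulting formula $\varphi'$ lives in $\mathsf{CMSO}[\sigma]$, so its set quantifiers $Q_iX_i$ are \emph{ordinary} $\exists/\forall$ quantifiers ranging over all of $2^{V(\mathfrak{G})}$, not $\exists_t/\forall_t$ quantifiers. The treewidth restriction has been pushed entirely into $\psi\in\mathcal{L}_t^{m,r}[\sigma]$. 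Consequently, the ``ranges match'' in your inductive step not because both sides are restricted to bounded-annotated-treewidth sets, but because neither side is: \Cref{def:tp} collects $\mathsf{tp}^{i-1}$ over \emph{all} $U\subseteq R_{m+r-(i-1)}=V(\mathfrak{G})$, and the quantifier $Q_{m+r-i+1}$ in $\varphi'$ also ranges over all of $2^{V(\mathfrak{G})}$. Your ``delicate point'' paragraph is therefore unnecessary; the bookkeeping you worry about is already absorbed at level $0$ by the definition of $\mathcal{L}_t^{m,r}[\sigma]$ (sets $U$ with $\tw(\mathfrak{G},U)>t$ simply yield the empty $\mathsf{tp}^0$, which is still a legitimate element of $\mathsf{tp}^1$ and is handled uniformly by the induction). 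With this correction the argument goes through verbatim.
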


\section{Flat railed annuli}\label{sec_flatannuli}
In this section, we define flat railed annuli, a notion analogous to flat walls.
Definitions and results on this section follow the framework of~\cite{SauST24amor} (see also~\cite{KawarabayashiTW18anew,GolovachST22model_arXiv}).
Our aim is to provide a combinatorial background of the proof of
the \textsl{Local-Global-Irrelevancy Condition}
(see~\Cref{def_red})
for $(\mathsf{CMSO/tw}\!+\!\mathsf{dp},\mathsf{hw})\leq_{\mathsf{n}^1}  (\mathsf{CMSO},\mathsf{tw})$.

This section is organized as follows.
All necessary definitions of the flat railed annuli framework that we use in this paper,
including the definitions of \emph{paintings}, \emph{renditions},
\emph{(flat) railed annuli}, \emph{influences}, \emph{levelings}, and the property of~\emph{well-alignedness} are gradually presented in~\Cref{subsec:paint-rend,subsec:defflatra,subsec:influence,subsec_levelings}.
The analogous algorithmic variant of the \textsl{Flat Wall theorem}~\cite{SauST24amor,RobertsonS95XIII,KawarabayashiTW18anew,Chuzhoy15impr,GiannopoulouT13opti,KawarabayashiKR12thedis}
for flat railed annuli that we need in this paper is \Cref{prop:flatann} and is presented in~\Cref{subsec:findingflat}.
The last two subsections contain useful results on how disjoint paths and bounded annotated treewidth sets behave inside a flat railed annulus.
In particular, in~\Cref{subsec:linkages}, we define \emph{linkages} (i.e., collections of disjoint paths) and
present a result derived from~\cite{GolovachST22model_arXiv} (see~\Cref{lem_levelingpaths}) showing that linkages are preserved when ``leveling'' some strip of a flat railed annulus away from its terminals.
In~\Cref{subsec:buffer}, we show that inside a large enough flat railed annulus, bounded annotated treewidth sets always leave some strip of the annulus intact (\Cref{lemma:buffer}).

\subsection{Paintings and renditions}
\label{subsec:paint-rend}

We first present the notions of renditions and paintings, originating in the work of Robertson and Seymour~\cite{RobertsonS95XIII}.
The definitions presented here were introduced by Kawarabayashi, Thomas, and Wollan~\cite{KawarabayashiTW18anew} (see also~\cite{SauST24amor}).
In~\cite{KawarabayashiTW18anew,SauST24amor}, renditions and paintings are defined only on disks, while here we extend these definitions to annuli, as in~\cite{GolovachST22model_arXiv}.

A \emph{circle} is a subset of $\mathbb{R}^2$ homeomorphic to
 $\{(x,y)\in \mathbb{R}^2\mid x^2 + y^2 =1\}$.
A \emph{closed} (resp.
\emph{open}) \emph{disk} is a a subset of $\mathbb{R}^2$ homeomorphic to $\{(x,y)\in \mathbb{R}^2\mid x^2 + y^2 \leq 1\}$ (resp.
$\{(x,y)\in \mathbb{R}^2\mid x^2 + y^2 < 1\}$) and a \emph{closed annulus} is a subset of $\mathbb{R}^2$ homeomorphic to $\{(x,y)\in \mathbb{R}^2\mid 1 \leq x^2 + y^2 \leq 2\}$.
Let $X$ be a subset of $\mathbb{R}^2.$
We define the \emph{boundary} of $X$, which we denote by $\mathsf{bd}(X)$, as follows: if $X$ is an open set, then $\mathsf{bd}(X)$ is the set of all points $y$ in $\mathbb{R}^2$ such that every neighborhood of $y$ in $\mathbb{R}^2$ intersects both $X$ and $\mathbb{R}^2\setminus X$,
while if $X$ is a closed set then $\mathsf{bd}(X)$ is the set of all points $y\in X$ such that every neighborhood of $y$ in $\mathbb{R}^2$ intersects $\mathbb{R}^2\setminus X$.
The \emph{closure} of an open subset $X$ of $\mathbb{R}^2$,
which is denoted by $\bar{X}$, is the set $X\cup \mathsf{bd}(X)$.
If $X$ is a closed annulus then $\mathsf{bd}(X) = B_1\cup B_2$,
where $B_1,B_2$ are the two unique connected components
of $\mathsf{bd}(X)$,
and note that $B_1,B_2$ are two disjoint circles.
We call $B_1$ and $B_2$ \emph{boundaries} of $X$.

\paragraph{Paintings.}
Let $\Delta$ be a closed annulus.
A \emph{$\Delta$-painting} is a pair $\Gamma=(U,N)$
where $N$ is a finite set of points of $\Delta,$
$N \subseteq U \subseteq \Delta,$ and
$U \setminus  N$ has finitely many arcwise-connected components, called \emph{cells}, where, for every cell $c,$
the closure $\bar{c}$ of $c$ is a closed disk and
$|\tilde{c}|\leq 3,$ where $\tilde{c}:=\mathsf{bd}(c)\cap N.$

We use the notation $U(\Gamma) := U,$
$N(\Gamma) := N$, and denote the set of cells of $\Gamma$
by $\mathsf{cells}(\Gamma).$
For convenience, we may assume that each cell  of $\Gamma$ is an open disk of $\Delta.$
Note that, given a $\Delta$-painting $\Gamma,$
the pair $(N(\Gamma),\{\tilde{c}\mid c\in \mathsf{cells}(\Gamma)\})$  is a hypergraph whose hyperedges have cardinality at most three, and  $\Gamma$ can be seen as a plane embedding of this hypergraph in $\Delta.$

\paragraph{Renditions.}
Let $G$ be a graph, let $V_1,V_2$ be two subsets of $V(G)$,
and let $\Omega_1$ (resp.
$\Omega_2$) be a cyclic permutation of $V_1$ (resp.
$V_2$).
By an \emph{$(\Omega_1, \Omega_2)$-rendition} of $G$ we mean a triple $(\Gamma, \sigma, \pi),$
where
\begin{itemize}
	\item[(a)] $\Gamma$ is a $\Delta$-painting for some closed annulus $\Delta,$
	\item[(b)] $\pi: N(\Gamma)\to V(G)$ is an injection, and
	\item[(c)] $\sigma$ assigns to each cell $c \in  \mathsf{cells}(\Gamma)$ a subgraph $\sigma(c)$ of $G,$ such that
	      \begin{enumerate}
		      \item[(1)] $G=\bigcup_{c\in \mathsf{cells}(\Gamma)}\sigma(c),$
		      \item[(2)] for distinct $c, c' \in \mathsf{cells}(\Gamma),$ $\sigma(c)$ and $\sigma(c')$ are edge-disjoint,
		      \item[(3)] for every cell $c \in \mathsf{cells}(\Gamma),$ $\pi(\tilde{c}) \subseteq V(\sigma(c)),$
		      \item[(4)] for every cell $c \in \mathsf{cells}(\Gamma),$
		            $V(\sigma(c)) \cap \bigcup_{c' \in  \mathsf{cells}(\Gamma) \setminus  \{c\}} V(\sigma(c')) \subseteq \pi(\tilde{c}),$ and
		      \item[(5)]  $\pi(N(\Gamma)\cap \bd(\Delta))=V_1\cup V_2,$ such that, if $\mathsf{bd}(\Delta) = B_1\cup B_2$, then, for $i\in[1,2]$, the points in $N(\Gamma)\cap B_i$ appear in $B_i$ in the same ordering as their images, via $\pi,$ in $\Omega_i$.
	      \end{enumerate}
\end{itemize}
Moreover, we ask the additional property that for every $c\in \mathsf{cells}(\Gamma),$ every two vertices in $\pi(\tilde{c})$ belong to some path of $\sigma(c)$ (property (ii) of the tightness condition of~\cite{SauST24amor}) and because of~\cite[Lemma~3]{SauST24amor}, we can assume that all renditions we consider satisfy this additional property.

\subsection{Definition of flat railed annuli}
\label{subsec:defflatra}

\paragraph{Railed annuli.}
Let $G$ be a graph and let $p,q\in\mathbb{N}_{\geq 3}$.
A $(p,q)$-\emph{railed annulus} of $G$ is a pair $\mathcal{A} = (\mathcal{C},\mathcal{P})$ where $\mathcal{C} = [C_1,\ldots,C_p]$ is a sequence of pairwise vertex-disjoint cycles of $G$
and $\mathcal{P} = [P_1,...,P_q]$ is a sequence of pairwise vertex-disjoint paths in $G$ such that for every $(i,j) \in [p] \times [q]$, $C_i \cap P_j$ is a non-empty path,
which we denote by $P_{i,j}$.
We refer to the paths of $\mathcal{P}$ as the \emph{rails} of $\mathcal{A}$ and to the cycles of $\mathcal{C}$ as the \emph{cycles} of $\mathcal{A}$.
We use $V(\mathcal{A})$ to denote the vertex set $\bigcup_{i\in[p]}V(C_i)\cup \bigcup_{j\in[q]}V(P_j)$ and $E(\mathcal{A})$ to denote the edge set $\bigcup_{i\in[p]}E(C_i)\cup \bigcup_{j\in[q]}E(P_j).$
A \emph{railed annulus} of $G$ is a $(p,q)$-railed annulus for some $p,q\in \mathbb{N}_{\ge 3}$.

We always assume an ordering of the vertices of each $P$ in $\mathcal{P}$ according to the linear ordering where the vertices of $P\cap C_1$ appear before the vertices of $P\cap C_p$.
For every $P\in\mathcal{P}$, we denote by $s_P$ (resp.
$t_P$) the endpoint of $P$ that appears first (resp.
last) in this linear ordering.

\paragraph{Flat railed annuli.}
Let $G$ be a graph.
Let $\mathcal{A} = (\mathcal{C},\mathcal{P})$ be a $(p,q)$-railed annulus of $G,$ for some $p,q\in \mathbb{N}_{\geq 3}$.
We say that $\mathcal{A}$ is a \emph{flat  $(p,q)$-railed annulus}
of $G$ if there exist two laminar\footnote{Given a graph $G$, a pair $(X,Y)$ of subsets of $V(G)$ is a \emph{separation} of $G$ if $X\neq Y$, $X\cup Y=V(G)$, and there is no edge in $G$ between $X\setminus Y$ and $Y\setminus X$.
We say that two separations $(X_1, Y_1)$ and $(X_2, Y_2)$ of $G$ are \emph{laminar} if either $Y_1 \subseteq Y_2$ and $X_2 \subseteq X_1$ or $Y_2 \subseteq Y_1$ and $X_1 \subseteq X_2$.} separations $(X_1,Y_1),(X_2,Y_2)$ of $G$, a non-empty set $Z_1\subseteq V(C_1) \cap V(\cupall\mathcal{P})$, and a non-empty set $Z_2\subseteq V(C_p) \cap V(\cupall \mathcal{P})$ such that:
\begin{itemize}
	\item $V(\mathcal{A})\subseteq Y_1\cap X_2,$
	\item  $Z_1\subseteq X_1\cap Y_1\subseteq V(C_1),$
	\item  $Z_2\subseteq X_2\cap Y_2\subseteq V(C_p),$ and
	\item if $\Omega_1$ (resp.
$\Omega_2$) is the cyclic ordering of the vertices $X_1\cap Y_1$ (resp.
$X_2\cap Y_2$) as they appear in the cyclic ordering of $C_1$ (resp.
$C_p$),
then there is an $(\Omega_1,\Omega_2)$-rendition $(\Gamma, \sigma, \pi)$ of $G[Y_1\cap X_2]$.
\end{itemize}
We say that $\mathcal{A}$ is a \emph{flat railed annulus}
of $G$ if it is a flat $(p,q)$-railed annulus for some $p,q\in\mathbb{N}_{\geq 3}$.

\paragraph{Railed annuli flatness pairs.}
Given the above, we  say that  the choice of the 9-tuple $\mathfrak{R}=(X_1,Y_1,X_2,Y_2,Z_1,Z_2,\Gamma,\sigma,\pi)$
\emph{certifies that $\mathcal{A}$ is a flat railed annulus of $G$}.
We call the pair $(\mathcal{A},\mathfrak{R})$ a \emph{railed annulus flatness pair} of $G$.
We use the term \emph{cell of} $\mathfrak{R}$ in order to refer to the cells of $\Gamma.$

\begin{figure}[ht]
\begin{center}
\includegraphics[scale=1]{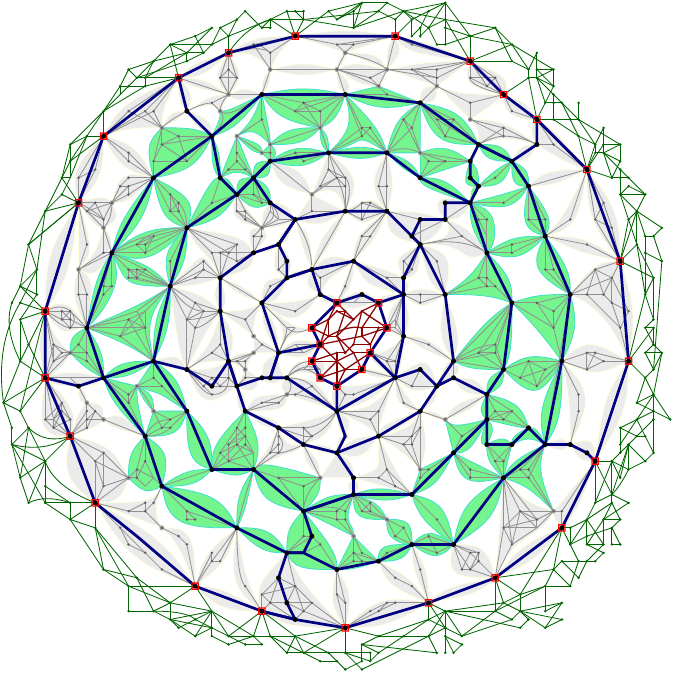}
\end{center}
\caption{A graph $G$ and a $(6,5)$-railed annulus flatness pair $(\mathcal{A},\mathfrak{R})$ of $G$.
The $(6,5)$-railed annulus $\mathcal{A}$ is depicted in deep blue, where we assume that the ``outer'' cycle in the figure is $C_1$ and the ``inner'' cycle is $C_6$.
The vertices in $X_1\cap Y_1$ and $X_2\cap Y_2$ are the ``outer'' and the ``inner'' red squared vertices, respectively.
The green vertices are the vertices in $X_1\setminus Y_1$ and the red vertices are the vertices in $Y_2\setminus X_2$.
The grey and green disks are the cells of $\mathfrak{R}$ and $\mathsf{Influence}_\mathfrak{R}([C_2,C_3])$ is the set of all graphs drawn in the green cells.
}\label{fig:railedann}
\end{figure}

We call the graph $G[Y_1\cap X_2]$
the \emph{$\mathfrak{R}$-compass} of $\mathcal{A}$ in $G,$
denoted by ${\sf Compass}_{\mathfrak{R}}(\mathcal{A}).$
We also use $\mathsf{inn}_\mathfrak{R}(\mathcal{A})$ to denote the set $Y_2$.
It is easy to see that there is a connected component of ${\sf Compass}_{\mathfrak{R}}(\mathcal{A})$ that contains the cycles and the paths of $\mathcal{A}$ as subgraphs.
We can assume that ${\sf Compass}_{\mathfrak{R}} (\mathcal{A})$ is connected, updating $\mathfrak{R}$ by removing from $Y_1\cap X_2$ the vertices of all the connected components of ${\sf Compass}_\mathfrak{R} (\mathcal{A})$
except for the one that contains $\mathcal{A}$, and including them in $X_1$ ($\Gamma, \sigma, \pi$ can also be easily modified according to the removal of the aforementioned vertices from $X_1\cap Y_2$).
We define the  \emph{flaps} of the railed annulus $\mathcal{A}$ in $\mathfrak{R}$ as
${\sf Flaps}_{\mathfrak{R}}(\mathcal{A}):=\{\sigma(c)\mid c\in \mathsf{cells}(\Gamma)\}.$
Given a flap $F\in {\sf Flaps}_{\mathfrak{R}}(\mathcal{A}),$ we define its \emph{base}
as $\partial F:=V(F)\cap \pi(N(\Gamma)).$

\subsection{Influence of sequences of cycles in flat railed annuli}
\label{subsec:influence}

Let $G$ be a graph and let $(\mathcal{A},\mathfrak{R})$ be a railed annulus flatness pair of $G$.
Given a cycle $C$ of $\mathsf{Compass}_{\mathfrak{R}}(\mathcal{A}),$ we say that
$C$ is \emph{$\mathfrak{R}$-normal} if it is \textsl{not} a subgraph of a flap $F\in \mathsf{Flaps}_{\mathfrak{R}}(\mathcal{A}).$
Note that every cycle $C$ of $G$ whose edge set is a subset of $E(\mathcal{A})$ is $\mathfrak{R}$-normal, since it has at least four vertices and $\mathcal{A}$ contains at least four disjoint paths connecting these vertices to the ``first'' and ``last'' cycles $C_1,C_p$ of $\mathcal{A}$.

Given an $\mathfrak{R}$-normal cycle $C$ of $\mathsf{Compass}_{\mathfrak{R}}(\mathcal{A}),$
we call a cell $c$ of $\mathfrak{R}$ \emph{$C$-perimetric} if
$\sigma(c)$ contains some edge of $C.$
We denote by $\Delta$ the closed annulus in the $(\Omega_1,\Omega_2)$-rendition of $\mathsf{Compass}_{\mathfrak{R}}(\mathcal{A}).$
A cell $c$ of $\mathfrak{R}$ is called \emph{$C$-external} if
it is contained in an arcwise connected component of $\Delta\setminus \cupall\{c\in \mathsf{cells}(\Gamma)\mid \mbox{$c$ is a $C$-perimetric cell of $\mathfrak{R}$}\}$
that intersects the boundary of $\Delta$ that contains $V(C_1)$.
A cell $c$ of $\mathfrak{R}$ is called \emph{$C$-internal} if it is neither $C$-perimetric nor $C$-external.
Given an $\mathfrak{R}$-normal cycle $C$ of $\mathsf{Compass}_{\mathfrak{R}}(\mathcal{A})$ we define the set
$$\mathsf{Influence}_{\mathfrak{R}}(C)=\{\sigma(c)\mid \mbox{$c$ is a cell of $\mathfrak{R}$ that is either $C$-perimetric or $C$-internal}\}.$$
Given a subsequence $\mathcal{C}' =[C_1',\ldots,C_r']$ of the cycles of $\mathcal{A}$,
we define $\mathsf{Influence}_{\mathfrak{R}}(\mathcal{C}')$
as the set $\mathsf{Influence}_{\mathfrak{R}}(C_1')\setminus \{\sigma(c)\mid \text{$c$ is a cell of $\mathfrak{R}$ that is $C_r'$-internal}\}$. See~\Cref{fig:railedann}.

\subsection{Levelings and well-aligned flat railed annuli}
\label{subsec_levelings}

Let $G$ be a graph and let $(\mathcal{A},\mathfrak{R})$ be a railed annulus flatness pair of $G.$
We define
the \emph{ground set} of $\mathsf{Compass}_{\mathfrak{R}}(\mathcal{A})$
to be the set ${\sf ground}_{\mathfrak{R}}(\mathcal{A}):=\bigcup_{F\in\mathsf{Flaps}_{\mathfrak{R}}(\mathcal{A})} \partial F$
and we refer to the vertices of this set as the \emph{ground vertices} of  $\mathsf{Compass}_{\mathfrak{R}}(\mathcal{A})$ in $G.$
Similarly, given a subsequence $\mathcal{C}'$
of the cycles of $\mathcal{A}$,
we define the \emph{ground set} of $\mathsf{Influence}_{\mathfrak{R}}(\mathcal{C}')$
to be the set ${\sf ground}_{\mathfrak{R}}(\mathcal{C}'):=\bigcup_{F\in\mathsf{Influence}_{\mathfrak{R}}(\mathcal{C}')} \partial F$,
and we refer to the vertices of this set as the \emph{ground vertices} of  $\mathsf{Influence}_{\mathfrak{R}}(\mathcal{C}')$ in $G.$
Note  that ${\sf ground}_{\mathfrak{R}}(\mathcal{A})$ and ${\sf ground}_{\mathfrak{R}}(\mathcal{C}')$ may  contain vertices
of ${\sf Compass}_{\mathfrak{R}}(\mathcal{A})$ that are not necessarily vertices in $V(\mathcal{A}).$

\paragraph{Levelings.}
We define  the $\mathfrak{R}$-\emph{leveling} of $\mathcal{A}$ in $G,$
denoted by ${\mathcal{A}}_{\mathfrak{R}},$ as the bipartite graph
where  one part is the ground set of $\mathsf{Compass}_{\mathfrak{R}}(\mathcal{A})$, the  other part is a set ${\sf vflaps}_{\mathfrak{R}}(\mathcal{A})=\{v_{F}\mid F\in\mathsf{Flaps}_{\mathfrak{R}}(\mathcal{A})\}$ containing one new vertex $v_{F}$ for each flap  $F$ in $\mathsf{Flaps}_{\mathfrak{R}}(\mathcal{A})$,
and, given  a pair $(x,F)\in {\sf ground}_{\mathfrak{R}}(\mathcal{A})\times \mathsf{Flaps}_{\mathfrak{R}}(\mathcal{A}),$
the set $\{x,v_F\}$ is an edge of $\mathcal{A}_\mathfrak{R}$ if and only if
$x\in \partial F.$
We call the vertices of ${\sf ground}_{\mathfrak{R}}(\mathcal{A})$ (resp.
${\sf vflaps}_{\mathfrak{R}}(\mathcal{A})$) \emph{ground-vertices} (resp.
\emph{flap-vertices}) of $\mathcal{A}_{\mathfrak{R}}.$

Also, given a subsequence $\mathcal{C}'$ of the cycles of $\mathcal{A}$, we define  the $(\mathfrak{R},\mathcal{C}')$-\emph{leveling} of $\mathcal{A}$ in $G,$
denoted by ${\mathcal{A}}_{\mathfrak{R}}^{\langle \mathcal{C}'\rangle},$ as the bipartite graph
where  one part is the ground set of $\mathsf{Influence}_{\mathfrak{R}}(\mathcal{C}')$, the  other part is the set ${\sf vflaps}_{\mathfrak{R}}(\mathcal{C}')=\{v_{F}\mid F\in\mathsf{Influence}_{\mathfrak{R}}(\mathcal{C}')\}$ containing one new vertex $v_{F}$ for each flap  $F$ in $\mathsf{Influence}_{\mathfrak{R}}(\mathcal{C}')$,
and, given  a pair $(x,F)\in {\sf ground}_{\mathfrak{R}}(\mathcal{C}')\times \mathsf{Influence}_{\mathfrak{R}}(\mathcal{C}'),$
the set $\{x,v_F\}$ is an edge of ${\mathcal{A}}_{\mathfrak{R}}^{\langle\mathcal{C}'\rangle}$ if and only if
$x\in \partial F.$
We call the vertices of ${\sf ground}_{\mathfrak{R}}(\mathcal{C}')$ (resp.
${\sf vflaps}_{\mathfrak{R}}(\mathcal{C}')$) \emph{ground-vertices} (resp.
\emph{flap-vertices}) of $\mathcal{A}_{\mathfrak{R}}^{\langle\mathcal{C}'\rangle}.$
Note that  $\mathcal{A}_{\mathfrak{R}}^{\langle\mathcal{C}'\rangle}=\mathcal{A}_\mathfrak{R}[\mathsf{Influence}_{\mathfrak{R}}(\mathcal{C}') \cup {\sf vflaps}_{\mathfrak{R}}(\mathcal{C}')]$.

\paragraph{Well-aligned railed annulus flatness pairs.}
An edge $e$ of ${\sf Compass}_{\mathfrak{R}}(\mathcal{A})$ is \emph{short} if there is some flap $F\in\mathsf{Flaps}_\mathfrak{R}(\mathcal{A})$ whose vertex set is the endpoints of $e$, and $e$ is its only edge.
{We denote by $\mathcal{A}^{\bullet}$ the  graph obtained from $\mathcal{A}$ if we subdivide \emph{once} every
edge in $E(\mathcal{A})$ that is short in ${\sf Compass}_{\mathfrak{R}}(\mathcal{A}).$}
The graph $\mathcal{A}^\bullet$ is a ``slightly richer variant'' of $\mathcal{A}$  that is necessary for our definitions and  proofs, namely to be able to associate  every flap-vertex of  an appropriate subgraph of $\mathcal{A}_{\mathfrak{R}}$ (that we will denote by $R_{\mathcal{A}}$) with  a non-empty path of $\mathcal{A}^\bullet,$ as we proceed to formalize.
We say that $(\mathcal{A},\mathfrak{R})$ is \emph{well-aligned} if the following holds:
\begin{quote}
	$\mathcal{A}_{\mathfrak{R}}$ contains as a subgraph a railed annulus $R_\mathcal{A}$
	such that $\mathcal{A}^{\bullet}$ is isomorphic to some subdivision of $R_\mathcal{A}$
	via an isomorphism that maps each ground vertex to itself.
\end{quote}
Suppose now that the railed annulus flatness pair $(\mathcal{A},\mathfrak{R})$ is well-aligned.
We call the wall  $R_\mathcal{A}$ in the above condition  a \emph{representation} of $\mathcal{A}$ in $\mathcal{A}_{\mathfrak{R}}.$ Note that, as $R_\mathcal{A}$ is a subgraph of $\mathcal{A}_{\mathfrak{R}},$ it is bipartite as well.
The above property gives us a way to represent a flat railed annulus by a railed annulus of its leveling,
in a way that ground vertices are not altered.
Note that both $\mathcal{A}_{\mathfrak{R}}$ and its subgraph $R_\mathcal{A}$ can be seen as $\Delta$-embedded graphs, and there is a bijection $\delta$ from the set of cycles of $\mathcal{A}$  to the set of cycles of $R_\mathcal{A}.$
Moreover, observe that, given a subsequence $\mathcal{C}'$ of the cycles of $\mathcal{A}$,
since all the cycles in $\mathcal{C}'$ are subgraphs of $\cupall {\sf Influence}_{\mathfrak{R}}(\mathcal{C}')$,
 the graph $\mathcal{A}_{\mathfrak{R}}^{\langle\mathcal{C}'\rangle}$ contains a sequence of $|\mathcal{C}'|$ cycles $\mathcal{C}''$ that is a
subsequence of the cycles of the railed annulus $R_\mathcal{A}$.

\paragraph{The graph ${\sf Leveling}_{(\mathcal{A},\mathfrak{R})}(G)$.}
Let $G$ be a graph and let $(\mathcal{A},\mathfrak{R})$ be a well-aligned railed annulus flatness pair of $G$.
We set ${\sf Leveling}_{(\mathcal{A},\mathfrak{R})}(G)$ to be the graph obtained from $G$ after replacing ${\sf Compass}_{\mathfrak{R}}(\mathcal{A})$ with $\mathcal{A}_{\mathfrak{R}}$
and, given a subsequence $\mathcal{C}'$ of the cycles of $\mathcal{A}$,
we set ${\sf Leveling}_{(\mathcal{A},\mathfrak{R})}^{\langle\mathcal{C}'\rangle}(G)$ to be the graph obtained from $G$ after replacing $\cupall{\sf Influence}_{\mathfrak{R}}(\mathcal{C}')$ with $\mathcal{A}_{\mathfrak{R}}^{\langle\mathcal{C}'\rangle}$.

\subsection{Finding a flat railed annulus}
\label{subsec:findingflat}

We now formally define the notion of treewidth and of a tree decomposition.
A \emph{tree decomposition} of a graph~$G$
is a pair~$(T,\chi)$ where $T$ is a tree and $\chi: V(T)\to 2^{V(G)}$
such that
\begin{enumerate}
	\item $\bigcup_{t \in V(T)} \chi(t) = V(G),$
	\item for every edge~$e$ of~$G$ there is a $t\in V(T)$ such that
	      $\chi(t)$
	      contains both endpoints of~$e,$ and
	\item for every~$v \in V(G),$ the subgraph of~${T}$
	      induced by $\{t \in V(T)\mid {v \in \chi(t)}\}$ is connected.
\end{enumerate}
The \emph{width} of $(T,\chi)$ is defined as
$\textsf{w}(T,\chi):=
	\max\big\{\left|\chi(t)\right|-1 \bigmid t\in V(T)\big\}.$
The \emph{treewidth of $G$} is defined as
$$\tw(G):=\min\big\{\textsf{w}(T,\chi) \bigmid (T,\chi) \text{ is a tree decomposition of }G\big\}.$$

The following result is a consequence of the version of the \textsl{Flat Wall theorem}~\cite{SauST24amor,RobertsonS95XIII,KawarabayashiTW18anew,Chuzhoy15impr,GiannopoulouT13opti,KawarabayashiKR12thedis}
appeared in~\cite[Theorem~8]{SauST24amor}, stated for (well-aligned) flat railed annuli.
A (flat) railed annulus can be obtained from a slightly larger (flat) wall given by~\cite[Theorem~8]{SauST24amor} -- see~\cite[Proposition 5.1]{BasteST20acom}.
Well-alignedness of the railed annulus flatness pair can be achieved because of the regularity (and therefore well-alignedness -- see~\cite[Lemma 16]{SauST24amor}) of the flat wall given in~\cite[Theorem 8]{SauST24amor}.

\begin{proposition}\label{prop:flatann}
There are functions $\newfun{label_confrontation},\newfun{label_hierarchical}:\mathbb{N}\to \mathbb{N}$ and
an algorithm that receives as input a graph $G$, a $p\in\mathbb{N}_{\geq 3}$, and a $t\in\mathbb{N}_{\geq 1}$,
and outputs, in time $2^{{\cal O}_{t}(p^2)}\cdot n$ time, one of the following:
\begin{itemize}
\item a report  that $K_{t}$ is a minor of $G,$ or

\item a tree decomposition of $G$ of width at most $\funref{label_confrontation}(t)\cdot p,$ or

\item a set $A\subseteq V(G)$,  where $|A|\leq \funref{label_hierarchical}(t),$ a well-aligned $(p,p)$ railed annulus flatness pair $(\mathcal{A},\mathfrak{R})$ of $G\setminus A$,
and a tree decomposition of $G[V(\mathsf{Compass}_\mathfrak{R}(\mathcal{A}))\cup\mathsf{inn}_\mathfrak{R}(\mathcal{A})]$
of width at most $\funref{label_confrontation}(t)\cdot p.$

\end{itemize}
Moreover, $\funref{label_confrontation}(t)=2^{\mathcal{O}(t^2 \log t)}$ and $\funref{label_hierarchical}(t) =  \mathcal{O}(t^{24})$.
\end{proposition}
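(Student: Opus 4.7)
The plan is to derive the statement from the algorithmic Flat Wall theorem \cite[Theorem~8]{SauST24amor}, which already provides the same three-way dichotomy (minor report, global tree decomposition, or flat wall with bounded apex set) with the same time complexity $2^{\mathcal{O}_t(p^2)}\cdot n$. The task essentially reduces to (i) converting a flat wall into a well-aligned flat $(p,p)$-railed annulus carrying the same flatness witness, and (ii) transferring the tree decomposition of the compass without loss.

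First I would invoke \cite[Theorem~8]{SauST24amor} on $G$ with a blown-up wall parameter $p':=c(t)\cdot p$ for a sufficiently large constant $c(t)$ (needed to harvest $p$ concentric cycles together with $p$ orthogonal rails). If the first or second outcome occurs, we return a $K_t$-minor report or a tree decomposition of $G$ of width at most $\funref{label_confrontation}(t)\cdot p$ and stop. Otherwise, the theorem returns an apex set $A\subseteq V(G)$ with $|A|\le \funref{label_hierarchical}(t)$, a flat $p'$-wall $W$ of $G\setminus A$ together with its flatness witness (separations $(X_1,Y_1),(X_2,Y_2)$, an annular/disk rendition, etc.), and a tree decomposition of the compass of $W$ of width at most $\funref{label_confrontation}(t)\cdot p$.

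Next, I would apply \cite[Proposition~5.1]{BasteST20acom} to extract from the flat wall $W$ a $(p,p)$-railed annulus $\mathcal{A}=(\mathcal{C},\mathcal{P})$ inside the compass of $W$, choosing $c(t)$ so that $p'$ is large enough for the extraction. Because $\mathcal{A}$ sits inside $\mathsf{Compass}(W)$, the ambient flatness witness of $W$ restricts to a 9-tuple $\mathfrak{R}=(X_1,Y_1,X_2,Y_2,Z_1,Z_2,\Gamma,\sigma,\pi)$ certifying that $\mathcal{A}$ is a flat railed annulus of $G\setminus A$: the two boundaries of the annular rendition carry the cyclic orderings inherited from the first and last cycles of $\mathcal{A}$, and the painting/projection pair $(\Gamma,\sigma,\pi)$ is the restriction of the one given for $W$. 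For well-alignedness, I would use the fact that \cite[Theorem~8]{SauST24amor} delivers a \emph{regular} flat wall, and that regularity implies well-alignedness via \cite[Lemma~16]{SauST24amor}; the railed annulus extracted from a regular wall inherits the representation property that $\mathcal{A}^{\bullet}$ embeds into the leveling $\mathcal{A}_{\mathfrak{R}}$ as a subdivision of some subgraph $R_{\mathcal{A}}$, with ground vertices fixed.

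The main obstacle is bookkeeping rather than combinatorics: I must verify that the parameter blow-up $p\mapsto c(t)\cdot p$ preserves the announced time bound $2^{\mathcal{O}_t(p^2)}\cdot n$, that the apex bound $\mathcal{O}(t^{24})$ and the treewidth bound $2^{\mathcal{O}(t^2\log t)}\cdot p$ propagate exactly (they do, since they come directly from \cite[Theorem~8]{SauST24amor} with the same $t$ and an $\mathcal{O}(p)$ wall size), and that the tree decomposition of $\mathsf{Compass}(W)\cup\mathsf{inn}(W)$ produced in the flat-wall outcome remains a valid tree decomposition of $G[V(\mathsf{Compass}_{\mathfrak{R}}(\mathcal{A}))\cup \mathsf{inn}_{\mathfrak{R}}(\mathcal{A})]$ after passing from $W$ to $\mathcal{A}$; this is immediate since $\mathsf{Compass}_{\mathfrak{R}}(\mathcal{A})\cup\mathsf{inn}_{\mathfrak{R}}(\mathcal{A})\subseteq \mathsf{Compass}(W)\cup\mathsf{inn}(W)$, so we simply intersect each bag with the smaller vertex set. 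Setting $\funref{label_confrontation}(t)=2^{\mathcal{O}(t^2\log t)}$ and $\funref{label_hierarchical}(t)=\mathcal{O}(t^{24})$ as in \cite[Theorem~8]{SauST24amor} then completes the argument.
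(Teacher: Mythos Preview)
Your proposal is correct and follows essentially the same approach as the paper: invoke \cite[Theorem~8]{SauST24amor} for the flat wall, convert to a flat railed annulus via \cite[Proposition~5.1]{BasteST20acom}, and obtain well-alignedness from regularity through \cite[Lemma~16]{SauST24amor}. The paper in fact presents no formal proof beyond citing exactly these three ingredients, so your write-up is, if anything, more detailed than what appears there.
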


We stress that~\Cref{prop:flatann} is the main building block of the algorithm $\mathbb{A}$ in the \textsl{Local-Global-Irrelevancy Condition} of the reduction between
$(\mathsf{hw},\mathsf{CMSO}/\tw+\!\mathsf{dp})$ and $(\mathsf{tw},\mathsf{CMSO})$. In fact, the algorithm $\mathbb{A}$ uses~\Cref{prop:flatann} to find a well-aligned railed annulus from which it constructs a new colored graph; see~\Cref{sec_enhancedstruct} and~\Cref{subsec:itema}.

\subsection{Linkages in railed annuli}
\label{subsec:linkages}

A \emph{linkage} in a graph $G$ is a subgraph $L$ of $G$ whose connected components are paths of at least one edge.
The \emph{paths} of a linkage are its connected components and we denote them by $\mathcal{P}(L)$.
The \emph{size} of $L$ is the number of its paths, i.e., $|\mathcal{P}(L)|$.
The \emph{terminals} of a linkage $L$, denoted by $T(L)$, are the endpoints of the paths of $L$, and the \emph{pattern} of $L$ is the set $\{\{s,t\} \mid \mathcal{P}(L) \text{ contains some $(s,t)$-path}\}$.
Two linkages $L_1,L_2$ of $G$ are \emph{equivalent}, which we denote by $L_1 \equiv L_2$, if they have the same pattern.
By definition, given two linkages $L_1,L_2$ such that $V(L_1)\cap V(L_2) = T(L_1)\cap T(L_2)$, the graph $L_1\cup L_2$ is a linkage and $T(L_1\cup L_2)$ is the symmetric difference of $T(L_1)$ and $T(L_2)$, i.e.,  $T(L_1\cup L_2) = (T(L_1)\setminus T(L_2)) \cup (T(L_2)\setminus T(L_1)).$
In other words, when taking the union of two linkages, the resulting vertices of degree two are not considered as terminals anymore.

The proof of~\Cref{lem_levelingpaths} is presented in~\cite[Lemma 12]{GolovachST22model_arXiv} using different notation.

\begin{proposition}\label{lem_levelingpaths}
Let $G$ be a graph, let $(\mathcal{A},\mathfrak{R})$ be a well-aligned railed annulus flatness pair of $G$,
and let $\mathcal{C}'$ be a subsequence of the cycles of $\mathcal{A}$.
Also, let  $s_1,t_1,\ldots, s_k,t_k\in V(G)$ such that for every $i\in[k]$, $s_i,t_i\notin V(\cupall{\sf Influence}_{\mathfrak{R}}(\mathcal{C}'))$.
Then there is a linkage $L$ in $G$ such that $T(L)=\{s_1,t_1,\ldots, s_k,t_k\}$ if and only if there is a linkage $L'$ in ${\sf Leveling}_{(\mathcal{A},\mathfrak{R})}^{\langle\mathcal{C}'\rangle}(G)$ such that $L\equiv L'$.
\end{proposition}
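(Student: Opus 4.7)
The plan is to establish a two-way correspondence between linkages in $G$ and linkages in $H:={\sf Leveling}_{(\mathcal{A},\mathfrak{R})}^{\langle\mathcal{C}'\rangle}(G)$ by rerouting the portions that lie inside the flap structure, observing that $G$ and $H$ agree outside $\cupall {\sf Influence}_{\mathfrak{R}}(\mathcal{C}')$, where all terminals live.

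For the forward direction ($\Rightarrow$), given a linkage $L$ in $G$ with $T(L)=\{s_1,t_1,\ldots,s_k,t_k\}$, I process each path $P\in\mathcal{P}(L)$ individually. Since no terminal of $L$ lies in $\cupall {\sf Influence}_{\mathfrak{R}}(\mathcal{C}')$, every maximal subpath of $P$ whose internal vertices are interior vertices of some flap $F\in{\sf Influence}_{\mathfrak{R}}(\mathcal{C}')$ must start at some $a\in\partial F$ and end at some $b\in\partial F$. I replace each such subpath with the length-two walk $a\,v_F\,b$ in $H$. The result is a walk in $H$ with the same endpoints as $P$, which I reduce to a simple path by short-cutting possible repeated vertices. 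Performing this operation for every path of $L$ produces a set $L'$ of paths with the same pattern as $L$. For the backward direction ($\Leftarrow$), given a linkage $L'$ in $H$, I replace each subwalk of the form $a\,v_F\,b$ (with $a,b\in\partial F$) on a path of $L'$ by an $(a,b)$-path inside $F$; such a path exists thanks to the tightness condition built into the definition of a rendition (every two vertices in $\pi(\tilde c)$ belong to some path of $\sigma(c)$). Short-cutting gives the desired linkage $L$.

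The crux is vertex-disjointness of the constructed linkages, which rests on the single combinatorial fact that $|\partial F|\leq 3$ for every flap $F$ (from the definition of a $\Delta$-painting). A path that ``internally traverses'' $F$ must use at least two distinct vertices of $\partial F$ as entry and exit, and also cannot revisit them, so at most one path of any linkage traverses a given $F$ (otherwise four distinct vertices of $\partial F$ would be consumed). Hence in the forward construction every $v_F$ is used at most once, and in the backward construction the substituted internal $(a,b)$-paths reside in pairwise interior-disjoint flaps (by property~(4) of the rendition, flaps share only base vertices), so they are automatically vertex-disjoint. Paths of $L$ that merely touch a ground vertex $x\in\partial F$ through edges belonging to other flaps are left unchanged in their interaction with $x$, since $x$ is also a vertex of $H$.

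The main obstacle is the delicate case analysis distinguishing between a path truly entering the interior of a flap and one that only touches $\partial F$ through neighbouring flaps, combined with verifying that short-cutting preserves endpoints and hence the pattern. Once this is carried out carefully, both constructions evidently preserve $T(L)$ and the pattern, so $L\equiv L'$, yielding the claimed equivalence. A more detailed presentation of the same argument, in slightly different notation, is given in~\cite[Lemma~12]{GolovachST22model_arXiv}.
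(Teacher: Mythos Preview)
The paper gives no proof of this proposition beyond the single sentence ``The proof of~\autoref{lem_levelingpaths} is presented in~\cite[Lemma 12]{GolovachST22model_arXiv} using different notation.'' You cite the very same reference, so in that respect your proposal coincides with the paper's treatment.

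Your accompanying sketch, however, has a genuine gap in the $(\Leftarrow)$ direction. When you replace a subwalk $a\,v_F\,b$ by an $(a,b)$-path inside $F$, tightness guarantees only that \emph{some} such path exists; it does not guarantee one that avoids the third boundary vertex $c\in\partial F$. Nothing in the definition of a rendition forbids $c$ from being a cut vertex of $F$ separating $a$ from $b$, and in that case every $(a,b)$-path in $F$ is forced through $c$. If a second path of $L'$ traverses $c$ via neighbouring flap-vertices without touching $v_F$ --- a configuration your own last paragraph explicitly allows --- the substituted segment collides with it at $c$. Your disjointness justification (``substituted paths reside in pairwise interior-disjoint flaps'') only rules out conflicts between two \emph{substituted} segments living in different flaps; it says nothing about a substituted segment meeting another path at a shared ground vertex, and short-cutting each walk individually cannot repair a vertex shared by two distinct walks. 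Handling these boundary collisions is precisely the work carried out in the cited lemma, so deferring to it is the right move; but the sketch as written does not stand on its own for this direction.
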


\subsection{Bounded annotated treewidth sets are local inside flat railed annuli}
\label{subsec:buffer}
A crucial element of our proofs is the following lemma,
which states that in a large enough flat railed annulus,
sets of small annotated treewidth ``avoid'' the influence of a still large enough sequence of cycles of the railed annulus.

\begin{lemma}\label{lemma:buffer}
There is a function $\newfun{fun:bid}:\mathbb{N}\to\mathbb{N}$,
whose images are even numbers,
such that for every $t\in\mathbb{N}$,
if $G$ is a graph,
$A\subseteq V(G)$,
$(\mathcal{A},\mathfrak{R})$ is an $(\funref{fun:bid}(t) \cdot p,q)$-railed annulus flatness pair of $G\setminus A$,
where $p,q\in\mathbb{N}_{\geq 3}$,
then for every set $X\subseteq V(G)$ such that $\tw(G,X)\leq t$,
there exists a subsequence $\mathcal{C}'$ of the cycles of $\mathcal{A}$ of size $p$
such that $X$ does not intersect $V(\cupall\mathsf{Influence}_\mathfrak{R}(\mathcal{C}'))$.
\end{lemma}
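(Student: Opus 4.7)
We argue by contradiction, using the excluded-$X$-rooted-grid characterization of annotated treewidth from~\cite{ThilikosW23excl}. Concretely, set $\funref{fun:bid}(t):=2k(t)$, where $k(t)$ is an even integer (polynomial in $t$) with $k(t)>t$ such that containment of an $X$-rooted $(k(t)\times k(t))$-grid minor in $G$ forces $\tw(G,X)>t$.

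Enumerate the cycles of $\mathcal{A}$ in radial order and partition them into $\funref{fun:bid}(t)$ consecutive strips $B_i:=[C_{(i-1)p+1},\ldots,C_{ip}]$, $i\in[\funref{fun:bid}(t)]$. By the definition of $\mathsf{Influence}_{\mathfrak{R}}$, the sets $V(\cupall\mathsf{Influence}_{\mathfrak{R}}(B_i))$ are pairwise disjoint; moreover, for any subsequence $\mathcal{C}'$ of $p$ cycles of $\mathcal{A}$, its influence contains the influence of some consecutive subsequence of length $p$. Hence it is enough to find some $B_i$ with $X\cap V(\cupall\mathsf{Influence}_{\mathfrak{R}}(B_i))=\emptyset$.

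Suppose, for contradiction, that $X\cap V(\cupall\mathsf{Influence}_{\mathfrak{R}}(B_i))\neq\emptyset$ for every $i\in[\funref{fun:bid}(t)]$, and pick $x_i$ in this intersection. Using the structure of $\mathcal{A}$, we construct an $X$-rooted minor $H$ of $G$ of treewidth strictly greater than $t$, contradicting $\tw(G,X)\leq t$. Group the strips into $k(t)$ disjoint consecutive pairs; for each pair $(B_{2j-1},B_{2j})$ choose a connected subgraph $S_j\subseteq \cupall\mathsf{Influence}_{\mathfrak{R}}(B_{2j-1})\cup\cupall\mathsf{Influence}_{\mathfrak{R}}(B_{2j})$ together with a short rail-segment joining the two strips, so that $S_j$ is connected and contains both $x_{2j-1}$ and $x_{2j}$. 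The $S_j$'s are pairwise vertex-disjoint; consecutive $S_j,S_{j+1}$ are joined through the $q\geq 3$ rails of $\mathcal{A}$, while the cycles of $\mathcal{A}$ within each pair of strips provide transverse connections. After suitable contractions one obtains an $X$-rooted minor $H$ of $G$ whose underlying graph contains a $(k(t)\times k(t))$-grid (or a cylindrical variant thereof) as a subgraph, so $\tw(H)\geq k(t)>t$, a contradiction. Hence some $B_i$ avoids $X$, and we take $\mathcal{C}':=B_i$.

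\textbf{Main obstacle.} The delicate step is constructing $H$ so that it is genuinely \emph{$X$-rooted}—every vertex-model contains a vertex of $X$—while simultaneously containing the high-treewidth grid-substructure. With only the single guaranteed $X$-vertex $x_i$ per strip, each row-vertex-model of the target grid must absorb the $x_i$'s supplied by the strip(s) it spans, whereas column-adjacencies arise from the rails and cycles of $\mathcal{A}$. This is precisely what fixes $\funref{fun:bid}(t)\approx 2k(t)$: the factor $k(t)$ gives the grid's height (one row per pair of strips), while the factor $2$ buys enough $X$-vertices per row to populate all row-models. The polynomial dependence $k(t)=\mathrm{poly}(t)$ is inherited from the excluded-$X$-rooted-grid bound of~\cite{ThilikosW23excl}; an alternative, essentially equivalent, presentation phrases the argument directly through an annotated tangle certified by the $\funref{fun:bid}(t)$ strips that all meet $X$.
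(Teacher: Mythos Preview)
Your overall strategy---partition the cycles into strips and derive a contradiction if $X$ meets every strip---matches the paper's. The gap is in the contradiction step.

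You propose to build an $X$-rooted $(k(t)\times k(t))$-grid minor directly from the $\funref{fun:bid}(t)=2k(t)$ chosen points $x_i$. But an $X$-rooted minor requires \emph{every} vertex-model to contain a vertex of $X$, so a $(k(t)\times k(t))$-grid needs at least $k(t)^2$ vertices of $X$, whereas you have only $2k(t)$ of them. Your ``Main obstacle'' paragraph notices the difficulty but does not resolve it: saying that ``the factor~$2$ buys enough $X$-vertices per row to populate all row-models'' cannot work, since a row of the target grid has $k(t)$ models, not two. Collapsing each $S_j$ to a single vertex gives a path (or cycle) of length $k(t)$, not a grid; and the rails/cycles of $\mathcal{A}$ that you invoke for ``column-adjacencies'' produce vertex-models that contain no vertex of $X$ at all, so the resulting minor is not $X$-rooted.

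The paper handles this via the bidimensionality-type lemma of Demaine, Fomin, Hajiaghayi, and Thilikos~\cite{demaine2004bidimensional}: a large grid containing $m$ marked vertices in a sufficiently central region contains a marked (i.e., $X$-rooted) grid minor of side roughly $\sqrt{m}$. One first observes that the flat railed annulus contains a large grid minor in which each strip contributes one marked vertex of $X$; then the lemma yields an $X$-rooted grid of side $\Theta(\sqrt{\funref{fun:bid}(t)})$. To force treewidth $>t$ one therefore needs $\funref{fun:bid}(t)$ roughly quadratic in $k(t)$, not linear. Your choice $\funref{fun:bid}(t)=2k(t)$ is too small, and no direct construction of the sort you sketch can close the gap without this redistribution argument (or the alternative via annotated brambles mentioned in the paper).
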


The proof of~\Cref{lemma:buffer} is a consequence of the fact that after partitioning the set of cycles $\mathcal{C}$ of a flat $(\funref{fun:bid}(t) \cdot p,q)$-railed annulus $(\mathcal{A},\mathfrak{R})$ into $\funref{fun:bid}(t)$ sequences of $p$ consecutive cycles  $\mathcal{C}_i$,
if a set $X$ intersects, for every $i$, the influence of $\mathcal{C}_i$, then we can find a large enough (as a function of $t$) grid as an $X$-minor of $G$.
This is justified by~\cite[Lemma 3.1]{demaine2004bidimensional}, which intuitively states that if $\Gamma$ is a ``sufficiently large'' grid and a set $X$ contains ``many enough'' vertices of some ``sufficiently central'' subgrid of $\Gamma$, then $\Gamma$ contains a still large enough grid as an $X$-minor.
In our setting, this can be applied after focusing on ``sufficiently central'' collections $\mathcal{C}_i$.
Therefore, if $G$ contains a large grid as an $X$-minor, via the \textsl{grid-exclusion theorem}~\cite{RobertsonS86GMV,ChuzhoyT21towa},
$\tw(G,X)$ should also be large. See also the proof of~\cite[Lemma 30]{FominGSST21comp_arxiv} for an alternative approach involving brambles.
We omit the proof of~\Cref{lemma:buffer} as it follows from the above standard arguments.

\section{Building the new structures and formulas}
\label{sec_enhancedstruct}
This section contains some preliminary definitions and results used in order to show \textsl{Local-Global-Irrelevancy Condition} of the reduction between
$(\mathsf{CMSO/tw}\!+\!\mathsf{dp},\mathsf{hw})$ and $(\mathsf{CMSO},\mathsf{tw})$,
as well as the definitions of the colored-graph vocabulary $\sigma_{d,l}$ and formulas $\Phi_{d,l}$ needed in this condition (see~\Cref{def_red}).
The structures of the aimed colored-graph vocabulary ($\sigma_{d,l}$) that we consider are built from a flat railed annulus.
Since such a railed annulus of a given graph may be flat only after the removal of a set $A$ of few \emph{apices} (see~\Cref{prop:flatann}), we find a way to ``interpret them away'' by modifying the formulas and the considered structures. This trick comes from~\cite[Lemma 26]{FlumG01fixe} (see also~\cite{FominGSST23,GolovachST22model}) and is presented in~\Cref{subsec_apices}.
Then, we proceed to the definition of the new vocabulary
(\Cref{def_vocab}) and the new set of formulas on this vocabulary (\Cref{def:ltp}), which appear in~\Cref{subsec:voc,subsec:layeredtypes}, respectively.
Before providing these definitions
and in order to help the reader better understand how these will be used in the proofs,
we provide some additional definitions and notations on vertex sets arising from different ``layerings'' of a graph with respect to a flat railed annulus; see~\Cref{subsec_conventions}.
These will be used to define the structures that
the algorithm $\mathbb{A}$ of the \textsl{Local-Global-Irrelevancy Condition} will output, which is done in~\Cref{def:str} in~\Cref{subsec:voc}.

\subsection{Dealing with apices}
\label{subsec_apices}

We first show how to deal with apex vertices that can be adjacent to vertices in a flat area of the graph in a completely arbitrary way.
We use a generalization of the idea of~\cite[Lemma 26]{FlumG01fixe}, already appearing in~\cite{FominGSST23,GolovachST22model}.

\begin{definition}\label{def:apices}
Let $l\in\mathbb{N}$.
Given a graph $G$ and a $\bar{a} =(a_1,\ldots, a_l)\in V(G)^l$,
we define the structure
${\sf ap}_{l}(G,\bar{a})$ to
be the structure on vocabulary $\{E\}^{\langle l \rangle} :=\{E,c_1,\ldots,c_l,C_1,\ldots,C_l\}$ obtained as follows:

\begin{itemize}
\item its universe is $V(G),$

\item $E$ is interpreted as all $\{u,v\}\in E(G)$
such that either $u,v\in V(G)\setminus V(\bar{a})$ or $u,v\in V(\bar{a})$,

\item for every $i\in [l]$, $c_i$ is interpreted as $a_i$ and $C_i$ is interpreted as $N_{G}(a_i)\setminus V(\bar{a}).$
\end{itemize}
\end{definition}
Intuitively, we keep the same universe,
we keep only edges that are either between apices or between non-apices, and
we color the neighbors of $a_i$ by color $C_i$.
The following result from~\cite{GolovachST22model_arXiv} shows how to interpret the disjoint-paths predicate in ${\sf ap}_{l}(G,\bar{a})$.

\begin{proposition}[\!\!\cite{GolovachST22model_arXiv}]\label{lem:translation-dp-apices}
There is a function~$\newfun{@transversales}:\mathbb{N}^2\to\mathbb{N}$ such that for every $l,k\in\mathbb{N}$, there is a formula $\zeta_{{\sf dp}}(x_{1},y_{1},\ldots, x_{k}, y_{k})\in \mathsf{FO}\!+\!\mathsf{dp}[\{E\}^{\langle  l \rangle}]$ of {at most $\funref{@transversales}(l,k)$ first-order quantifiers} such that
for every graph $G,$ every $v_1,u_1,\ldots, v_k,u_k\in V(G)$, and every $\bar{a}\in V(G)^l$, it holds that
$G\models {\sf dp}(v_1, u_{1},\ldots, v_{k}, u_{k}) \iff {\sf ap}_{l}(G,\bar{a})\models
\zeta_{{\sf dp}}(v_{1},  u_{1},\ldots, v_{k}, u_{k}).$
\end{proposition}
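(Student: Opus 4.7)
\medskip

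\noindent\textbf{Proof plan.} The central observation that makes the statement provable is that, by construction of $\mathsf{ap}_l(G,\bar{a})$, no $E$-edge joins an apex vertex in $V(\bar{a})$ to a non-apex vertex in $V(G)\setminus V(\bar{a})$. Hence every path in $\mathsf{ap}_l(G,\bar{a})$ lies entirely inside $V(\bar{a})$ or entirely inside $V(G)\setminus V(\bar{a})$. On the other hand, any path $P$ of $G$ decomposes uniquely into a sequence of maximal \emph{apex segments} (lying in $V(\bar{a})$) alternating with maximal \emph{non-apex segments} (lying in $V(G)\setminus V(\bar{a})$). The transition from an apex $a_j$ to a non-apex neighbor $w$ of $a_j$ in $G$, which is an $E$-edge of $G$ but not of $\mathsf{ap}_l(G,\bar{a})$, is witnessed in the stripped structure by the fact that $w\in C_j$. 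Apex-to-apex edges, as well as non-apex-to-non-apex edges, are preserved verbatim in $E$.

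The plan is to enumerate all possible \emph{routing patterns} describing how the apices are used by a putative disjoint-paths solution. Because paths are vertex-disjoint, each apex $a_j$ is visited at most once across all of the $k$ paths; thus a routing pattern is a function $\alpha$ that assigns to each index $j\in[l]$ either the symbol $\bot$ (meaning $a_j$ is unused) or a pair $(i,r)$ indicating that $a_j$ is the $r$-th apex visited by the $i$-th path. The number of such patterns is bounded by $(1+kl)^{l}$, a function of $l$ and $k$ only. For each pattern $\alpha$ I will build a formula $\zeta_\alpha$ asserting the existence of a disjoint-paths solution that follows this pattern, and set $\zeta_{\mathsf{dp}}:=\bigvee_{\alpha}\zeta_\alpha$.

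To construct $\zeta_\alpha$, for every pair of apex visits that are consecutive along some path $i$, and for the segments from $v_i$ to the first apex visit and from the last apex visit to $u_i$, I introduce the required \emph{hook vertices}: for each apex $a_j$ used by path $i$, an existentially quantified non-apex ``incoming hook'' $w_j^{\mathsf{in}}$ and ``outgoing hook'' $w_j^{\mathsf{out}}$ (omitted on the side where the segment is an apex-apex direct edge or where the segment's other endpoint is a free terminal equal to an apex). Hook membership in the right color class is enforced by the atomic formula $w_j^{\mathsf{in}}\in C_j$ (respectively $w_j^{\mathsf{out}}\in C_j$); apex-apex segments of length $1$ are enforced by the atomic formula $\{a_j,a_{j'}\}\in E$ applied to the corresponding constants; and the non-apex segments are all required to be realized as a single \emph{joint} disjoint-paths predicate $\mathsf{dp}_{k+l}$ on their appropriate start/end pairs (either a hook or a free terminal). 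Since paths in $\mathsf{ap}_l(G,\bar{a})$ cannot cross the apex/non-apex boundary, the paths witnessing this $\mathsf{dp}$-predicate automatically stay in non-apex land, and together with the asserted edges and color memberships they reconstitute, in $G$, the required $k$ vertex-disjoint paths from $v_i$ to $u_i$; conversely, any such solution in $G$ arises from some pattern $\alpha$ in this way, so $\zeta_{\mathsf{dp}}$ captures the predicate exactly.

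The number of first-order quantifiers in $\zeta_\alpha$ is at most $2l$ (two hooks per apex), and summing over all patterns gives a bound of the form $\funref{@transversales}(l,k)\le 2l\cdot(1+kl)^l$. The main bookkeeping obstacle, which I would address by a careful case split in the definition of $\zeta_\alpha$, is the boundary behaviour when $v_i$ or $u_i$ is itself an apex (in which case that side's first/last segment begins at a constant $c_j$ rather than at a free terminal, and may be empty) and the degenerate case of a zero-length non-apex segment $w_j^{\mathsf{out}}=w_{j'}^{\mathsf{in}}$ when two consecutive apices of the same path share a common non-apex neighbour; the disjoint-paths predicate accommodates trivial paths, and the remaining cases are handled by allowing such equalities as additional disjuncts inside each $\zeta_\alpha$.
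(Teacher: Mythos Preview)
The paper does not prove this proposition; it is imported verbatim from \cite{GolovachST22model_arXiv} and used as a black box. Your sketch is the standard argument and is essentially what that reference does: enumerate the finitely many ways the $k$ paths can thread through the apex set (each of the $l$ apices is used at most once by vertex-disjointness), existentially guess the non-apex ``hook'' neighbours on either side of every apex visit, certify each apex transition with the colour predicate $C_j$ and each apex--apex step with the retained edge relation, and bundle all remaining non-apex subsegments into one $\mathsf{dp}$ call. The crucial point you identify---that in the Gaifman graph of $\mathsf{ap}_l(G,\bar a)$ there are no apex/non-apex edges, so any $\mathsf{dp}$-witness with non-apex endpoints stays in the non-apex part---is exactly what makes the translation sound.

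One small correction to your handling of the degenerate case: when a non-apex segment collapses to a single vertex $w$ (the path reads $\ldots a_j\,w\,a_{j'}\ldots$), it is not enough to assert $w_j^{\mathsf{out}}=w_{j'}^{\mathsf{in}}$ as an extra disjunct and drop the segment from the $\mathsf{dp}$ call, because then nothing prevents another non-apex segment from passing through $w$. You must keep the pair $(w,w)$ in the $\mathsf{dp}$ predicate so that $V(P)=\{w\}$ is reserved; under the paper's semantics of $\mathsf{dp}_k$ a trivial path at a single vertex is allowed, and the pairwise-disjointness clause then does the work. With that adjustment your argument is complete.
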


Due to~\Cref{lem:translation-dp-apices},
one can ``shift'' the problem of satisfaction of the disjoint-paths predicate from $G$ to ${\sf ap}_{l}(G,\bar{a})$
by considering the $\mathsf{FO}\!+\!\mathsf{dp}$-formula $\zeta_{{\sf dp}}$. This formula has extra quantifiers on first-order variables and in order to deal with them we have to consider annotated types of larger ``depth''.
However, the extra quantified variables of $\zeta_{{\sf dp}}$ are not bounded by the annotation $\bar{R}$ of a given annotated graph $(G,\bar{R})$. For this reason, we define \emph{semi-annotated} types, which differ from the annotated types in~\Cref{def:tp} only on where the considered vertices of the $i$-th level of the definition should belong. In particular,
while in~\Cref{def:tp} we ask that a vertex on the $i$-th level of the definition should belong to $R_{m+d-i}$,
we now ask this \textsl{only} when $i>d-r$, for some given $r\leq d$. See the next paragraph for the formal definition of semi-annotated patterns.

\paragraph{A variant of annotated types: semi-annotated types.}
Let $\sigma$ be a colored-graph vocabulary and let $m,d,r\in\mathbb{N}$
such that  $r\leq d$.
Let $\mathfrak{G}$ be a $\sigma$-structure and let $\bar{R}=(R_1,\ldots, R_{m+r})$ be an $(m+r)$-tuple of subsets of $V(\mathfrak{G})$.
For every
$m$-tuple $\bar{V}\in (2^{V(\mathfrak{G})})^m$ and every $d$-tuple
$\bar{v}\in V(\mathfrak{G})^d$,
we define
$$\mathsf{tp}_{m,d,t}^{(0,r)}(\mathfrak{G},\bar{R},\bar{V},\bar{v}) :=
\mathsf{tp}_{m,d,t}^0(\mathfrak{G},\bar{R},\bar{V},\bar{v}).$$

For every $i\in[d]$, every $m$-tuple $\bar{V}\in (2^{V(\mathfrak{G})})^m$
and every $(d-i)$-tuple
$\bar{v} \in V(\mathfrak{G})^{d-i}$,
we define
\[
\mathsf{tp}_{m,d,t}^{(i,r)}(\mathfrak{G},R,\bar{V},\bar{v})
:=
\big\{\mathsf{tp}_{m,d,t}^{(i-1,r)}
(\mathfrak{G},\bar{R},\bar{V},\bar{v}u)\mid u\in A_{i-1}\big\},
\]
where $A_{i} =
V(\mathfrak{G})$, if $i\in[d-r]$,
while $A_i = R_{m+d-i}$ if $i\in[d-r+1,\ldots, d]$.

Also, for every $i\in[1+d,m+d]$ and every $(m+d-i)$-tuple $\bar{V}\in (2^{V(\mathfrak{G})})^{m+d-i}$,
we set
\[
\mathsf{tp}_{m,d,t}^{(i,r)}
(\mathfrak{G},\bar{R},\bar{V}) :=
\{\mathsf{tp}_{m,d,t}^{(i-1,r)}(\mathfrak{G},\bar{R},\bar{V}U) \mid U\subseteq R_{m+d-(i-1)}\}.
\]

\paragraph{From the semi-annotated type of $({\sf ap}_{l}(G,\bar{a}),\bar{R})$ to the annotated type of $(G,\bar{R})$.}
We show that if the ``gap'' between semi-annotation and full annotation is the number of extra quantifiers obtained by $\zeta_{{\sf dp}}$ from~\Cref{lem:translation-dp-apices},
then equality of (semi)-annotated types for $({\sf ap}_{l}(G,\bar{a}),\bar{R})$ and $({\sf ap}_{l}(G',\bar{a}),\bar{R}')$ implies equality of (fully-)annotated types for $(G,\bar{R})$ and $(G',\bar{R}')$.

\begin{lemma}\label{lem_inter2equiv}
Let $l,m,r\in\mathbb{N}$ and let $d=r+\funref{@transversales}(l,r)$.
Also, let two graphs $G,G'$,
let $\bar{R}\in (2^{V(G)})^{m+r}, \bar{R}'\in (2^{V(G')})^{m+r}$,
and let $\bar{a}\in V(G)^l, \bar{a}'\in V(G')^l$.
Then
\[\mathsf{tp}^{(m+d,r)}_{m,d,t}
({\sf ap}_{l}(G,\bar{a}),\bar{R})
 = \mathsf{tp}^{(m+d,r)}_{m,d,t}({\sf ap}_{l}(G',\bar{a}'),\bar{R}')
\implies
\mathsf{tp}^{m+r}_{m,r,t}(G,\bar{R})=\mathsf{tp}^{m+r}_{m,r,t}(G',\bar{R}').\]
\end{lemma}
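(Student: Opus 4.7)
The plan is to translate every atomic formula on the $G$-side into a bounded-rank $\mathsf{FO}\!+\!\mathsf{dp}$ formula on the ${\sf ap}_l$-side, and then propagate the resulting agreement upward through the outer quantification layers. Two ingredients drive the translation: \Cref{lem:translation-dp-apices} rewrites each disjoint-paths atom $\mathsf{dp}_k$ on $G$ (for $k\leq r$) as a formula $\zeta_{\mathsf{dp}}$ on ${\sf ap}_{l}(G,\bar{a})$ with at most $\funref{@transversales}(l,k)\leq\funref{@transversales}(l,r)$ first-order quantifiers, and \Cref{formulaanntw} shows that each annotated-treewidth atom $\tw(G,X)\leq t_i$ is $\mathsf{FO}\!+\!\mathsf{dp}$-definable, so its inner $\mathsf{dp}$-subatoms can themselves be rewritten via $\zeta_{\mathsf{dp}}$ (at a cost still absorbed by $\funref{@transversales}(l,r)$, after redefining this function if needed). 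The remaining atoms transfer essentially for free: equality, $\mathsf{card}_p(X)$, and membership $x\in X$ use the same universe, while an edge $\{x,y\}\in E$ of $G$ is expressible in ${\sf ap}_l(G,\bar{a})$ as a quantifier-free disjunction using the edge relation together with the unary colors $C_1,\ldots,C_l$ and constants $c_1,\ldots,c_l$ (the only edges of $G$ omitted by the interpretation of $E$ in ${\sf ap}_l$ are those between an apex and a non-apex, and these are exactly recorded by the $C_i$'s).

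Building on this translation, I would first prove the base case: for any $\bar{V},\bar{v}$ and $\bar{V}',\bar{v}'$ as in the statement, if the rank-$\funref{@transversales}(l,r)$ unrestricted semi-annotated types of $({\sf ap}_l(G,\bar{a}),\bar{R},\bar{V},\bar{v})$ and $({\sf ap}_l(G',\bar{a}'),\bar{R}',\bar{V}',\bar{v}')$ agree, then so do the atomic-level annotated types $\mathsf{tp}^0_{m,r,t}(G,\bar{R},\bar{V},\bar{v})$ and $\mathsf{tp}^0_{m,r,t}(G',\bar{R}',\bar{V}',\bar{v}')$. Indeed, every $\psi\in\mathcal{L}^{m,r}_t[\{E\}]$ becomes, after the rewriting above, an $\mathsf{FO}\!+\!\mathsf{dp}$ formula on ${\sf ap}_l$ of first-order rank at most $\funref{@transversales}(l,r)$ with the same free variables, so its truth value on either side is fully determined by the assumed equal rank-$\funref{@transversales}(l,r)$ semi-annotated type.

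I would then conclude by induction on $i$ from $0$ to $m+r$ that equality of the level-$(i+\funref{@transversales}(l,r))$ semi-annotated types on the ${\sf ap}_l$-side implies equality of the level-$i$ annotated types on the $G$-side. The inductive step is direct because the same annotation tuple $\bar{R}$ drives both sides: at step $i\leq r$ both sides quantify the next vertex over $R_{m+r-(i-1)}$, and at step $i\in[r+1,m+r]$ both quantify the next subset over $R_{m+r-(i-1)}$; any witness chosen on one side is paired with a witness on the other whose lower-level type is inductively equivalent. Setting $i=m+r$ yields the stated implication.

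The main obstacle is the precise bookkeeping of levels: one must verify that the $\funref{@transversales}(l,r)$ unrestricted first-order quantifiers produced by the translation fit exactly inside the unrestricted prefix of the semi-annotated type at depth $d$ (namely, its first $d-r=\funref{@transversales}(l,r)$ levels, which range over all of $V(\mathfrak{G})$), and that $\funref{@transversales}$ is chosen to absorb the mild bounds accumulated when several $\mathsf{dp}$- and $\tw\leq t_i$-atoms appear simultaneously in a single $\psi\in\mathcal{L}^{m,r}_t$. Once this alignment is settled, the propagation through the outer layers is essentially formal.
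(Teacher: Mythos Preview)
Your proposal is correct and follows essentially the same approach as the paper's proof: translate each atomic formula (equality, edge, membership, $\mathsf{card}_p$, $\mathsf{dp}_k$) from $G$ to $\mathsf{ap}_l(G,\bar{a})$, establish the base case via the $(d-r)$-level semi-annotated type agreement, and then propagate inductively through the $m+r$ outer annotated layers using a back-and-forth argument. Your worry about annotated-treewidth ``atoms'' is unnecessary, since the formulas in $\mathcal{L}^{m,r}_t[\{E\}]$ are boolean combinations of the five listed atom types only (the bound $\tw(G,V_i)\leq t_i$ is a semantic constraint on which such combinations lie in $\mathcal{L}^{m,r}_t$, not an atom requiring separate translation), so no redefinition of $\funref{@transversales}$ is needed.
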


\begin{proof}
We set $\bar{R} = (R_1,\ldots, R_{m+r})$ and $\bar{R}'=(R_1',\ldots, R_{m+r}')$.
Suppose that $\mathsf{tp}^{(m+d,r)}_{m,d,t}
({\sf ap}_{l}(G,\bar{a}),\bar{R})
 = \mathsf{tp}^{(m+d,r)}_{m,d,t}({\sf ap}_{l}(G',\bar{a}'),\bar{R}')$.
We prove inductively, for increasing values of $i$, that the following two statements hold for every $i\in[m+r]$, where $j:=\min\{m,m+r-i\}$ and $\ell:=\max\{0,r-i\}$:

\begin{itemize}
\item[(i)] Let a $j$-tuple $\bar{V}\in 2^{R_1}\times \cdots \times 2^{R_j}$ and an $\ell$-tuple $\bar{v}\in R_{m+1}\times \cdots \times R_{m+\ell}$.
There exists a $j$-tuple $\bar{U}\in 2^{R_1'}\times \cdots \times 2^{R_j'}$
and a $\bar{u}\in R_{m+1}'\times \cdots \times R_{m+\ell}'$ such that if $\mathsf{tp}^{(d-r+i,r)}_{m,d,t}({\sf ap}_{l}(G,\bar{a}),\bar{R},\bar{V},\bar{v})
=
\mathsf{tp}^{(d-r+i,r)}_{m,d,t}({\sf ap}_{l}(G',\bar{a}'),\bar{R}',\bar{U},\bar{u})$, then
$\mathsf{tp}^{i}_{m,r,t}(G,\bar{R},\bar{V},\bar{v})
= \mathsf{tp}^{i}_{m,r,t}(G',\bar{R}',\bar{U},\bar{u})
$.

\item[(ii)] Let a $j$-tuple $\bar{U}\in 2^{R_1'}\times \cdots \times 2^{R_j'}$ and an $\ell$-tuple $\bar{u}\in R_{m+1}'\times \cdots \times R_{m+\ell}'$.
There exists a $j$-tuple $\bar{V}\in 2^{R_1}\times \cdots \times 2^{R_j}$
and a $\bar{v}\in R_{m+1}\times \cdots \times R_{m+\ell}$ such that if $\mathsf{tp}^{(d-r+i,r)}_{m,d,t}({\sf ap}_{l}(G,\bar{a}),\bar{R},\bar{V},\bar{v})
=
\mathsf{tp}^{(d-r+i,r)}_{m,d,t}({\sf ap}_{l}(G',\bar{a}'),\bar{R}',\bar{U},\bar{u})$, then
$\mathsf{tp}^{i}_{m,r,t}(G,\bar{R},\bar{V},\bar{v})
= \mathsf{tp}^{i}_{m,r,t}(G',\bar{R}',\bar{U},\bar{u})
$.
\end{itemize}
First we show that (i) holds for $i=0$, i.e., $j=m$ and $\ell=r$.
We consider an $m$-tuple $\bar{V}\in 2^{R_1}\times \cdots \times 2^{R_m}$ and an
$r$-tuple $\bar{v}\in R_{m+1}\times \cdots \times R_{m+r}$.
Since
$\mathsf{tp}^{(m+d,r)}_{m,d,t}
({\sf ap}_{l}(G,\bar{a}),\bar{R})
 = \mathsf{tp}^{(m+d,r)}_{m,d,t}({\sf ap}_{l}(G',\bar{a}'),\bar{R}')$,
 there is a  $\bar{U}\in 2^{R_1'}\times \cdots \times 2^{R_m'}$ and a
 $\bar{u}\in R_{m+1}'\times \cdots \times R_{m+r}'$
 such that
 \begin{equation}\label{eq:pattern-chop}
 \mathsf{tp}^{(d-r,r)}_{m,d,t}({\sf ap}_{l}(G,\bar{a}),\bar{R},\bar{V},\bar{v})
=
\mathsf{tp}^{(d-r,r)}_{m,d,t}({\sf ap}_{l}(G',\bar{a}'),\bar{R}',\bar{U},\bar{u}).
\end{equation}

Note that~\eqref{eq:pattern-chop} implies that
for every $i,j\in[r]$,
$\{v_i,v_j\}\in E(G)\iff \{u_i,u_j\}\in E(G')$.
For this, it suffices to observe that $v_i$ and $v_j$ are adjacent in $G$ if and only if either they are adjacent in ${\sf ap}_{l}(G,\bar{a})$, or there is an $h\in[l]$ such that
either ``$v_i = a_h$ and $v_j\in  N_G(a_h)$'' or
``$v_j = a_h$ and $v_i\in N_G(a_h)$''.

We also show that,
for every $i_1,\ldots,i_k,j_1,\ldots,j_k\in [r]$,
$G\models  {\sf dp}_k(v_{i_1},v_{j_1},\ldots,v_{i_k},v_{j_k})$ if and only if $G'\models {\sf dp}_k(u_{i_1},u_{j_1},\ldots,u_{i_k},u_{j_k})$.
Indeed,
by~\Cref{lemgametreesCMSO/tw},~\eqref{eq:pattern-chop} implies that $
{\sf ap}_{l}(G,\bar{a})
\models  \zeta_{\sf dp}(v_{i_1},v_{j_1},\ldots,v_{i_k},v_{j_k})
\iff
{\sf ap}_{l}(G',\bar{a}')
\models  \zeta_{\sf dp}(u_{i_1},u_{j_1},\ldots,u_{i_k},u_{j_k}),$
where $\zeta_{\sf dp}$ is given by~\Cref{lem:translation-dp-apices}.
Due to~\Cref{lem:translation-dp-apices},
$G\models  {\sf dp}_k(v_{i_1},v_{j_1},\ldots,v_{i_k},v_{j_k})\iff G'\models {\sf dp}_k(u_{i_1},u_{j_1},\ldots,u_{i_k},u_{j_k})$.

Therefore, for every
$\psi(\bar{X},\bar{x}) \in \mathcal{L}^{m,r}_t[\{E\}]$,
$G\models \psi(\bar{V},\bar{v}) \iff G'\models \psi(\bar{U},\bar{u})$, which in
turn implies that $\mathsf{tp}^{0}_{m,r,t}(G,\bar{R},\bar{V},\bar{v})
= \mathsf{tp}^{0}_{m,r,t}(G',\bar{R}',\bar{U},\bar{u})$, which concludes the proof of (i) for $i=0$.
The proof of (ii) for $i=0$ is completely symmetric.

To conclude the proof by induction, let us
show how to prove (i) for every $i\geq 1$, assuming that (i) and (ii) hold for $i-1$.
Using the assumption that $\mathsf{tp}^{(m+d,r)}_{m,d,t}
({\sf ap}_{l}(G,\bar{a}),\bar{R})
 = \mathsf{tp}^{(m+d,r)}_{m,d,t}({\sf ap}_{l}(G',\bar{a}'),\bar{R}')$,
we can find a $j$-tuple $\bar{U}$ and an $\ell$-tuple $\bar{u}$ in $G'$, such that
\begin{equation}\label{eq_ipgsatt}
\mathsf{tp}^{(d-r+i,r)}_{m,d,t}
({\sf ap}_{l}(G,\bar{a}),\bar{R},\bar{V},\bar{v})
 = \mathsf{tp}^{(d-r+i,r)}_{m,d,t}({\sf ap}_{l}(G',\bar{a}'),\bar{R}',\bar{U},\bar{u}).
 \end{equation}
 We distinguish two cases, depending on the value of $i$.
If  $i>r$, then $\bar{v}$ has length zero.
Consider a set $V_{j+1}\subseteq V(G)$ and use~\eqref{eq_ipgsatt} to obtain a $U_{j+1}\subseteq V(G')$ such that
$\mathsf{tp}^{(d-r+i,r)}_{m,d,t}
({\sf ap}_{l}(G,\bar{a}),\bar{R},\bar{V}V_{j+1})
 = \mathsf{tp}^{(d-r+i,r)}_{m,d,t}({\sf ap}_{l}(G',\bar{a}'),\bar{R}',\bar{U}U_{j+1}).$
Applying the induction hypothesis, we get
$\mathsf{tp}^{i-1}_{m,r,t}(G,\bar{R},\bar{V}V_{j+1})
= \mathsf{tp}^{i-1}_{m,r,t}(G',\bar{R}',\bar{U}U_{j+1})$. Symmetrically, we can consider a set
$U_{j+1}\subseteq V(G')$  and
use~\eqref{eq_ipgsatt}
to obtain a set $V_{j+1}\subseteq V(G)$
such that $\mathsf{tp}^{i-1}_{m,r,t}(G,\bar{R},\bar{V}V_{j+1})
= \mathsf{tp}^{i-1}_{m,r,t}(G',\bar{R}',,\bar{U}U_{j+1}).$
Therefore, $\mathsf{tp}^{i}_{m,r,t}(G,\bar{R},\bar{V})
= \mathsf{tp}^{i}_{m,r,t}(G',\bar{R}',\bar{U})$.
For $i\leq r$, the arguments are for $v_{j+1}$ and $u_{j+1}$ and are identical to the above ones.
The proof of the lemma is concluded by observing that, since (i) and (ii) hold for $i=m+d$ and $\mathsf{tp}^{(m+d,r)}_{m,d,t}
({\sf ap}_{l}(G,\bar{a}),\bar{R})
 = \mathsf{tp}^{(m+d,r)}_{m,d,t}({\sf ap}_{l}(G',\bar{a}'),\bar{R}')$,
we have that $\mathsf{tp}^{m+r}_{m,r,t}(G,\bar{R})=\mathsf{tp}^{m+r}_{m,r,t}(G',\bar{R}')$.
\end{proof}

\subsection{Layering with respect to flat railed annuli}\label{subsec_conventions}

We proceed to define a graph with colors and roots obtained from a given graph $G$ and a railed annulus flatness pair $(\mathcal{A},\mathfrak{R})$ of $G$.
To define such a colored graph with roots, we use the {\sl leveling} ${\sf Leveling}_{(\mathcal{A}, \mathfrak{R})} (G)$ of $(\mathcal{A},\mathfrak{R})$, that is the ``planar representation'' of $(\mathcal{A},\mathfrak{R})$,
as defined in~\Cref{subsec_levelings}.
In order for $(\mathcal{A},\mathfrak{R})$ to have this ``planar representation''
property, it has to be {\sl well-aligned} (see also~\Cref{subsec_levelings} and~\cite[Subsection 3.4]{SauST24amor}).

\paragraph{Refinements of railed annuli.}
Let $\mathcal{A}$ be an $(s^h \cdot p,q)$-railed annulus, for some $h\in\mathbb{N}$, $s\in\mathbb{N}_{\geq 2}$, and $p,q\in\mathbb{N}_{\geq 3}$.
For every $\bar{w}\in [0,s-1]^h$, we use $\mathcal{A}_{\bar{w}}$ to denote the $(p,q)$-railed annulus that is cropped by the cycles $C_{1 + p\cdot (n_{\bar{w}}-1)}$ and $C_{p\cdot n_{\bar{w}}}$ of $\mathcal{A}$, where $n_{\bar{w}}=1+\sum_{i\in[r]} w_i\cdot s^{h-i+1}$.
We call $\{\mathcal{A}_{\bar{w}}\mid \bar{w}\in[0,s-1]^{h}\}$ the \emph{$(s,h)$-refinement} of $\mathcal{A}$.
Given a $\bar{w}\in [0,s-1]^{h}$, we use $\mathcal{C}_{\bar{w}}$ to denote the cycles of $\mathcal{A}_{\bar{w}}$.
Keep in mind that for every $\bar{w}\in [0,s-1]^{h}$, $|\mathcal{C}_{\bar{w}}| = p$.
We denote by $C_{\bar{w}}^1,\ldots, C_{\bar{w}}^{p}$ the elements of $\mathcal{C}_{\bar{w}}$.
Also, for every $i\in[h-1]$ and every $\bar{w}\in [0,s-1]^{i}$, we denote by $\mathcal{A}_{\bar{w}}$ the railed annulus
$\bigcup_{w\in[0,s-1]}\mathcal{A}_{\bar{w}w}$ and by $\mathcal{C}_{\bar{w}}$ its cycles (i.e., $\mathcal{C}_{\bar{w}}= \bigcup_{w\in[0,s-1]}\mathcal{C}_{\bar{w}w}$).

\paragraph{The extra boundary vertices ${r}_{i}^{\langle\bar{w}\rangle}$ and the graphs  $G_{\bar{w}}$ and $G_{\bar{w}}^{\sf out}$.}
Let $G$ be a graph and let $(\mathcal{A},\mathfrak{R})$ be a well-aligned $(s^h \cdot p,q)$-railed annulus flatness pair of $G$,
for some $h\in\mathbb{N}$, $s\in\mathbb{N}_{\geq 2}$,
and $p,q\in\mathbb{N}_{\geq 3}$.
Let $\{\mathcal{A}_{\bar{w}}\mid \bar{w}\in [0,s-1]^{h}\}$ be the $(s,h)$-refinement of $\mathcal{A}$.
Let $\mathcal{C}_{\bar{w}} =(C_{\bar{w}}^1,\ldots, C_{\bar{w}}^{p})$ be the cycles of $\mathcal{A}_{\bar{w}}$, for each $\bar{w}\in [0,s-1]^{h}$.

We consider the representation $R_{\mathcal{A}}$ of $\mathcal{A}$ in $\mathcal{A}_\mathfrak{R}$ (see~\Cref{subsec_levelings}), which is an $(s^h \cdot p ,q)$-railed annulus.
Let $\hat{P}_1,\ldots,\hat{P}_q$ be the rails of  $R_{\mathcal{A}}$.
Also, for every  $\bar{w}\in [0,s-1]^{h}$, let $\hat{C}_{\bar{w}}^\mathsf{mid}$ be the cycle of $R_{\mathcal{A}}$
corresponding to the cycle $C_{\bar{w}}^{\lceil p/2\rceil}$ of $\mathcal{C}_{\bar{w}}$.
As described in~\Cref{subsec_levelings}, the graph $R_{\mathcal{A}}$
is $\Delta$-embedded in some closed  annulus $\Delta$ and
contains a sequence of $p$ cycles $\hat{\mathcal{C}}_{\bar{w}}$.
This sequence $\hat{\mathcal{C}}_{\bar{w}}$ of cycles is bijectively mapped to the cycles $C_{\bar{w}}^1,\ldots, C_{\bar{w}}^{p}$ of $\mathcal{A}_{\bar{w}}$.
Intuitively, the cycle $\hat{C}_{\bar{w}}^\mathsf{mid}$ is the ``middle'' cycle among the $p$ cycles of $\hat{\mathcal{C}}_{\bar{w}}$.

We see each rail $\hat{P}_i$ of $R_\mathcal{A}$ as being oriented towards the ``inner'' part of $R_\mathcal{A}$, i.e., it first crosses $\hat{C}_{0^h}^\mathsf{mid}$ and then crosses $\hat{C}_{(s-1)^h}^{\mathsf{mid}}$.
For every $i\in[q]$,
we define ${r}_{i}^{\langle\bar{w}\rangle}$ as the first vertex of $\hat{P}_i$ that appears in the intersection of the cycle $\hat{C}_{\bar{w}}^\mathsf{mid}$ and the rail $\hat{P}_i$,
while traversing $\hat{P}_i$ according to the aforementioned orientation.
Also, we denote by $\Delta_{\bar{w}}^\mathsf{mid}$ the closed annulus of $\Delta$ whose boundaries are $\hat{C}_{\bar{w}}^\mathsf{mid}$ and $\hat{C}_{\bar{w}}^{p}$.
We use
\begin{itemize}
\item $V_{\mathcal{A}_{\bar{w}}}^{\sf lev}$ to denote the set of all vertices of ${\sf Leveling}_{(\mathcal{A}, \mathfrak{R})}^{\langle \mathcal{C}_{\bar{w}}\rangle}(G)$ embedded in $\Delta_{\bar{w}}^\mathsf{mid}$,

\item  $V_{\mathcal{A}_{\bar{w}}}^{\mathsf{inn}}$ to denote the set
$(V(\cupall\mathsf{Influence}_\mathfrak{R}(C_{\bar{w}}^{1}))\cup \mathsf{inn}_\mathfrak{R}(\mathcal{A}))\setminus V(\cupall\mathsf{Influence}_\mathfrak{R}(\mathcal{C}_{\bar{w}}))$,

\item $G_{\bar{w}}$ to denote the graph
${\sf Leveling}_{(\mathcal{A}, \mathfrak{R})}^{\langle \mathcal{C}_{\bar{w}}\rangle}(G)
[V_{\mathcal{A}_{\bar{w}}}^{\sf lev}\cup V_{\mathcal{A}_{\bar{w}}}^{\mathsf{inn}}]$, and

\item $G_{\bar{w}}^{\sf out}$ to denote the graph
${\sf Leveling}_{(\mathcal{A}, \mathfrak{R})}^{\langle \mathcal{C}_{\bar{w}}\rangle}(G)
\setminus \Big(V_{\mathcal{A}_{\bar{w}}}^{\mathsf{inn}} \cup \big(V_{\mathcal{A}_{\bar{w}}}^{\sf lev}\cap \inter(\Delta_{\bar{w}}^\mathsf{mid})\big)\Big)$.
\end{itemize}
Keep in mind that $\{r_{1}^{\langle\bar{w}\rangle},\ldots,r_{q}^{\langle\bar{w}\rangle}\}\subseteq V(G_{\bar{w}})\cap V(G_{\bar{w}}^{\sf out})$ and $V_{\mathcal{A}_{\bar{w}}}^{\mathsf{inn}}\subseteq V({\sf Leveling}_{(\mathcal{A}, \mathfrak{R})}^{\langle \mathcal{C}_{\bar{w}}\rangle}(G))\setminus V(\mathcal{A}_{\mathfrak{R}}^{\langle\mathcal{C}_{\bar{w}}\rangle}).$

\paragraph{Rerouting linkages on a flat part of the graph.}
We conclude this subsection with the following result.
It intuitively shows that every linkage traversing a large enough collection of consecutive cycles of a flat railed annulus (whose endpoints are not inside the influence of the cycles) can be ``rerouted'' so as to cross the central cycle of this collection only in the intersection of this cycle with a small number of rails of the annulus.
Moreover, the parts of the rerouted linkage that are inside the influence of the collection of cycles can be ``projected'' to the leveling of the graph corresponding to this collection of cycles.

\begin{proposition}\label{lem_colomodelsrerout}
There are functions $\newfun{@aristocracias},\newfun{fun_ulinkastar}:\mathbb{N}\to\mathbb{N}$ such that for every $h,d,s\in\mathbb{N}$, if $G$ is a graph,
$(\mathcal{A},\mathfrak{R})$ is a well-aligned $(p,q)$-railed annulus flatness pair of $G$, where $p=s^{h}\cdot\funref{@aristocracias}(d)$ and $q\geq 5/2\cdot \funref{fun_ulinkastar}(d)$,
and $v_1,\ldots, v_d\subseteq V(G)$, then for every
$\bar{w}\in[0,s-1]^h$ such that $v_1,\ldots,v_d\notin V(\cupall{\sf Influence}_{\mathfrak{R}}(\mathcal{C}_{\bar{w}}))$ and every linkage $L$ of $G$ where $T(L)\subseteq \{v_1,\ldots,v_d\}$,
there is
 a linkage $\tilde{L}_1$ of $G_{\bar{w}}$,
a linkage $\tilde{L}_2$ of $G_{\bar{w}}^{\sf out}$,
and a set $Z\subseteq \{r_{1}^{\langle\bar{w}\rangle},\ldots,r_{\funref{fun_ulinkastar}(d)}^{\langle\bar{w}\rangle}\}$ such that
\begin{itemize}
\item $T(\tilde{L}_1)  =  (T(L)\cap V(G_{\bar{w}})) \cup Z$,
\item $T(\tilde{L}_2)  =  (T(L)\cap V(G_{\bar{w}}^{\sf out})) \cup Z$,
and
\item $\tilde{L}_1 \cup \tilde{L}_2$
is a linkage of ${\sf Leveling}_{(\mathcal{A}, \mathfrak{R})}^{\langle \mathcal{C}_{\bar{w}}\rangle}(G)$ that is equivalent to $L$.
\end{itemize}
\end{proposition}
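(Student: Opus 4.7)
The plan is to proceed in three steps: first, reduce to the leveling using \Cref{lem_levelingpaths}; second, reroute the resulting linkage inside the flat half-annulus of $\mathcal{A}_{\bar{w}}$ so that its crossings with the middle cycle $\hat{C}_{\bar{w}}^\mathsf{mid}$ concentrate on the first few rail-cycle intersections; and third, cut the rerouted linkage at these crossings to produce $\tilde{L}_1$, $\tilde{L}_2$, and $Z$. Step one is immediate: since by hypothesis no $v_i$ lies in $V(\cupall\mathsf{Influence}_\mathfrak{R}(\mathcal{C}_{\bar{w}}))$, the terminals $T(L)$ are disjoint from this set, so \Cref{lem_levelingpaths} applied with $\mathcal{C}'=\mathcal{C}_{\bar{w}}$ produces a linkage $L^*$ in ${\sf Leveling}_{(\mathcal{A},\mathfrak{R})}^{\langle\mathcal{C}_{\bar{w}}\rangle}(G)$ with $L^*\equiv L$.

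Step two is where the real work lies. Inside the leveling, the representation $R_\mathcal{A}$ of $\mathcal{A}$ is planar, embedded in $\Delta$, and $\hat{C}_{\bar{w}}^\mathsf{mid}$ topologically separates $\Delta_{\bar{w}}^\mathsf{mid}$ from its complement. I would first invoke a classical ``lasso-removal'' exchange: any path of $L^*$ that crosses $\hat{C}_{\bar{w}}^\mathsf{mid}$, dips into one side, and returns to the same side can be short-circuited by rerouting the offending segment through one of the concentric cycles of $\mathcal{A}_{\bar{w}}$ available on that side. The $\funref{@aristocracias}(d)$ cycles of $\mathcal{A}_{\bar{w}}$ provide enough independent ``room'' to execute all such exchanges simultaneously for all $\mathcal{O}(d)$ paths of $L^*$. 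After this clean-up, the total number of transverse crossings of $\hat{C}_{\bar{w}}^\mathsf{mid}$ is $\mathcal{O}(d)$, say at most $\funref{fun_ulinkastar}(d)$, and by the planarity of $R_\mathcal{A}$ these crossings form a non-crossing pattern along the cycle. Finally, since the number of rails is at least $\tfrac{5}{2}\,\funref{fun_ulinkastar}(d)$, there are enough ``free'' rails on each side to slide every remaining crossing — along a short arc on $\hat{C}_{\bar{w}}^\mathsf{mid}$ followed by a detour along an unused rail — onto one of $r_1^{\langle\bar{w}\rangle},\ldots,r_{\funref{fun_ulinkastar}(d)}^{\langle\bar{w}\rangle}$, yielding an equivalent linkage $L^{**}\equiv L^*$ whose intersection with $\hat{C}_{\bar{w}}^\mathsf{mid}$ is a subset $Z$ of $\{r_1^{\langle\bar{w}\rangle},\ldots,r_{\funref{fun_ulinkastar}(d)}^{\langle\bar{w}\rangle}\}$. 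The hard part will be coordinating these two rerouting stages so that they remain compatible and preserve the pattern of the linkage; this is the main technical obstacle, and it is precisely what the flat-railed-annulus framework developed in~\Cref{sec_flatannuli} (together with the constant slack baked into $\funref{@aristocracias}(d)$ and $\tfrac{5}{2}\funref{fun_ulinkastar}(d)$) is designed to handle.

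For step three, I would cut each path of $L^{**}$ at each vertex of $Z$. Each resulting subpath lies entirely on one side of $\hat{C}_{\bar{w}}^\mathsf{mid}$ in the planar embedding of the leveled area: inner-side subpaths are contained in $V_{\mathcal{A}_{\bar{w}}}^\mathsf{lev}\cup V_{\mathcal{A}_{\bar{w}}}^\mathsf{inn}=V(G_{\bar{w}})$ and are collected into $\tilde{L}_1$, while outer-side subpaths lie in $V(G_{\bar{w}}^\mathsf{out})$ and are collected into $\tilde{L}_2$. The terminal equalities $T(\tilde{L}_1)=(T(L)\cap V(G_{\bar{w}}))\cup Z$ and $T(\tilde{L}_2)=(T(L)\cap V(G_{\bar{w}}^\mathsf{out}))\cup Z$ follow immediately from the construction, the original terminals of $L$ landing in the appropriate side and the cuts contributing exactly the vertices of $Z$ as new terminals, while $\tilde{L}_1\cup\tilde{L}_2=L^{**}\equiv L$ in ${\sf Leveling}_{(\mathcal{A},\mathfrak{R})}^{\langle\mathcal{C}_{\bar{w}}\rangle}(G)$, completing the three required conclusions.
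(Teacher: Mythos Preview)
Your three-step plan has the right shape, and Steps~1 and~3 are fine. The gap is in Step~2, and it is a real one.

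First, the quantitative claim is wrong: the bound on the number of crossings with $\hat{C}_{\bar{w}}^\mathsf{mid}$ is \emph{not} $\mathcal{O}(d)$. The paper states explicitly that $\funref{fun_ulinkastar}$ comes from the Unique Linkage Theorem~\cite{KawarabayashiW10asho,RobertsonS09XXI}, and that $\funref{@aristocracias}(k)=\mathcal{O}((\funref{fun_ulinkastar}(k))^2)$. These are enormous (far from polynomial) functions, which already signals that no elementary counting argument can do the job.

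Second, and more substantively, your ``lasso-removal'' sketch does not survive the requirement that all paths remain pairwise disjoint. Shortcutting one path along a concentric cycle may block the cycle needed to shortcut another; the independence you assume (``enough independent room for all $\mathcal{O}(d)$ paths'') is exactly what fails without a structural theorem. The paper names the tool that resolves this: the Linkage Combing Lemma~\cite[Corollary~3.4]{GolovachST23comb}. That lemma (which itself rests on the Unique Linkage Theorem) takes a linkage through a flat railed annulus with $\mathcal{O}((\funref{fun_ulinkastar}(d))^2)$ cycles and produces an equivalent linkage that is ``combed'': it meets a designated middle cycle only at a bounded set of rail--cycle intersections. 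This is the black box you need for Step~2; your appeal to ``the framework developed in~\Cref{sec_flatannuli}'' does not cover it, since that section contains no such rerouting statement.

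A minor point on order of operations: the paper's proof (in~\cite[Lemma~4]{GolovachST22model_arXiv}) applies the combing lemma to $L$ in $G$ using the flat structure of $(\mathcal{A},\mathfrak{R})$, and then passes to the leveling, rather than leveling first and rerouting afterward. Your order may still be workable, but the combing lemma is stated for flat railed annuli in the original graph, so applying it before leveling is the natural route.
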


The proof of~\Cref{lem_colomodelsrerout} uses the Linkage Combing Lemma~\cite[Corollary 3.4]{GolovachST23comb}
and is presented in~\cite[Lemma 4]{GolovachST22model_arXiv}, using different notation.
The function $\funref{fun_ulinkastar}$ comes from the \textsl{Unique Linkage Theorem}~\cite{KawarabayashiW10asho,RobertsonS09XXI}.
Also, $\funref{@aristocracias}(k)= \mathcal{O}((\funref{fun_ulinkastar}(k))^2)$.

\subsection{The colored-graph vocabulary for the Local-Global-Irrelevancy Condition}
\label{subsec:voc}

We now proceed to define a way of transforming a vocabulary $\sigma$ to a new vocabulary $\sigma^{\langle s,h,b\rangle}$, for non-negative integers $s,h,b$.
This will be used to construct the claimed vocabulary for the \textsl{Local-Global-Irrelevancy Condition} of the reduction between
$(\mathsf{CMSO/tw}\!+\!\mathsf{dp},\mathsf{hw})$ and $(\mathsf{CMSO},\mathsf{tw})$.
\medskip

\begin{importantdef}\label{def_vocab}
Let $\sigma$ be a colored-graph vocabulary and let $h,b\in\mathbb{N}$ and $s\in\mathbb{N}_{\geq 2}$.
We set $\sigma^{\langle s,h,b\rangle}$ to be the vocabulary obtained from $\sigma$ after adding:
\begin{itemize}

\item $s^{h+1}$ unary predicate symbols
$Y_{\bar{w}}, \bar{w}\in[0,s-1]^h$
and
$Z_{\bar{w}}, \bar{w}\in[0,s-1]^h$, and
\item $s^h\cdot b$ constant symbols $c^{\bar{w}}_1,\ldots,c^{\bar{w}}_{b}, \bar{w}\in[0,s-1]^h$.
\end{itemize}
\end{importantdef}

In our proofs, we will construct $\sigma^{\langle s,h,b\rangle}$ for $\sigma=\{E\}^{\langle  l \rangle}$ (see~\Cref{def:apices}), for a given $l\in\mathbb{N}$, while $s,h,b$ will depend on some given $m,d\in\mathbb{N}$ (see~\Cref{subsec:itema}).
In the next paragraph,
we define certain structures on the vocabulary of~\Cref{def_vocab}. While these structures will not be used in our proofs, we present them here as an intermediate step to understand~\Cref{def:str} and the semantic meaning of the layered annotated patterns defined in~\Cref{subsec:layeredtypes}.

\paragraph{Enhanced boundaried structures from flat railed annuli.}
Let $\sigma$ be a colored-graph vocabulary and let $h\in\mathbb{N}$, let $s\in\mathbb{N}_{\geq 2}$, let $p,q\in\mathbb{N}_{\geq 3}$, and let $b\in[q]$.
Let $\mathfrak{G}$ be a $\sigma$-structure (and let $G$ be its Gaifman graph)
and let $(\mathcal{A},\mathfrak{R})$ be a well-aligned
$(s^{h} \cdot p ,q)$-railed annulus flatness pair of $G$.
We define the $\sigma^{\langle s,h,b\rangle}$-structure $\mathsf{Layered}_{\mathcal{A}}^{(s,h,p,b)}(\mathfrak{G})$
as follows:
\begin{itemize}
\item
its universe is
$V(\mathfrak{G})\cup\bigcup_{\bar{w}\in[0,s-1]^h}V({\sf Leveling}_{(\mathcal{A}, \mathfrak{R})}^{\langle \mathcal{C}_{\bar{w}}\rangle}(G))$,

\item
$E$ is interpreted as $E(\mathfrak{G})\cup\bigcup_{\bar{w}\in[0,s-1]^h}E({\sf Leveling}_{(\mathcal{A}, \mathfrak{R})}^{\langle \mathcal{C}_{\bar{w}}\rangle}(G))$,

\item every $c\in \sigma\setminus\{E\}$ (resp. $C\in \sigma\setminus\{E\}$) is interpreted as $c^\mathfrak{G}$ (resp. $C^\mathfrak{G}$),

\item for every $\bar{w}\in[0,s-1]^h$,
$Y_{\bar{w}}$ is interpreted
as $V({\sf Leveling}_{(\mathcal{A}, \mathfrak{R})}^{\langle \mathcal{C}_{\bar{w}}\rangle}(G))\setminus V(\mathcal{A}_{\mathfrak{R}}^{\langle\mathcal{C}_{\bar{w}}\rangle})$,
\item for every $\bar{w}\in[0,s-1]^h$, $Z_{\bar{w}}$  is interpreted
as
$\mathcal{A}_{\mathfrak{R}}^{\langle\mathcal{C}_{\bar{w}}\rangle}$, and

\item for every $\bar{w}\in[0,s-1]^h$, $c^{\bar{w}}_1,\ldots,c^{\bar{w}}_{b}$ are interpreted
as the vertices $r_{1}^{\langle\bar{w}\rangle},\ldots,r_{b}^{\langle\bar{w}\rangle}$.
\end{itemize}

\paragraph{The structure $E_{\mathcal{A}}^{(s,h,p,b,l)}(G,\bar{a})$.}
The next structure contains a new element ``$\mathspace$'' in its universe. This will be used to intuitively expresses the absence of a vertex for the corresponding entry.
In the next definition we use notation introduced in~\Cref{subsec_conventions}. See also~\Cref{def:apices}
for the definition of the $\mathfrak{R}$-leveling ${\mathcal{A}}_{\mathfrak{R}}$ of $\mathcal{A}$.
\medskip

\begin{importantdef}\label{def:str}
Let $h,l\in\mathbb{N}$, let $s\in\mathbb{N}_{\geq 2}$, let $p,q\in\mathbb{N}_{\geq 3}$, and let $b\in[q]$.
Also,
let $G$ be a graph,
let $\bar{a}=(a_1,\ldots,a_l)\in V(G)^l$,
and let $(\mathcal{A},\mathfrak{R})$ be a well-aligned
$(s^h \cdot p,q)$-railed annulus flatness pair of $G\setminus V(\bar{a})$.
We set $\tau:=\{E,c_1,\ldots,c_l,C_1,\ldots,C_l\}$ (see~\Cref{def:apices})
and define the $\tau^{\langle s,h,b\rangle}$-structure $E_{\mathcal{A}}^{(s,h,p,b,l)}(G,\bar{a})$
as follows:
\begin{itemize}
\item
its universe is
$V(\mathsf{Compass}_\mathfrak{R}(\mathcal{A}))\cup \mathsf{inn}_\mathfrak{R}(\mathcal{A})\cup V({\mathcal{A}}_{\mathfrak{R}})\cup V(\bar{a})\cup\{\mathspace\}$,

\item
$E$ is interpreted as $E(G[V(\mathsf{Compass}_\mathfrak{R}(\mathcal{A}))\cup \mathsf{inn}_\mathfrak{R}(\mathcal{A})])\cup E({\mathcal{A}}_{\mathfrak{R}}) \cup E(G[V(\bar{a})])$,

\item For $i\in[l]$, $c_i$ (resp. $C_i$) is interpreted as $a_i$
(resp. $N_{G}(a_i)\cap (V(\mathsf{Compass}_\mathfrak{R}(\mathcal{A}))\cup \mathsf{inn}_\mathfrak{R}(\mathcal{A}))$),

\item for every $\bar{w}\in[0,s-1]^h$,
$Y_{\bar{w}}$ is interpreted as the set $V_{\mathcal{A}_{\bar{w}}}^{\mathsf{inn}}\cup\{\mathspace\}$,

\item  for every $\bar{w}\in[0,s-1]^h$,
 $Z_{\bar{w}}$ is interpreted as the set
 $V_{\mathcal{A}_{\bar{w}}}^{\sf lev}\cup\{\mathspace\}$,

\item for every $\bar{w}\in[0,s-1]^h$, $c^{\bar{w}}_1,\ldots,c^{\bar{w}}_{b}$ are interpreted
as $r_{1}^{\langle\bar{w}\rangle},\ldots,r_{b}^{\langle\bar{w}\rangle}$.

\end{itemize}
\end{importantdef}
Keep in mind that $E_{\mathcal{A}}^{(s,h,p,b,l)}(G,\bar{a})$ contains $V(\bar{a})\cup\bigcup_{w\in[0,s-1]^h}\{r_{1}^{\langle\bar{w}\rangle},\ldots,r_{b}^{\langle\bar{w}\rangle}\}$ as roots.
The structure $E_{\mathcal{A}}^{(s,h,p,b,l)}(G,\bar{a})$, can be seen as a substructure of
$\mathsf{Layered}_{\mathcal{A}}^{(s,h,p,b)}({\sf ap}_l(G,\bar{a}))$,
intuitively obtained from the latter after focusing on the part ``cropped'' by the middle cycle of each $\mathcal{C}_{\bar{w}}$,
for every $\bar{w}\in [0,s-1]^h$.

\subsection{Layered annotated types}
\label{subsec:layeredtypes}
We now present the definition of layered annotated types for $\sigma^{\langle s,h,b\rangle}$-structures.
For $\sigma = \{E\}^{\langle  l \rangle}$ (see~\Cref{subsec_apices}),
these correspond to the formulas in the \textsl{Local-Global-Irrelevancy Condition} of the reduction between
$(\mathsf{CMSO/tw}\!+\!\mathsf{dp},\mathsf{hw})$ and $(\mathsf{CMSO},\mathsf{tw})$.

\paragraph{The set  of formulas $\Psi_{\bar{w}}^{m,d,t}[\sigma^{\langle s,h,b\rangle}]$.}
Let $\sigma$ be a colored-graph vocabulary.
Let $m,d,t\in\mathbb{N}$.
We describe some modifications on the set $\mathcal{L}^{m,d}_{t}[\sigma]$.
Let $s,b\in\mathbb{N}$ and set $h:=m+d$.
Given a $\bar{w}\in [0,s-1]^{h}$,
we define $\Psi_{\bar{w}}^{m,d,t}[\sigma^{\langle s,h,b\rangle}]$ to be the set obtained from $\mathcal{L}^{m,d}_{t}[\sigma]$ after modifying each formula $\psi(\bar{X},\bar{x})\in \mathcal{L}^{m,d}_{t}[\sigma]$ as follows:
\begin{itemize}
\item ask that $\bar{X}$ and $\bar{x}$ are in $Y_{\bar{w}}\cup V(\bar{c})$, where $V(\bar{c})$ are the constant symbols of $\sigma^{\langle s,h,b\rangle}$,
and
\item
replace every predicate $\mathsf{dp}(x_1,y_1,\ldots,x_k,y_k)$ in $\psi(\bar{X},\bar{x})$
with a formula $\xi(x_1,y_1,\ldots,x_k,y_k)\in \mathsf{CMSO}[\sigma^{\langle s,h,b\rangle}]$,
demanding that the vertex sets of the (vertex-disjoint) paths $P_1,\ldots,P_k$ should be subsets of $Y_{\bar{w}}\cup Z_{\bar{w}}\cup V(\bar{c})$ (see~\Cref{lemma_dpMSO}).
\end{itemize}
Observe that  $\Psi_{\bar{w}}^{m,d,t}[\sigma^{\langle s,h,b\rangle}]\subseteq \mathsf{CMSO}[\sigma^{\langle s,h,b\rangle}]$.\medskip

We now give the definition of the layered annotated types.
\medskip

\begin{importantdef}[Layered Annotated Types]
\label{def:ltp}
Let $m,d,r,s,t,b\in\mathbb{N}$ where $r\leq d$. Let
$\sigma$ be a colored-graph vocabulary, let $\mathfrak{H}$ be a $\sigma^{\langle s,m+d,b\rangle}$-structure, and let $R_1,\ldots,R_{m+r}\subseteq V(\mathfrak{H})$.
Given a $\bar{w}\in[0,s-1]^{m+d}$,
an $m$-tuple $\bar{V} \in (2^{V(\mathfrak{H})})^m$ and
a $d$-tuple $\bar{v}\in V(\mathfrak{H})^d$,
we define $$\mathsf{ltp}_{m,d,t}^{(0,r)}(\mathfrak{H},\bar{R},\bar{w},\bar{V},\bar{v})=\{\psi(\bar{X},\bar{x})\in \Psi_{\bar{w}}^{m,d,t}[\sigma^{\langle s,m+d,b\rangle}]
\mid \mathfrak{H}\models \psi(\bar{V},\bar{v})\}.$$
Given an $i\in[d]$,
a $\bar{w}\in[0,s-1]^{m+d-i}$,
an $m$-tuple $\bar{V}  \in (2^{V(\mathfrak{H})})^m$,
and
a $(d-i)$-tuple $\bar{v}\in V(\mathfrak{H})^{d-i}$,
we define
\[\mathsf{ltp}_{m,d,t}^{(i,r)}(\mathfrak{H},\bar{R},\bar{w},\bar{V},\bar{v})=\bigcup_{w\in[0,s-1]} \bigcup_{u\in A^{\bar{w}w}}
\{\mathsf{ltp}_{m,d,t}^{(i-1,r)}(\mathfrak{H},\bar{R},\bar{w}w,\bar{V},\bar{v}u)\},
\]
where for each $w\in[0,s-1]$,
$A^{\bar{w}w} =
Y_{\bar{w}w},$ if $i\leq d-r$, while otherwise
$A^{\bar{w}w}=Y_{\bar{w}w} \cap R_{m+d-(i-1)}$.
For $i=d$, we use $\mathsf{ltp}_{m,d,t}^{(d,r)}(\mathfrak{H},\bar{R},\bar{w},\bar{V})$ to denote $\mathsf{ltp}_{m,d,t}^{(d,r)}(\mathfrak{H},\bar{R},\bar{w},\bar{V},\bar{v})$, where $\bar{v}$ has length zero.
Given an $i\in[1+d,m+d]$, a $\bar{w}\in[0,s-1]^{m+d-i}$, and an $(m+d-i)$-tuple $\bar{V}\in (2^{V(\mathfrak{H})})^{m+d-i}$,
we define
\[\mathsf{ltp}_{m,d,t}^{(i,r)}(\mathfrak{H},\bar{R},\bar{w},\bar{V})=
\bigcup_{w\in[0,s-1]}
\bigcup_{U\subseteq Y_{\bar{w}w}\cap R_{m+d-(i-1)}}\{\mathsf{ltp}_{m,d,t}^{(i-1,r)}(\mathfrak{H},\bar{R},\bar{w}w,\bar{V}U)\}.
\]
For $i=m+d$, we set $\mathsf{ltp}_{m,d,t}^{(m+d,r)}(\mathfrak{H},\bar{R}):=\mathsf{ltp}_{m,d,t}^{(m+d,r)}(\mathfrak{H},\bar{R},\bar{w},\bar{V})$, where $\bar{w},\bar{V}$ have length zero.
\end{importantdef}

As in~\Cref{prop:patterns}, it is easy to observe that the number of different layered annotated types is bounded and that they can be defined in $\mathsf{CMSO}$.
 \begin{lemma}\label{lemma_number_partial_patterns}
There is some function $\newfun{fun:partial-patterns}:\mathbb{N}^7 \to \mathbb{N}$ such that for every $m,d,r,s,t,b\in\mathbb{N}$, where $r\leq d$, every colored-graph vocabulary $\sigma$,
and every $i\in[m+d]$, the set
\begin{equation*}
\mathcal{D}_{m,d,r,s,t,b}^{(i)}(\sigma):=
\left\{\!
\mathsf{ltp}_{m,d,t}^{(i,r)}(\mathfrak{H},\bar{R},\bar{w},\bar{V},\bar{v})~\left|~
\begin{aligned}
&  \mathfrak{H}\in\mathbb{STR}[\sigma^{\langle s,m+d,q\rangle}],\\
& \bar{R}\subseteq V(\mathfrak{H})^{m+r},\\
& \bar{w}\in[0,s-1]^{m+d-i},\\
& \bar{V}\subseteq V(\mathfrak{H})^{\min\{m,m+d-i\}},\\
& \bar{v}\in V(\mathfrak{H})^{\max\{0,d-i\}}
\end{aligned}\right.\right\}
\end{equation*}
has size at most $\funref{fun:partial-patterns}(i,m,d,s,t,b,|\sigma|)$.
Moreover,  if $P_1,\ldots,P_{c}$ is an enumeration of all sets in $\mathcal{D}_{m,d,r,s,t,b}^{(m+d)}(\sigma)$,  where $c\in\{1,\ldots,\funref{fun:partial-patterns}(m+d,m,d,s,t,b,|\sigma|)\}$,
then there exists a set $\Psi_{m,d,s,t,b}^\sigma:=\{\beta_1,\ldots,\beta_c\}\subseteq \mathsf{CMSO}[\sigma^{\langle s,m+d,b\rangle}\cup\{R_1,\ldots,R_{m+r}\}]$
such that for every $i\in\{1,\ldots,c\}$, every $\sigma^{\langle s,m+d,b\rangle }$-structure $\mathfrak{H}$ and every $\bar{R}\subseteq V(\mathfrak{H})^{m+r}$,
$(\mathfrak{H},\bar{R})\models \beta_i \iff \mathsf{ltp}_{m,d,t}^{(m+d,r)}(\mathfrak{H},\bar{R}) = P_i.$
\end{lemma}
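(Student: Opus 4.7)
The plan is to mirror the proof of \Cref{prop:patterns} but adapted to the layered setting of \Cref{def:ltp}, proceeding by induction on the level $i$ of the type construction. The argument splits naturally into two parts: a cardinality bound (establishing the existence of $\funref{fun:partial-patterns}$), and the explicit construction of the $\mathsf{CMSO}$-formulas $\beta_1,\ldots,\beta_c$ defining each possible top-level type.

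For the cardinality bound, I first handle the base case $i=0$. Every element of $\mathcal{D}_{m,d,r,s,t,b}^{(0)}(\sigma)$ is, by definition, a subset of $\Psi_{\bar{w}}^{m,d,t}[\sigma^{\langle s,m+d,b\rangle}]$. The latter is obtained from $\mathcal{L}_t^{m,d}[\sigma]$ by syntactic modifications that neither depend on $\bar{w}$ nor enlarge the set, so its size is bounded by a function of $m,d,t,s,b,|\sigma|$ (see the remarks following the definition of $\mathcal{L}_t^{m,r}[\sigma]$ in \Cref{subsec:atp}). Hence $|\mathcal{D}^{(0)}|\leq 2^{|\Psi_{\bar{w}}^{m,d,t}[\sigma^{\langle s,m+d,b\rangle}]|}$. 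For the inductive step, \Cref{def:ltp} expresses each level-$i$ type as a union, over $w\in[0,s-1]$, of a set of level-$(i-1)$ types (one per choice of $u$ or $U$). Therefore the number of level-$i$ types is at most $(2^{|\mathcal{D}^{(i-1)}|})^{s}$, which inductively gives a bound of the form of an iterated exponential in $i$, depending on the declared parameters. Composing, we obtain $\funref{fun:partial-patterns}(i,m,d,s,t,b,|\sigma|)$.

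For the $\mathsf{CMSO}$-definability part, I construct recursively, for every level $i\in[0,m+d]$ and every type $P\in \mathcal{D}^{(i)}$, a formula $\beta_P^{(i)}(\bar{W},\bar{X},\bar{x})$ in $\mathsf{CMSO}[\sigma^{\langle s,m+d,b\rangle}\cup\{R_1,\ldots,R_{m+r}\}]$ with the property that $(\mathfrak{H},\bar{R})\models\beta_P^{(i)}(\bar{w},\bar{V},\bar{v})$ iff $\mathsf{ltp}_{m,d,t}^{(i,r)}(\mathfrak{H},\bar{R},\bar{w},\bar{V},\bar{v})=P$ (with the index $\bar{w}$ passed as finitely many unary flags, since $\bar{w}\in[0,s-1]^{m+d-i}$ takes only finitely many values). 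For $i=0$, $\beta_P^{(0)}$ is the finite conjunction $\bigwedge_{\psi\in P}\psi \wedge \bigwedge_{\psi\notin P}\neg\psi$ over the (finite) set $\Psi_{\bar{w}}^{m,d,t}$. For the step $i\in[d]$, a type $P$ at level $i$ is determined by the collection, for each $w\in[0,s-1]$, of the level-$(i-1)$ types that arise from some $u\in A^{\bar{w}w}$; so $\beta_P^{(i)}$ is a finite boolean combination of formulas of the form $\exists u\in A^{\bar{w}w}\, \beta_{Q}^{(i-1)}(\bar{w}w,\bar{V},\bar{v}u)$ (saying that a type $Q$ is realized) and their negations (saying that $Q$ is not realized), where the membership condition in $A^{\bar{w}w}$ is expressed by $u\in Y_{\bar{w}w}$ (respectively $u\in Y_{\bar{w}w}\cap R_{m+d-(i-1)}$) depending on whether $i\leq d-r$ or $i>d-r$. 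The step $i\in[1+d,m+d]$ is identical except that first-order quantifiers on $u$ are replaced by set quantifiers on $U\subseteq Y_{\bar{w}w}\cap R_{m+d-(i-1)}$, which is still $\mathsf{MSO}$-expressible. Finally, $\Psi_{m,d,s,t,b}^{\sigma}:=\{\beta_{P_1}^{(m+d)},\ldots,\beta_{P_c}^{(m+d)}\}$ is the desired set.

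The only delicate point, rather than a real obstacle, is the bookkeeping of the two regimes $i\leq d-r$ and $i>d-r$ in \Cref{def:ltp}: in the first, the quantifier ranges over all of $Y_{\bar{w}w}$, while in the second it is restricted to $Y_{\bar{w}w}\cap R_{m+d-(i-1)}$. Both restrictions are atomic in our enriched vocabulary, so they relativize quantifiers without leaving $\mathsf{CMSO}$. All the rest is a faithful copy of the classical game-theoretic argument for $\mathsf{MSO}$-types (see \cite[Lemma~2.1]{Shelah75them} and~\cite[Definition~2.2.5]{EbbinghausF99fini}), exactly as invoked in the proof of \Cref{prop:patterns}.
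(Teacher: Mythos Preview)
Your proposal is correct and follows exactly the approach the paper intends: the paper does not give an explicit proof of this lemma but merely states ``As in \Cref{prop:patterns}, it is easy to observe that the number of different layered annotated types is bounded and that they can be defined in $\mathsf{CMSO}$,'' and your inductive argument (bounding $|\mathcal{D}^{(i)}|$ by iterated exponentiation from $|\Psi_{\bar{w}}^{m,d,t}|$, then recursively building defining formulas via relativized quantifiers) is precisely the standard type-theoretic argument invoked there. One minor remark: since a level-$i$ type is a single subset of $\mathcal{D}^{(i-1)}$ (the union over $w$ and $u$ is taken inside the set), the bound $2^{|\mathcal{D}^{(i-1)}|}$ already suffices without the extra exponent $s$, but your looser bound is of course also valid.
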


\section{Local-global-irrelevance}\label{sec_exchangability}
The goal of this section is to present the proof of the \textsl{Local-Global-Irrelevancy Condition} of the reduction
between
$(\mathsf{CMSO/tw}\!+\!\mathsf{dp},\mathsf{hw})$ and $(\mathsf{CMSO},\mathsf{tw})$ (see~\Cref{def_red}).

\subsection{Item (a) of the Local-Global-Irrelevancy Condition}
\label{subsec:itema}

\paragraph{The algorithm $\mathbb{A}$ of the \textsl{Local-Global-Irrelevancy Condition}.}
Let us describe the algorithm $\mathbb{A}$ of the \textsl{Local-Global-Irrelevancy Condition}.
It finds a structure of the form $E_{\mathcal{A}}^{(s,h,p,b,l)}(G,\bar{a})$, for appropriate values of $s,h,p,b,l$
defined as follows.
Let $m,r,t\in\mathbb{N}$.
We set $l:=\funref{label_hierarchical}(\mathsf{hw}(G))$,
$d:=r+\funref{@transversales}(r,l)$,
$s:=\funref{fun:bid}(t)$,
$h:=m+d$,
$p :=\funref{@aristocracias}(d)$,
and $b:=\funref{fun_ulinkastar}(d)$.
The algorithm $\mathbb{A}$ first calls the algorithm of~\Cref{prop:flatann}
with input $(G,s^h\cdot p,\mathsf{hw}(G))$,
which outputs
\begin{itemize}
\item either a report that $\tw(G)\leq c_{m,r,t,l}$, for some $c_{m,r,t,l}$ depending on $m,r,t,$ and $l$,
or
\item an apex set $A$ of size at most $l$ and a well-aligned $(s^h\cdot p,s^h\cdot p)$-railed annulus flatness pair $(\mathcal{A},\mathfrak{R})$ of $G\setminus A$,
where $\tw(G[V(\mathsf{Compass}_\mathfrak{R}(\mathcal{A}))\cup\mathsf{inn}_\mathfrak{R}(\mathcal{A})])\leq c_{m,r,t,l}$.
\end{itemize}
In the latter case, it considers an arbitrary ordering $\bar{a}$ of $A$ (if $|A|<l$, enhance $\bar{a}$ by adding $l- |A|$ copies of an arbitrary vertex of $G$ and update $(\mathcal{A},\mathfrak{R})$ accordingly).
Using $(\mathcal{A},\mathfrak{R})$ and $\bar{a}$, we construct  $E_{\mathcal{A}}^{(s,h,p,q,l)}(G,\bar{a})$ (in linear time).

\paragraph{Bounding the treewidth of $\mathsf{Gaifman}(E_{\mathcal{A}}^{(s,h,p,b,l)}(G,\bar{a}))$.}
The bound on the treewidth of the Gaifman graph of $E_{\mathcal{A}}^{(s,h,p,q,l)}(G,\bar{a})$,
follows from the following observation, which is based on the fact that when considering the union of a graph $G$ and the $\mathfrak{R}$-leveling of $\mathcal{A}$ in $G$ (see~\Cref{subsec_levelings} for the definition), the treewidth of the obtained graph is at most the treewidth of $G$ plus one.

\begin{observation}\label{obs:enha}
Let $s,h,l\in\mathbb{N}$, let $p,q\in\mathbb{N}_{\geq 3}$, let $b\in[q]$,
let $G$ be a graph,
let $\bar{a}\in V(G)^l$,
and let $(\mathcal{A},\mathfrak{R})$ be a well-aligned
$(s^h \cdot p,q)$-railed annulus flatness pair of $G\setminus V(\bar{a})$.
It holds that $\tw(\mathsf{Gaifman}(E_{\mathcal{A}}^{(s,h,p,b,l)}(G,\bar{a})))\leq \tw(G[V(\mathsf{Compass}_\mathfrak{R}(\mathcal{A}))\cup\mathsf{inn}_\mathfrak{R}(\mathcal{A})])+1$.
\end{observation}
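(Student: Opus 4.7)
The Gaifman graph $H:=\mathsf{Gaifman}(E_{\mathcal{A}}^{(s,h,p,b,l)}(G,\bar{a}))$ has, directly from \Cref{def:str}, universe $V(\mathsf{Compass}_\mathfrak{R}(\mathcal{A}))\cup \mathsf{inn}_\mathfrak{R}(\mathcal{A})\cup V(\mathcal{A}_\mathfrak{R})\cup V(\bar{a})\cup\{\mathspace\}$ and edge set $E(G')\cup E(\mathcal{A}_\mathfrak{R})\cup E(G[V(\bar{a})])$, where $G':=G[V(\mathsf{Compass}_\mathfrak{R}(\mathcal{A}))\cup\mathsf{inn}_\mathfrak{R}(\mathcal{A})]$. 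Recalling that $V(\mathcal{A}_\mathfrak{R})={\sf ground}_\mathfrak{R}(\mathcal{A})\cup{\sf vflaps}_\mathfrak{R}(\mathcal{A})$ and ${\sf ground}_\mathfrak{R}(\mathcal{A})\subseteq V(G')$, we may view $H$ as the union $G'\cup \mathcal{A}_\mathfrak{R}$ (sharing exactly the ground vertices) together with two components disjoint from it: $G[V(\bar{a})]$ and the singleton $\{\mathspace\}$ (note that $V(\bar a)\cap V(G')=\emptyset$ because $(\mathcal{A},\mathfrak{R})$ is a flatness pair of $G\setminus V(\bar a)$).

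The plan is to start from an optimal tree decomposition $(T,\chi)$ of $G'$ of width $k:=\tw(G')$ and extend it by accommodating the flap-vertices of $\mathcal{A}_\mathfrak{R}$ and then the disjoint apex/isolated components. The key structural fact is that $\mathcal{A}_\mathfrak{R}$ is precisely the bipartite ``contraction'' of $\mathsf{Compass}_\mathfrak{R}(\mathcal{A})$: each flap $F\in\mathsf{Flaps}_\mathfrak{R}(\mathcal{A})$ (a connected subgraph of $G'$) is replaced by a new vertex $v_F$ whose neighborhood is $\partial F$, with $|\partial F|\leq 3$ and with distinct flaps overlapping only at their boundaries (property~(4) of renditions). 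For each flap $F$, since $F$ is connected, the set $T_F:=\{t\in V(T)\mid\chi(t)\cap V(F)\neq\emptyset\}$ is a connected subtree of $T$, and $T_u\subseteq T_F$ for every $u\in\partial F$. Adding $v_F$ to the bag of every node of $T_F$ produces a valid tree decomposition of $G'\cup\mathcal{A}_\mathfrak{R}$: the edges $\{v_F,u\}$ of $\mathcal{A}_\mathfrak{R}$ are covered because $u\in V(F)$ forces some bag in $T_F$ to contain both, and the bags carrying $v_F$ form the connected subtree $T_F$.

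The main obstacle, and where I expect the real work to be, is to bound the per-bag increase by exactly~$1$. A naïve argument adds $v_F$ to every bag that meets $V(F)$, which could accumulate many flap-vertices on a single bag (since a single ground vertex of $\chi(t)$ may sit on the boundary of several flaps). To avoid this, I exploit the flat rendition of $\mathcal{A}$: using that the cells of $\Gamma$ form an annular hypergraph of bounded edge-size and that flaps are pairwise almost-disjoint, a tree decomposition of $G'$ that respects the radial/angular structure of the rendition---combined with marking each $v_F$ to at most one canonically chosen node of $T_F$ and propagating only along a minimal Steiner subtree inside $T_F$---yields, by a disjointness argument on these canonical paths, that at most one $v_F$ is added to any bag. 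This gives width $\le k+1$ for the extension on $V(G')\cup{\sf vflaps}_\mathfrak{R}(\mathcal{A})$.

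Finally, to incorporate the two disjoint pieces $G[V(\bar a)]$ and $\{\mathspace\}$, I attach a single fresh node carrying $V(\bar a)\cup\{\mathspace\}$ to an arbitrary node of $T$ through an empty separator; since these vertices share no edges with the remainder, this is a valid extension and does not interact with the flap step. The resulting width is $\max\{k+1,\,|V(\bar a)\cup\{\mathspace\}|-1\}$; in the intended use via \Cref{prop:flatann} in \Cref{subsec:itema}, $k$ is at least linear in the annulus parameter $p$, which is chosen sufficiently large so that the bound $|V(\bar a)|\le\funref{label_hierarchical}(\mathsf{hw}(G))$ is absorbed into $k+1$. Thus $\tw(H)\le k+1=\tw(G[V(\mathsf{Compass}_\mathfrak{R}(\mathcal{A}))\cup\mathsf{inn}_\mathfrak{R}(\mathcal{A})])+1$, as claimed.
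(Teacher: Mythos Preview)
Your decomposition of the Gaifman graph is correct, and the plan—extend an optimal tree decomposition of $G'$ by the flap-vertices, then attach the disjoint apex/$\mathspace$ component—matches the paper's one-line justification, which simply asserts that adjoining the $\mathfrak{R}$-leveling raises treewidth by at most one.

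The central step, however, has a real gap. You correctly isolate the obstacle: adding $v_F$ to every bag of $T_F$ fails because a single ground vertex may sit on arbitrarily many flap boundaries. But your proposed remedy—``propagating only along a minimal Steiner subtree inside $T_F$'' and invoking ``a disjointness argument on these canonical paths''—does not go through. For any ground vertex $u$ shared by flaps $F_1,\dots,F_m$, the edge $\{v_{F_i},u\}$ forces the subtree carrying $v_{F_i}$ to meet $T_u$; those Steiner subtrees all pass through $T_u$ and are therefore not disjoint. Appealing to ``a tree decomposition of $G'$ that respects the radial/angular structure of the rendition'' is not a defined construction, and there is no reason such a decomposition exists with width $\tw(G')$. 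A workable route is to use the clique-sum structure of the compass along the $\le 3$-vertex flap boundaries: in the resulting decomposition each $v_F$ becomes simplicial and is absorbed for free, yielding width $\tw(G')+O(1)$, which is all the application in \Cref{subsec:itema} needs. The paper does not supply any of this; it simply states the fact.

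Your second point is correct and worth keeping: the disjoint piece $G[V(\bar a)]$ contributes its own treewidth, so the literal ``$+1$'' fails whenever $l-1>\tw(G')+1$. Your resolution—that in the call from \Cref{subsec:itema} the compass contains an $(s^h p,\,s^h p)$-railed annulus, forcing $\tw(G')$ to dominate $l$—is the right reading of how the observation is meant to be used, and it exposes a genuine imprecision in the paper's statement rather than an error in your reasoning.
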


\subsection{Item (b) of the Local-Global-Irrelevancy Condition}
This subsection contains the proof of the following lemma,
which implies item (b) of the~\textsl{Local-Global-Irrelevancy Condition} of the reduction
between
$(\mathsf{CMSO/tw}\!+\!\mathsf{dp},\mathsf{hw})$ and $(\mathsf{CMSO},\mathsf{tw})$ (see~\Cref{def_red}).
Recall that $\Psi_{m,d,s,t,b}^\sigma$ is the set of formulas from~\Cref{lemma_number_partial_patterns} (of vocabulary $\sigma_{m,r,t,l}\cup\{R_1,\ldots, R_{m+r}\}
$)
and $\Phi_{m,r,t}$ is the set of formulas from the Definability Condition (see~\Cref{prop:patterns} in~\Cref{subsec:anntypes}).

\begin{lemma}\label{lem_equirep}
Let $l,m,r,t\in\mathbb{N}$.
We set $d:=r+\funref{@transversales}(r,l)$,
$s:=\funref{fun:bid}(t)$,
$h:=m+d$,
$p :=\funref{@aristocracias}(d)$,
and $b:=\funref{fun_ulinkastar}(d)$.
Also, let $G$ be a graph,
let $\bar{R}:=(R_1,\ldots, R_{m+r})\in(2^{V(G)})^{m+r}$, let $\bar{a}\in V(G)^l$,
and let $(\mathcal{A},\mathfrak{R})$ be a well-aligned $(s^h\cdot p,q)$-railed annulus flatness pair of $G\setminus V(\bar{a})$,
where $q\geq 5/2\cdot \funref{fun_ulinkastar}(d)$.
For every $S_0,\ldots, S_{m+r}\subseteq \mathsf{inn}_{\mathfrak{R}}(\mathcal{A})$,
\begin{eqnarray*}
& & \text{if $(S_0,\ldots, S_{m+r})$ is $\Psi_{m,r,t,l}$-irrelevant for $(E_{\mathcal{A}}^{(s,h,p,b,l)}(G,\bar{a}),\bar{R}^{\star})$,}\\
& & \text{then $(S_0,\ldots, S_{m+r})$ is $\Phi_{m,r,t}$-irrelevant for $(G,\bar{R})$,}
\end{eqnarray*}
where $\bar{R}^\star:= (R_1\cap V(E_{\mathcal{A}}^{(s,h,p,b,l)}(G,\bar{a})),\ldots, R_{m+r} \cap V(E_{\mathcal{A}}^{(s,h,p,b,l)}(G,\bar{a})))$ and $\Psi_{m,r,t,l}:=\Psi_{m,d,s,t,b}^{\{E\}^{\langle l \rangle}}$.
\end{lemma}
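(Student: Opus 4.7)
The plan is to chain three equivalences. First, by Lemma~\ref{prop:patterns}, $\Phi_{m,r,t}$-irrelevancy of $(S_0,\ldots,S_{m+r})$ in $(G,\bar{R})$ is equivalent to equality of the annotated type $\mathsf{tp}^{m+r}_{m,r,t}(G,\bar{R})$ before and after removing the $S_i$'s. Second, by Lemma~\ref{lem_inter2equiv}, equality of these annotated types is implied by equality of the \emph{semi}-annotated type $\mathsf{tp}^{(m+d,r)}_{m,d,t}$ on the apex-folded structure $\mathsf{ap}_l(G,\bar{a})$; the depth gap $d-r=\funref{@transversales}(r,l)$ has been budgeted precisely for that lemma to apply. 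Third, the $\Psi_{m,r,t,l}$-irrelevancy hypothesis on $E_{\mathcal{A}}^{(s,h,p,b,l)}(G,\bar{a})$ yields, via Lemma~\ref{lemma_number_partial_patterns}, equality of the layered annotated type $\mathsf{ltp}^{(m+d,r)}_{m,d,t}$ on both sides. Since $S_0\subseteq\mathsf{inn}_\mathfrak{R}(\mathcal{A})$, the apex tuple $\bar{a}$ and the railed-annulus data $\mathfrak{R}$ remain valid in $G\setminus S_0$, so the enhanced structure on both sides uses the same $(\mathcal{A},\mathfrak{R},\bar{a})$. Consequently, the only nontrivial implication to establish is: equality of the layered annotated type on $E_{\mathcal{A}}^{(s,h,p,b,l)}$ implies equality of the semi-annotated type on $\mathsf{ap}_l$.

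I would establish this implication by a simultaneous back-and-forth induction on the level $i\in[0,m+d]$, with the invariant that once a layer $\bar{w}\in[0,s-1]^{m+d-i}$ and tuples $\bar{V},\bar{v}$ (with entries in $V_{\mathcal{A}_{\bar{w}}}^{\mathsf{inn}}\cup V(\bar{a})$) have been fixed, agreement of the layered subtypes at $\bar{w}$ forces agreement of the semi-annotated subtypes. The base case $i=0$ reduces to showing that each atomic formula of $\mathcal{L}^{m,d}_t[\{E\}^{\langle l\rangle}]$ has the same truth value on $(\mathsf{ap}_l(G,\bar{a}),\bar{V},\bar{v})$ as its counterpart in $\Psi_{\bar{w}}^{m,d,t}$ does on $(E_{\mathcal{A}}^{(s,h,p,b,l)}(G,\bar{a}),\bar{V},\bar{v})$. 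Equality, edge, membership, and $\mathsf{card}_p$ predicates transfer directly. Each disjoint-paths predicate $\mathsf{dp}_k$ is handled by Proposition~\ref{lem_colomodelsrerout}: its endpoints lie in $V_{\mathcal{A}_{\bar{w}}}^{\mathsf{inn}}\cup V(\bar{a})$, hence outside the influence of $\mathcal{C}_{\bar{w}}$, so any realizing linkage can be rerouted to cross $\hat{C}_{\bar{w}}^\mathsf{mid}$ only at $r_1^{\langle\bar{w}\rangle},\ldots,r_b^{\langle\bar{w}\rangle}$ (the constants $c^{\bar{w}}_1,\ldots,c^{\bar{w}}_b$, with $b=\funref{fun_ulinkastar}(d)$), decomposing into an outer piece in $G_{\bar{w}}^{\sf out}$---present unchanged in the enhanced structure---and an inner piece in the leveling $\mathcal{A}_{\mathfrak{R}}^{\langle\mathcal{C}_{\bar{w}}\rangle}$, which is precisely the realm over which the rewritten $\mathsf{dp}_k$ quantifies in $\Psi_{\bar{w}}^{m,d,t}$. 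For the inductive step, one quantifier is peeled off. A set quantification (level $i\ge d+1$) over $U$ with $\tw(G,U)\le t$ is accommodated by Lemma~\ref{lemma:buffer} which, thanks to $s=\funref{fun:bid}(t)$, supplies a coordinate $w\in[0,s-1]$ with $U\cap V(\cupall\mathsf{Influence}_{\mathfrak{R}}(\mathcal{C}_{\bar{w}w}))=\emptyset$, so $U\subseteq V_{\mathcal{A}_{\bar{w}w}}^{\mathsf{inn}}\subseteq Y_{\bar{w}w}$, which is exactly the range of the corresponding layered-side quantifier; sets of larger annotated treewidth annihilate every formula in $\mathcal{L}^{m,d}_t$ and are handled uniformly. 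First-order quantification (level $i\le d$) is analogous, with singletons always being safe.

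The hard part will be ensuring that the coordinate $w$ chosen at each inductive round is playable from \emph{both} sides of the back-and-forth game: the opponent on the semi-annotated side freely chooses a set or vertex in $\mathsf{ap}_l(G,\bar{a})$, and we must both fit it into some $V_{\mathcal{A}_{\bar{w}w}}^{\mathsf{inn}}$ and mirror the move on the layered side, where the quantifier ranges exactly over $Y_{\bar{w}w}$. This is what forces the $(s,h)$-refinement of $\mathcal{A}$ to have $h=m+d$ nested layers with $s=\funref{fun:bid}(t)$ branching at each level, so that one coordinate can be burned per round while Lemma~\ref{lemma:buffer} still guarantees an unused safe layer. A secondary subtlety is that the $\mathsf{dp}$-extension must survive the apex-folding of Proposition~\ref{lem:translation-dp-apices}: the extra first-order quantifiers introduced by $\zeta_{\sf dp}$ are exactly the ones accommodated by the gap $\funref{@transversales}(r,l)$ in the definition of $d$, and the restriction of path-vertex sets to $Y_{\bar{w}}\cup Z_{\bar{w}}\cup V(\bar{c})$ inside $\Psi_{\bar{w}}^{m,d,t}$ is what makes the rerouting produce a faithful witness on the enhanced side.
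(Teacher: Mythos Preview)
Your high-level chain (Lemma~\ref{prop:patterns} $\rightarrow$ Lemma~\ref{lem_inter2equiv} $\rightarrow$ Lemma~\ref{lemma_number_partial_patterns}, with the core implication ``$\mathsf{ltp}$ equality on $E_{\mathcal{A}}$ implies $\mathsf{tp}^{(\cdot,r)}$ equality on $\mathsf{ap}_l$'' proved by a back-and-forth induction) is exactly the paper's route, and the roles you assign to $d-r=\funref{@transversales}(r,l)$, $s=\funref{fun:bid}(t)$, and $h=m+d$ are correct.

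However, there is a genuine gap in your inductive step. You write that after Lemma~\ref{lemma:buffer} supplies a safe coordinate $w$, the quantified set satisfies $U\subseteq V_{\mathcal{A}_{\bar{w}w}}^{\mathsf{inn}}$. This is false: Lemma~\ref{lemma:buffer} only says $U$ avoids $V(\cupall\mathsf{Influence}_\mathfrak{R}(\mathcal{C}_{\bar{w}w}))$, so $U$ may (and generically will) have vertices \emph{outside} the compass of $\mathcal{A}$ altogether, far from any $Y_{\bar{w}w}$. Those outer vertices are simply not visible in $E_{\mathcal{A}}^{(s,h,p,b,l)}(G,\bar a)$, whose universe is $V(\mathsf{Compass}_\mathfrak{R}(\mathcal{A}))\cup\mathsf{inn}_\mathfrak{R}(\mathcal{A})\cup V(\mathcal{A}_\mathfrak{R})\cup V(\bar a)\cup\{\mathspace\}$. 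The missing idea is an \emph{in/out decomposition}: one plays only $U^{\sf in}:=U\cap V_{\mathcal{A}_{\bar{w}w}}^{\mathsf{inn}}$ on the layered side, and reconstructs the response on the $\mathsf{ap}_l(G\setminus S_0,\bar a)$ side as $U^{\sf in\,(\text{new})}\cup(U\setminus U^{\sf in})$; the outer part $U\setminus U^{\sf in}$ survives intact precisely because every $S_i\subseteq\mathsf{inn}_\mathfrak{R}(\mathcal{A})$. The same misconception appears in your base case for $\mathsf{dp}_k$: the outer linkage $\tilde{L}_2$ in $G_{\bar w}^{\sf out}$ is \emph{not} ``present in the enhanced structure''; rather, $G_{\bar w}^{\sf out}$ coincides for $G$ and $G\setminus S_0$ (again because $S_0\subseteq\mathsf{inn}_\mathfrak{R}(\mathcal{A})$), so $\tilde{L}_2$ is reused verbatim, while only $\tilde{L}_1$ is matched through the layered type and then lifted back via Proposition~\ref{lem_levelingpaths}. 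Without this decomposition the induction cannot close, since the layered quantifier ranges only over $Y_{\bar{w}w}$ and you have no mechanism to mirror the outer portion of $U$. The $\mathsf{card}_p$ case also needs this split: equality of $|U^{\sf in}|\bmod p$ comes from the layered type, and equality of $|U|\bmod p$ follows because the outer parts coincide.
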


The proof of~\Cref{lem_equirep} follows from the next result (\Cref{lemma:intermediate1}), combined with~\Cref{lem_inter2equiv}. Before stating~\Cref{lemma:intermediate1}, we provide some additional definitions and notational conventions needed in the proof of~\Cref{lemma:intermediate1}.

\paragraph{Traces of tuples of sets (of bounded annotated treewidth) and vertices.}
Let $m,r,d,t,l\in\mathbb{N}$, where $r\leq d$, and set $h:=m+d$.
Let $G$ be a graph, let $\bar{a}\in V(G)^l$,
and let $(\mathcal{A},\mathfrak{R})$ be a well-aligned $(\funref{fun:bid}(t)^{d}\cdot p,q)$-railed annulus flatness pair of $G\setminus V(\bar{a})$, for some $p,q\geq 3$.
Also, let $\ell_1,\ell_2\in\mathbb{N}$ such that $\ell_1+\ell_2\leq h$.
Given a $\bar{V} = (V_1,\ldots,V_{\ell_1})\in (2^{V(G)})^{\ell_1}$,
and $\bar{v}\in V(G)^{\ell_2}$,
where for each $j\in[\ell_1]$, $\tw(G,V_j)\leq t$,
we define the \emph{trace of $\bar{V},\bar{v}$},
denoted by ${\sf trace}(\bar{V},\bar{v})$,
to be the $h$-tuple $w_1\ldots w_{h}\in [0,\funref{fun:bid}(t)-1]^h$ inductively defined as follows:
for every $i\in[\ell_1]$, we set $w_i$ to be the maximum index in $[0,\funref{fun:bid}(t)-1]$ such that
$V_i\cap V(\cupall\mathsf{Influence}_\mathfrak{R}(\mathcal{C}_{w_1\ldots w_i}))=\emptyset$.
Also, for every $i\in[\ell_2]$,
we set $w_{\ell_1+i}$ to be the maximum index in $[0,\funref{fun:bid}(t)-1]$ such that
$\{v_i\}\cap V(\cupall\mathsf{Influence}_\mathfrak{R}(\mathcal{C}_{w_1\ldots w_{m+i}}))=\emptyset$.
Note that such indices always exist in $[0,\funref{fun:bid}(t)-1]$ because of~\Cref{lemma:buffer}.

Given a $\bar{V} = (V_1,\ldots,V_{\ell_1})\in (2^{V(G)})^{\ell_1}$,
and $\bar{v}\in V(G)^{\ell_2}$,
where for each $j\in[\ell_1]$, $\tw(G,V_j)\leq t$,
if $w_1\ldots w_h$ is the ${\sf trace}(\bar{V},\bar{v})$, then we set
$V_{i}^{\sf in}:=V_i\cap V_{\mathcal{A}_{w_1\ldots w_i}}^{\mathsf{inn}}$
and
$v_{i}^{\sf in}:=v_i,$ if $v_i\in  V_{\mathcal{A}_{w_1\ldots w_{m+i}}}^{\mathsf{inn}}$, while $v_i:=\mathspace$, if otherwise.

\begin{lemma}\label{lemma:intermediate1}
Let $l,m,d,r,t\in\mathbb{N}$, where
$r\leq d$.
We set
$s:=\funref{fun:bid}(t)$,
$h:=m+d$,
$p :=\funref{@aristocracias}(d)$,
and $b:=\funref{fun_ulinkastar}(d)$.
Also, let $G$ be a graph,
let $R_1,\ldots, R_{m+r}\subseteq V(G)$, let $\bar{a}\in V(G)^l$,
and let $(\mathcal{A},\mathfrak{R})$ be a well-aligned $(s^h \cdot p,q)$-railed annulus flatness pair of $G\setminus V(\bar{a})$,
where $q\geq 5/2\cdot \funref{fun_ulinkastar}(d)$.
For every $S_0,\ldots, S_{m+r}\subseteq \mathsf{inn}_{\mathfrak{R}}(\mathcal{A})$,
\begin{eqnarray*}
& & \text{if $(S_0,\ldots, S_{m+r})$ is $\Psi_{m,r,t,l}$-irrelevant for $(E_{\mathcal{A}}^{(s,h,p,b,l)}(G,\bar{a}),\bar{R}^{\star})$,}\\
& & \text{then
$\mathsf{tp}_{m,d,t}^{(m+d,r)}
({\sf ap}_{l}(G,\bar{a}),R_1,\ldots, R_{m+r})
 = \mathsf{tp}_{m,d,t}^{(m+d,r)}
({\sf ap}_{l}(G\setminus S_0,\bar{a}),R_1\setminus S_1,\ldots, R_{m+r}\setminus S_{m+r}),
$}
\end{eqnarray*}
where $\bar{R}^\star:= (R_1\cap V(E_{\mathcal{A}}^{(s,h,p,b,l)}(G,\bar{a})),\ldots, R_{m+r} \cap V(E_{\mathcal{A}}^{(s,h,p,b,l)}(G,\bar{a}))$ and $\Psi_{m,r,t,l}:=\Psi_{m,d,s,t,b}^{\{E\}^{\langle l \rangle}}$.
\end{lemma}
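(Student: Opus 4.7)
The plan is to proceed by induction on the type depth $i$ from $0$ up to $m+d$, proving the following strengthened statement: for every tuple $(\bar{V}, \bar{v})$ in $G$ of the appropriate length and satisfying the annotated-treewidth and $R_j$-membership constraints of $\mathsf{tp}^{(i,r)}_{m,d,t}$, setting $\bar{w} := \mathsf{trace}(\bar{V}, \bar{v})$ and letting $(\bar{V}^{\mathsf{in}}, \bar{v}^{\mathsf{in}})$ denote the localized tuple, equality of $\mathsf{ltp}^{(i,r)}_{m,d,t}$ at parameter $\bar{w}$ on both sides of the removal of $(S_0, \ldots, S_{m+r})$ implies equality of $\mathsf{tp}^{(i,r)}_{m,d,t}$ on the corresponding apex structures. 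Instantiating this at $i = m+d$ with empty tuples yields the lemma. The structural parallel between \Cref{def:tp} and \Cref{def:ltp} makes the induction natural, with the trace acting throughout as a dictionary translating a global choice in $G$ to a layer coordinate $\bar{w}$ in the $(s, h)$-refinement of $\mathcal{A}$.

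For the atomic base case $i = 0$, one must show that every atomic formula of $\mathcal{L}_t^{m,d}[\{E\}^{\langle l \rangle}]$ evaluated on $(\mathsf{ap}_l(G, \bar{a}), \bar{V}, \bar{v})$ is equivalent to its layered counterpart in $\Psi_{\bar{w}}^{m,d,t}[\ldots]$ evaluated on $(E_\mathcal{A}^{(s,h,p,b,l)}(G, \bar{a}), \bar{V}^{\mathsf{in}}, \bar{v}^{\mathsf{in}})$. The equality, edge-membership, color-membership, counting, and annotated-treewidth predicates reduce immediately to the definitions of $V^{\mathsf{in}}$ and $v^{\mathsf{in}}$, with $\mathspace$ absorbing any component lying outside the selected layer. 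The essential content is the disjoint-paths predicate, which is handled by combining \Cref{lem_colomodelsrerout} with \Cref{lem:translation-dp-apices}: by the trace construction and \Cref{lemma:buffer} (with $s = \funref{fun:bid}(t)$), the terminals avoid $V(\cupall \mathsf{Influence}_\mathfrak{R}(\mathcal{C}_{\bar{w}}))$, so \Cref{lem_colomodelsrerout} reroutes any witnessing linkage through at most $b = \funref{fun_ulinkastar}(d)$ distinguished rail vertices $r_j^{\langle \bar{w} \rangle}$, splitting it into a portion in $G_{\bar{w}}$ (captured inside $Y_{\bar{w}} \cup Z_{\bar{w}} \cup V(\bar{c})$) and a portion in $G_{\bar{w}}^{\sf out}$, each embedding into the $(\mathfrak{R}, \mathcal{C}_{\bar{w}})$-leveling underlying $E_\mathcal{A}$.

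For the inductive step from $i-1$ to $i$, we transfer a new choice across the layered/semi-annotated divide. In the semi-annotated-to-layered direction, given a global choice $U$ (a set, when $i > d$) or $u$ (a vertex, when $1 \leq i \leq d$) from the pool $A_{i-1}$ prescribed by \Cref{def:tp}, apply \Cref{lemma:buffer} to the union of $U$ (respectively $\{u\}$) with the already-used bounded-annotated-treewidth sets to extend the trace by some $w \in [0, s-1]$ avoiding $\cupall \mathsf{Influence}_\mathfrak{R}(\mathcal{C}_{\bar{w}w})$; the localized choice $U \cap Y_{\bar{w}w}$ (or $u$ if $u \in Y_{\bar{w}w}$, else $\mathspace$) then witnesses the extension on the layered side. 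Conversely, since $\mathsf{ltp}^{(i,r)}_{m,d,t}$ ranges over all $w \in [0, s-1]$ and all subsets of $R_{m+d-(i-1)} \cap Y_{\bar{w}w}$, any layered witness lifts directly to a global one in $R_{m+d-(i-1)}$ with trace extended by $w$. In both directions the inductive hypothesis at depth $i-1$ closes the step.

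The main obstacle will be verifying that the atomic-level rerouting interacts cleanly with the apex reduction. The formula $\zeta_{\sf dp}$ from \Cref{lem:translation-dp-apices} introduces $\funref{@transversales}(r, l)$ extra first-order quantifiers, which is precisely why the semi-annotated depth is set to $d = r + \funref{@transversales}(r, l)$ and why $b = \funref{fun_ulinkastar}(d)$ is chosen so that the Unique Linkage Theorem underlying \Cref{lem_colomodelsrerout} suffices at this scale; the $\mathsf{FO}\!+\!\mathsf{dp}$ quantifiers from $\zeta_{\sf dp}$ are consumed by the $d - r$ extra vertex-quantification layers of the type, exactly as in the proof of \Cref{lem_inter2equiv}. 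A subsidiary bookkeeping task is the treatment of tuple components lying in $V(\bar{a})$ or outside $\mathsf{inn}_\mathfrak{R}(\mathcal{A})$---none of which are affected by the removal of the $S_i$'s since $S_i \subseteq \mathsf{inn}_\mathfrak{R}(\mathcal{A})$---which is handled via the constant symbols $c_1, \ldots, c_l$ of $\{E\}^{\langle l \rangle}$ and the $\mathspace$-element introduced in the universe of $E_\mathcal{A}$.
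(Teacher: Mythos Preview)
Your inductive statement has the wrong shape: you are trying to show that the global atomic type on $(\mathsf{ap}_l(G,\bar{a}),\bar{V},\bar{v})$ is \emph{determined by} its layered counterpart on $(E_{\mathcal{A}}^{(s,h,p,b,l)}(G,\bar{a}),\bar{V}^{\sf in},\bar{v}^{\sf in})$, and then to transport this determination through the ltp-equality. But this determination is false. The structure $E_{\mathcal{A}}^{(s,h,p,b,l)}(G,\bar{a})$ only sees the compass, $\mathsf{inn}_\mathfrak{R}(\mathcal{A})$, the leveling, and the apices---not the rest of $G$. Already for equality your base case fails: if $v_i\ne v_j$ both lie outside the selected layer, both become $\mathspace$, so the layered formula cannot distinguish them. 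More seriously, your assertion that the outer linkage piece $\tilde{L}_2$ in $G_{\bar{w}}^{\sf out}$ ``embeds into the leveling underlying $E_\mathcal{A}$'' is incorrect: $G_{\bar{w}}^{\sf out}$ contains vertices of $G$ outside the compass, and these are absent from $E_\mathcal{A}$ altogether. Nowhere in your sketch is the tuple on the $G\setminus S_0$ side actually produced.

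What the paper does is a genuine back-and-forth between $\mathfrak{H}=(\mathsf{ap}_l(G,\bar{a}),\bar{R})$ and $\mathfrak{H}'=(\mathsf{ap}_l(G\setminus S_0,\bar{a}),\bar{R}')$. Given $(\bar{V},\bar{v})$ in $G$ with trace $\bar{w}$, one uses the assumed ltp-equality between $\mathfrak{L}$ and $\mathfrak{L}'$ to obtain some $(\bar{U}^{\sf in},\bar{u}^{\sf in})$ on the $\mathfrak{L}'$ side with the same layered $0$-type, and then \emph{reconstructs} a global tuple in $G'$ by $U_h:=U_h^{\sf in}\cup(V_h\setminus V_h^{\sf in})$ and $u_h:=v_h$ whenever $u_h^{\sf in}=\mathspace$. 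This is sound precisely because $S_0,\ldots,S_{m+r}\subseteq\mathsf{inn}_\mathfrak{R}(\mathcal{A})$, so the ``outside'' pieces $V_h\setminus V_h^{\sf in}$ and the graph $G_{\bar{w}}^{\sf out}$ are identical in $G$ and $G'$. For the disjoint-paths predicate, the ltp-equality swaps $\tilde{L}_1$ for some $\tilde{L}_1'$ in $G'_{\bar{w}}$, the outer linkage $\tilde{L}_2$ is kept verbatim since $G_{\bar{w}}^{\sf out}={G'}_{\bar{w}}^{\sf out}$, and then \Cref{lem_levelingpaths} (which you do not invoke) pulls the combined linkage from the leveling back to $G'$. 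Finally, your last paragraph on $\zeta_{\sf dp}$ and the choice $d=r+\funref{@transversales}(r,l)$ is misplaced: that interplay is the content of \Cref{lem_inter2equiv}, applied \emph{after} the present lemma in the proof of \Cref{lem_equirep}; \Cref{lemma:intermediate1} itself holds for arbitrary $d\geq r$ and never mentions $\zeta_{\sf dp}$.
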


\begin{proof}
For every $i\in[m+r]$, we set $\bar{R}^\star_i:=R_i\cap V(E_{\mathcal{A}}^{(s,h,p,b,l)}(G,\bar{a}))$.
Let $S_0,\ldots,S_{m+r}\subseteq \mathsf{inn}_{\mathfrak{R}}(\mathcal{A})$ such that $(S_0,\ldots, S_{m+r})$ is $\Psi_{m,r,t,l}$-irrelevant for $(E_{\mathcal{A}}^{(s,h,p,b,l)}(G,\bar{a}),\bar{R}^{\star})$.
We also set $G' = G\setminus S_0$ and for every $i\in[m+r]$, we set $R_i ' = R_i\setminus S_i$.

We first consider the
$\{E\}^{\langle l\rangle}$-structures
${\sf ap}_{l}(G,\bar{a})$ and ${\sf ap}_{l}(G',\bar{a})$.
We use $G_{\sf ap}$ (resp. $G_{\sf ap}'$) to denote the Gaifman graph of ${\sf ap}_{l}(G,\bar{a})$ (resp. ${\sf ap}_{l}(G',\bar{a})$), i.e., the graph whose vertex set is $V(G)$ (resp. $V(G')$)
and whose edge set is the set of edges of
${\sf ap}_{l}(G,\bar{a})$ (resp. ${\sf ap}_{l}(G',\bar{a})$).
Observe that $G\setminus V(\bar{a})$ and $G_{\sf ap}\setminus V(\bar{a})$ are the same graph, and that $(\mathcal{A},\mathfrak{R})$ is also a railed annulus flatness pair of $G_{\sf ap}\setminus V(\bar{a})$.
Also,
since in $G_{\sf ap}$
there are no edges between $V(\bar{a})$ and
$V(G_{\sf ap}\setminus V(\bar{a}))$,
we can update $\mathfrak{R}$ by adding $V(\bar{a})$ to
$(V(G)\setminus V(\bar{a}))\setminus (\mathsf{Compass}_\mathfrak{R}(\mathcal{A}) \cup \mathsf{inn}_\mathfrak{R}(\mathcal{A}))$,
and observe that
after this modification of $\mathfrak{R}$,
$(\mathcal{A},\mathfrak{R})$ is a railed annulus flatness pair of $G_{\sf ap}$.
For the same reason and since $S_0\subseteq \mathsf{inn}_{\mathfrak{R}}(\mathcal{A})$, we can assume that $(\mathcal{A},\mathfrak{R})$ is a railed annulus flatness pair of $G_{\sf ap}'$.
In the rest of the proof, we will write  $V(G)$ to refer to $V({\sf ap}_{l}(G,\bar{a}))$, since these two sets are equal by definition.
Similarly, we write $V(G')$ to denote $V({\sf ap}_{l}(G',\bar{a}))$.
Also, we use the following notation:
\[\mathfrak{L}:=(E_{\mathcal{A}}^{(s,h,p,b,l)}(G,\bar{a}),R_1^\star,\ldots, R_{m+r}^\star)\text{ and }\mathfrak{L}':=(E_{\mathcal{A}}^{(s,h,p,b,l)}(G',\bar{a}),R_1^\star\setminus S_1,\ldots, R_{m+r}^\star\setminus S_{m+r})\]
and
\[\mathfrak{H}:=({\sf ap}_{l}(G,\bar{a}),R_1,\ldots,R_{m+r})\text{ and }\mathfrak{H}' :=
({\sf ap}_{l}(G',\bar{a}),R_1',\ldots,R_{m+r}').\]
We aim to show that
$\mathsf{ltp}_{m,d,t}^{(m+d,r)}
(\mathfrak{L})
=
\mathsf{ltp}_{m,d,t}^{(m+d,r)}
(\mathfrak{L}')
\implies
\mathsf{tp}_{m,d,t}^{(m+d,r)}
(\mathfrak{H})
=
\mathsf{tp}_{m,d,t}^{(m+d,r)}
(\mathfrak{H}').
$
For this, we assume that $\mathsf{ltp}_{m,d,t}^{(m+d,r)}
(\mathfrak{L})
=
\mathsf{ltp}_{m,d,t}^{(m+d,r)}
(\mathfrak{L}').$

We prove inductively, for increasing values of $i$, that the following two statements hold for every $i\in[m+d]$, where $j:=\min\{m,m+d-i\}$ and $\ell:=\max\{0,d-i\}$:

\begin{itemize}
\item[(i)]
Let a $j$-tuple
$\bar{V}=(V_1,\ldots, V_j)\in (2^{V(G)})^{j}$,
where for each $h\in[j]$, $V_h\subseteq R_h$ and $\tw(G,V_h)\leq t$, and
let an $\ell$-tuple $\bar{v} = (v_1,\ldots,v_\ell) \in V(G)^\ell$,
where for every $h\in[\min\{\ell,r\}]$, $v_h\in R_{m+h}$.
There is a $j$-tuple
$\bar{U}^{\sf in} = (U_1^{\sf in},\ldots, U_j^{\sf in})\in (2^{V(G')})^j$,
where for each $h\in[j]$,
$U_h^{\sf in}\subseteq R_h^\star\setminus S_h$,
and an $\ell$-tuple $\bar{u}^{\sf in}=(u_1^{\sf in},\ldots,u_\ell^{\sf in}) \in (V(G')\cup\{\mathspace\})^\ell$,
where for every $h\in[\min\{\ell,r\}]$, $u_h^{\sf in}\in R_{m+h}^\star\setminus S_{m+h}$,
such that
\[\mathsf{ltp}_{m,d,t}^{(i,r)}
(\mathfrak{L},\bar{w},\bar{V}^{\sf in},\bar{v}^{\sf in})=
\mathsf{ltp}_{m,d,t}^{(i,r)}
(\mathfrak{L}',\bar{w},\bar{U}^{\sf in},\bar{u}^{\sf in})\implies\mathsf{tp}_{m,d,t}^{(i,r)}(
\mathfrak{H},\bar{V},\bar{v})
= \mathsf{tp}_{m,d,t}^{(i,r)}(
\mathfrak{H}',\bar{U},\bar{u}),\]
where $\bar{w}:=\mathsf{trace}(\bar{V},\bar{v})$
and, for every $h\in[j]$, $U_h = U_h^{\sf in} \cup (V_h \setminus V_h^{\mathsf{in}})$
and for every $h\in[\ell]$, $u_h =v_h,$ if $u_h^{\sf in} =  \mathspace$ and $u_h = u_h^{\mathsf{in}},$ if otherwise.

\item[(ii)]
Let a $j$-tuple
$\bar{U}=(U_1,\ldots, U_j)\in (2^{V(G')})^{j}$,
where for each $h\in[j]$, $U_h\subseteq R_h'$ and $\tw(G',U_h)\leq t$, and
let an $\ell$-tuple $\bar{u} = (u_1,\ldots,u_\ell) \in V(G')^\ell$,
where for every $h\in[\min\{\ell,r\}]$, $u_h\in R_{m+h}'$.
There exists a $j$-tuple
$\bar{V}^{\sf in} = (V_1^{\sf in},\ldots, V_j^{\sf in})\in (2^{V(G)})^j$,
where for each $h\in[j]$,
$V_h^{\sf in}\subseteq R_h^\star$,
and an $\ell$-tuple $\bar{v}^{\sf in}=(v_1^{\sf in},\ldots,v_\ell^{\sf in}) \in (V(G)\cup\{\mathspace\})^\ell$,
where for every $h\in[\min\{\ell,r\}]$, $v_h^{\sf in}\in R_{m+h}^\star$,
such that
\[\mathsf{ltp}_{m,d,t}^{(i,r)}
(\mathfrak{L},\bar{w},\bar{V}^{\sf in},\bar{v}^{\sf in})=
\mathsf{ltp}_{m,d,t}^{(i,r)}
(\mathfrak{L}',\bar{w},\bar{U}^{\sf in},\bar{u}^{\sf in})\implies\mathsf{tp}_{m,d,t}^{(i,r)}(
\mathfrak{H},\bar{V},\bar{v})
= \mathsf{tp}_{m,d,t}^{(i,r)}(
\mathfrak{H}',\bar{U},\bar{u}),\]
where $\bar{w}:=\mathsf{trace}(\bar{U},\bar{u})$
and, for every $h\in[j]$, $V_h = V_h^{\sf in} \cup (U_h \setminus U_h^{\mathsf{in}})$
and for every $h\in[\ell]$, $v_h =u_h,$ if $v_h^{\sf in} =  \mathspace$ and $v_h = v_h^{\mathsf{in}},$ if otherwise.
\end{itemize}
We start by proving (i) and (ii) for $i=0$.
We only prove (i), the proof of (ii) being completely symmetrical.
Let an $m$-tuple
$\bar{V} = (V_1,\ldots, V_m)\in 2^{R_1}\times \cdots \times 2^{R_m}$,
where for every $h\in[m]$, $\tw(G,V_h)\leq t$,
and let a $d$-tuple $\bar{v} = (v_1,\ldots,v_d) \in V(G)^d$,
where for every $h\in[r]$, $v_h\in R_{m+h}$.
Since we assumed that
$\mathsf{ltp}_{m,d,t}^{(m+d,r)}
(\mathfrak{L})
=
\mathsf{ltp}_{m,d,t}^{(m+d,r)}
(\mathfrak{L}'),$
there exist
an $m$-tuple
$\bar{U}^{\mathsf{in}} = (U_1^{\mathsf{in}},\ldots, U_m^{\mathsf{in}})$
where for every $i\in[m]$, $U_i^{\mathsf{in}}\in 2^{R_i^\star\setminus S_i}$,
and a $d$-tuple $\bar{u}^{\mathsf{in}} = (u_1^{\mathsf{in}},\ldots,u_d^{\mathsf{in}}) \in V(\mathfrak{L}')\cup\{\mathspace\})^d$,
where for every $i\in[r]$, $u_h^{\mathsf{in}}\in R_{m+h}^\star\setminus S_{m+h}$
and
\begin{equation}
\mathsf{ltp}_{m,d,t}^{(0,r)}
(\mathfrak{L},\bar{w},\bar{V}^{\mathsf{in}}, \bar{v}^{\sf in}) =  \mathsf{ltp}_{m,d,t}^{(0,r)}
(\mathfrak{L}',\bar{w},\bar{U}^{\sf in},\bar{u}^{\sf in}).\label{eq_0partialpatt}
\end{equation}
For every $i\in[m]$, we set $U_i := U_i^{\sf in} \cup (V_i \setminus V_i^{\mathsf{in}})$
and for every $i\in[d]$, we set $u_i :=v_i,$ if $u_i^{\sf in} =  \mathspace$ and $u_i := u_i^{\mathsf{in}},$ if otherwise.
We now show that
$ \mathsf{ltp}_{m,d,t}^{(0,r)}
(\mathfrak{H},\bar{V},\bar{v})
=  \mathsf{ltp}_{m,d,t}^{(0,r)}(\mathfrak{H},\bar{U},\bar{u}).$

Recall that $\mathfrak{H}=({\sf ap}_{l}(G,\bar{a}),R_1,\ldots,R_{m+r})\text{ and }\mathfrak{H}' =
({\sf ap}_{l}(G',\bar{a}),R_1',\ldots,R_{m+r}').$
In order to prove that $ \mathsf{ltp}_{m,d,t}^{(0,r)}
(\mathfrak{H},\bar{V},\bar{v})
=  \mathsf{ltp}_{m,d,t}^{(0,r)}(\mathfrak{H},\bar{U},\bar{u})$,
we have to show that
for every $\psi(\bar{X},\bar{x}) \in \mathcal{L}_{t}^{m,d}[\{E\}^{\langle l \rangle}]$,
${\sf ap}_{l}(G,\bar{a})\models \psi(\bar{V},\bar{v}) \iff {\sf ap}_{l}(G',\bar{a})\models \psi(\bar{U},\bar{u})$.
Recall that the formulas in $\mathcal{L}_{t}^{m,d}[\{E\}^{\langle l \rangle}]$
are boolean combinations of the following atomic formulas (in this setting, first-order terms are either first-order variables or among the constant symbols $c_1,\ldots,c_l$, the latter being interpreted as $a_1,\ldots,a_l$ in $\mathsf{ap}_l(G,\bar{a})$):
\begin{itemize}
\item $x=y$, where $x$ and $y$ are first-order terms,
\item $x\in C_i$ for some unary relation symbol $C_i\in \{E\}^{\langle l \rangle}$ and some first-order term  $x$,
\item $x\in X_i$ for some set variable $X_i$ and first-order term $x$,
\item $\{x,y\}\in E$, where $x$ and $y$ are first-order terms,
\item $\mathsf{card}_p(X)$, for $p\in[2,\hat{c}]$ and some set variable $X$, where $\hat{c}$ is a fixed constant\footnote{In the
the last paragraph of~\Cref{subsec:atp}, we assumed an upper bound $\hat{c}$ on the maximum $p$ for which $\mathsf{card}_p(X)$ appears in the formulas of this paper.},
and
\item $\mathsf{dp}_k(x_1,y_1,\ldots,x_k,y_k)$ for $k\in\mathbb{N}$ and some first-order terms $x_1,y_1,\ldots,x_k,y_k$,
\end{itemize}
with the additional property that if ${\sf ap}_{l}(G,\bar{a})\models \psi(\bar{V},\bar{v})$ then for every $i\in[m]$, $\tw(G,V_i)\leq t$.
We will prove that each of the above atomic formulas is satisfied if we replace ${\sf ap}_{l}(G,\bar{a})$ with ${\sf ap}_{l}(G',\bar{a})$ and $\bar{V},\bar{v}$ with $\bar{U},\bar{u}$.

For the first four cases the statement is implied as an easy observation by~\eqref{eq_0partialpatt} and by the definition of $\bar{U}$ and $\bar{u}$.
To show that for every $i,j\in[d]$, $v_i$ and $v_j$ are adjacent in ${\sf ap}_{l}(G,\bar{a})$ if and only if $u_i$ and $u_j$
are adjacent in ${\sf ap}_{l}(G',\bar{a})$, observe that
$v_1,\ldots, v_d, u_1,\ldots, u_d$ do not belong to $V(\cupall{\sf Influence}_{\mathfrak{R}}(\mathcal{C}_{\bar{w}}))$, thus they belong either to $V_{\mathcal{A}_{\bar{w}}}^{\mathsf{inn}}$
or to $(V(G)\setminus V(\bar{a}))\setminus (V(\cupall{\sf Influence}_{\mathfrak{R}}(\mathcal{C}_{\bar{w}})) \cup V_{\mathcal{A}_{\bar{w}}}^{\mathsf{inn}})$, and there is no edge in
${\sf ap}_{l}(G,\bar{a})$ between these two sets.
For the predicate $\mathsf{card}_p(X)$, observe that for every $i\in[m]$,
by~\eqref{eq_0partialpatt}, for every $p\in[2,\hat{c}]$, $|V_i^{\sf in}|$ is a multiple of $p$ if and only if $|U_i^{\sf in}|$ is a multiple of $p$. Therefore, since $U_i = U_i^{\sf in} \cup (V_i \setminus V_i^{\mathsf{in}})$, we have that for every $p\in[2,\hat{c}]$,
$|V_i|$ is a multiple of $p$ if and only if $|U_i|$ is a multiple of $p$.

To conclude the proof of the claim, we show that the satisfaction of the disjoint-paths predicates is also preserved when shifting from $({\sf ap}_{l}(G,\bar{a}),v_1,\ldots,v_d)$ to $({\sf ap}_{l}(G',\bar{a}),u_1,\ldots,u_d)$.
Let $s_1,t_1,\ldots, s_k,t_k\in \{v_1,\ldots,v_d,a_1,\ldots,a_l\}$ and
$s_1',t_1',\ldots,s_k',t_k'\in\{u_1,\ldots,u_d,a_1,\ldots,a_l'\}$.
We will show that
$$G_{\sf ap}\models {\sf dp}_{k} (s_1,t_1,\ldots, s_k,t_k)
\iff
G_{\sf ap}'\models{\sf dp}_k (s_1',t_1',\ldots,s_k',t_k').$$
Since there are no edges in $G_{\sf ap}$ (resp. $G_{\sf ap}'$) between vertices of $V(\bar{a})$ and $V(G_{\sf ap})\setminus V(\bar{a})$ (resp. $V(G_{\sf ap}')\setminus V(\bar{a})$), it suffices to prove that
\begin{itemize}
\item[(a)]
for every
linkage $L$ of $G_{\sf ap}$
such that $T(L)\subseteq \{v_1,\ldots,v_d\}$
there is a linkage $L'$ of $G_{\sf ap}'$ such that
$T(L')\subseteq \{u_1,\ldots,u_d\}$ and
$P(L')$ is obtained from $P(L)$ after replacing, in each $\{z,w\}\in P(L)$,
every occurrence of $v_i$ by $u_i$, for every $i\in[d]$, and
\item[(b)]
for every
linkage $L'$ of $G_{\sf ap}'$ such that
$T(L')\subseteq\{u_1,\ldots,u_d\}$,
there is a linkage $L$ of $G_{\sf ap}$ such that
$T(L)\subseteq \{v_1,\ldots,v_d\}$ and
$P(L)$ is obtained from $P(L')$ after replacing, in each $\{z,w\}\in P(L')$,
every occurrence of $u_i$ by $v_i$, for every $i\in[d]$.
\end{itemize}

To show item~(a),
let $L$ be a linkage of $G_{\sf ap}$.
By~\Cref{lem_colomodelsrerout},
there is a linkage $\tilde{L}_1$ of $G_{\bar{w}}$,
a linkage $\tilde{L}_2$ of $G_{\bar{w}}^{\sf out}$,
and a set $Z\subseteq \{r_{1}^{\langle\bar{w}\rangle},\ldots,r_{\funref{fun_ulinkastar}(d)}^{\langle\bar{w}\rangle}\}$  such that
\begin{itemize}
\item $T(\tilde{L}_1)  =  (T(L)\cap V(G_{\bar{w}})) \cup Z$,
\item $T(\tilde{L}_2)  =  (T(L)\cap V(G_{\bar{w}}^{\sf out})) \cup Z$,
and
\item $\tilde{L}_1 \cup \tilde{L}_2$ is a linkage of ${\sf Leveling}_{(\mathcal{A}, \mathfrak{R})}^{\langle \mathcal{C}_{\bar{w}}\rangle}(G_{\sf ap})$ that is equivalent to $L$.
\end{itemize}
It is easy to see that, by~\eqref{eq_0partialpatt},
there is a linkage $\tilde{L}_1'$ of $G_{\bar{w}}'$
such that $P(\tilde{L}_1')$ is obtained from $P(\tilde{L}_1)$ after replacing, in each $\{z,w\}\in P(\tilde{L}_1)$,
every occurrence of $v_i^{\sf in}$ by $u_i^{\sf in}$, for every $i\in[d]$.
Also, observe that $G_{\bar{w}}^{\mathsf{out}}$ and $G_{\bar{w}}^{' \mathsf{out}}$ are equal.
Therefore, the graph $\tilde{L}':=\tilde{L}_1'\cup \tilde{L}_2$ is a linkage of  ${\sf Leveling}_{(\mathcal{A}, \mathfrak{R})}^{\langle \mathcal{C}_{\bar{w}}\rangle}(G_{\sf ap}')$ and $P(L')$ can be obtained from $P(L)$ after replacing,
in each $\{z,w\}\in P(L)$,
every occurrence of $v_i$ by $u_i$, for every $i\in[d]$.
By~\Cref{lem_levelingpaths}, there is a linkage $L'$ of $G_{\sf ap}'$ such that
$L'\equiv \tilde{L}'$.
Therefore, $P(L')$ is obtained from $P(L)$ after replacing in each $\{z,w\}\in P(L)$,
every occurrence of $v_i$ by $u_i$, for every $i\in[d]$.
Thus, item~(a) follows.
The proof for item~(b) is completely symmetrical to the one for item~(a), i.e., it is obtained by interchanging $L$ with $L'$, $G_{\bar{w}}$ with $G_{\bar{w}}'$,
and $v_i$ with $u_i$ for every $i\in[d]$.
\medskip

To conclude the proof by induction, let us
show how to prove (i) for every $i\geq 1$, assuming that (i) and (ii) hold for $i-1$.
Using the assumption that $\mathsf{ltp}_{m,d,t}^{(m+d,r)}
(\mathfrak{L})
=
\mathsf{ltp}_{m,d,t}^{(m+d,r)}
(\mathfrak{L}'),$
we can find a $j$-tuple $\bar{U}^{\mathsf{in}}$ and an $\ell$-tuple $\bar{u}^{\sf in}$ in $G'$, such that
\begin{equation}
\mathsf{ltp}_{m,d,t}^{(i,r)}
(\mathfrak{L},\bar{w},\bar{V}^{\mathsf{in}}, \bar{v}^{\sf in})
 = \mathsf{ltp}_{m,d,t}^{(i,r)}
(\mathfrak{L}',\bar{w},\bar{U}^{\sf in},\bar{u}^{\sf in}).\label{eq_ipartialpatt}
\end{equation}
By considering $V_{j+1}$ or $v_{\ell+1}$ in $G$ (depending on whether $i$ is larger than $d$ or not),
we can define  $V_{j+1}^{\mathsf{in}}$ or $v_{\ell+1}^{\mathsf{in}}$ accordingly, and use~\eqref{eq_ipartialpatt} in order to find $U_{j+1}^{\mathsf{in}}$ or $u_{\ell+1}^{\mathsf{in}}$ in $G'$ such that the obtained layered annotated $(i-1)$-types are the same.
By the induction hypothesis,
equality of the layered annotated $(i-1)$-types implies equality of the annotated $(i-1)$-types. This, in turn, implies equality of annotated $i$-types.
The proof of the lemma is concluded by observing that, since (i) and (ii) hold for $i=m+d$,
then $\mathsf{ltp}_{m,d,t}^{(m+d,r)}
(\mathfrak{L})
=
\mathsf{ltp}_{m,d,t}^{(m+d,r)}
(\mathfrak{L}')
\implies
\mathsf{tp}_{m,d,t}^{(m+d,r)}(\mathfrak{H})
=
\mathsf{tp}_{m,d,t}^{(m+d,r)}
(\mathfrak{H}').$
\end{proof}

The proof of~\Cref{lem_equirep} can be reproduced in order to give the following result.

\begin{lemma}\label{lemmabis}
Let $l,m,r,t\in\mathbb{N}$ and let $d=r+\funref{@transversales}(r,l)$.
Let $\sigma$ be a colored-graph vocabulary.
Also, let $\mathfrak{G}$ be a $\sigma^{s,m+d}$-structure (whose Gaifman graph is $G$),
let $R_1,\ldots, R_{m+r}\subseteq V(G)$, let $\bar{a}\in V(G)^l$,
and $(\mathcal{A},\mathfrak{R})$ be a well-aligned $(\funref{fun:bid}(t)^{m+d}\cdot\funref{@aristocracias}(d),q)$-railed annulus flatness pair of $G\setminus V(\bar{a})$,
where $q\geq 5/2\cdot \funref{fun_ulinkastar}(d)$.
For every $S_0,\ldots, S_{m+r}\subseteq \mathsf{inn}_{\mathfrak{R}}(\mathcal{A})$
such that for every $i\in[m+r]$, $S_i\subseteq R_i$ and $R_i\setminus S_i \subseteq V(G)\setminus S_0$ and
\begin{eqnarray*}
&\mathsf{ltp}_{m,d,t}^{(m+d,r)}
(E_{\mathcal{A}}^{(s,h,p,b,l)}(G,\bar{a}),R_1^\star,\ldots, R_{m+r}^\star)\\
& =\\
& \mathsf{ltp}_{m,d,t}^{(m+d,r)}
(E_{\mathcal{A}}^{(s,h,p,b,l)}(G\setminus S_0,\bar{a}),R_1^\star\setminus S_1,\ldots,R_{m+r}^\star\setminus S_{m+r}),
\end{eqnarray*}
where for every $i\in[m+r]$, $R_i^\star = R_i \cap V(E_{\mathcal{A}}^{(s,h,p,b,l)}(G,\bar{a}))$, it holds that
\[
\mathsf{ltp}_{m,d,t}^{(m+d,r)}(\mathfrak{G},R_1,\ldots,R_{m+r})
=
\mathsf{ltp}_{m,d,t}^{(m+d,r)}(\mathfrak{G}\setminus S_0,R_1\setminus S_1,\ldots,R_{m+r}\setminus S_{m+r}).\]
\end{lemma}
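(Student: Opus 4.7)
The plan is to reproduce, essentially verbatim, the joint inductive argument underlying Lemma~\ref{lem_equirep} and Lemma~\ref{lemma:intermediate1}, with the simplification that the whole argument now takes place inside the layered framework on both sides: since both hypothesis and conclusion are phrased in terms of $\mathsf{ltp}$, the translation step between layered and annotated types through Lemma~\ref{lem_inter2equiv} becomes unnecessary. Set $d := r+\funref{@transversales}(r,l)$, $s := \funref{fun:bid}(t)$, $h := m+d$, $p := \funref{@aristocracias}(d)$, $b := \funref{fun_ulinkastar}(d)$, write $\mathfrak{G}' := \mathfrak{G}\setminus S_0$, and let $\mathfrak{L}$ and $\mathfrak{L}'$ abbreviate the two enhanced boundaried structures $(E_{\mathcal{A}}^{(s,h,p,b,l)}(G,\bar{a}),R_1^\star,\ldots,R_{m+r}^\star)$ and $(E_{\mathcal{A}}^{(s,h,p,b,l)}(G\setminus S_0,\bar{a}),R_1^\star\setminus S_1,\ldots,R_{m+r}^\star\setminus S_{m+r})$, which by assumption satisfy $\mathsf{ltp}_{m,d,t}^{(m+d,r)}(\mathfrak{L}) = \mathsf{ltp}_{m,d,t}^{(m+d,r)}(\mathfrak{L}')$. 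The claim is then proved by simultaneous induction on $i\in [0,m+d]$ of two symmetric statements asserting that, for every admissible choice of sets/vertices $(\bar{V},\bar{v})$ on the $\mathfrak{G}$-side, one can find witnesses $(\bar{V}^{\mathsf{in}},\bar{v}^{\mathsf{in}})$ for $\mathfrak{L}$ and $(\bar{U}^{\mathsf{in}},\bar{u}^{\mathsf{in}})$ for $\mathfrak{L}'$, and corresponding $(\bar{U},\bar{u})$ for $\mathfrak{G}'$ obtained from the ``in''-parts by reinserting $V_h\setminus V_h^{\mathsf{in}}$ outside the cycle influence and interpreting each $\mathspace$-entry as the original $v_h$, such that equality of the layered $(i,r)$-types on the enhanced side forces the same equality on the non-enhanced side; the common trace $\bar{w}$ is provided by Lemma~\ref{lemma:buffer} applied to the bounded-annotated-treewidth sets appearing in $\bar{V}$.

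For the base case $i=0$ one has to show that every atomic formula in $\Psi_{\bar{w}}^{m,d,t}[\sigma^{\langle s,m+d,b\rangle}]$ evaluates identically on $(\mathfrak{G},\bar{V},\bar{v})$ and $(\mathfrak{G}',\bar{U},\bar{u})$. For equality, membership, set-membership, edge, and cardinality atomic formulas the argument is immediate: all free variables are syntactically forced into $Y_{\bar{w}}\cup V(\bar{c})$, which by construction is disjoint from $V(\cupall \mathsf{Influence}_\mathfrak{R}(\mathcal{C}_{\bar{w}}))$, so each such formula depends only on the ``in''-part together with the exterior portion $V_h\setminus V_h^{\mathsf{in}}$ that is identical on both sides, and the claim follows from the assumed $0$-type equality on $\mathfrak{L},\mathfrak{L}'$ exactly as in the first portion of the proof of Lemma~\ref{lemma:intermediate1}. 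The disjoint-paths atomic formulas of $\Psi_{\bar{w}}^{m,d,t}[\sigma^{\langle s,m+d,b\rangle}]$ are, by construction, already syntactically localized to $Y_{\bar{w}}\cup Z_{\bar{w}}\cup V(\bar{c})$ via Observation~\ref{lemma_dpMSO}; any witnessing linkage in $\mathfrak{G}$ is therefore of the form treated by Proposition~\ref{lem_colomodelsrerout}, which produces a rerouted linkage lying in $G_{\bar{w}}\cup G_{\bar{w}}^{\mathsf{out}}$ and glued at a subset of the rail vertices $\{r_i^{\langle \bar{w}\rangle}\}$ (encoded by the constants $c_i^{\bar{w}}$); the leveling-preservation Proposition~\ref{lem_levelingpaths} then lifts this back into $\mathfrak{G}'$ after substituting $v_h^{\mathsf{in}}\mapsto u_h^{\mathsf{in}}$. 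The symmetry gives the converse direction, completing the base case. The inductive step from level $i-1$ to level $i$ is the standard type-game witness-extension: any set or vertex chosen in $\mathfrak{G}$ is paired with a set or vertex in $\mathfrak{L}$ by~\Cref{lemma:buffer} and by the hypothesis, the resulting witness on the $\mathfrak{L}'$-side projects down to a witness in $\mathfrak{G}'$ via the ``in''-construction, and induction at level $i-1$ closes the loop.

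The main technical point requiring care is the bounded-annotated-treewidth invariant hidden inside the formulas of $\mathcal{L}_{t}^{m,r}[\sigma]$: one has to verify that if $\tw(G,V_h)\leq t$, then the matched witness $U_h = U_h^{\mathsf{in}}\cup (V_h\setminus V_h^{\mathsf{in}})$ also satisfies $\tw(G',U_h)\leq t$, and symmetrically. This is exactly the role of the choice $s=\funref{fun:bid}(t)$ and the trace mechanism: by Lemma~\ref{lemma:buffer} the trace $\bar{w}$ places $V_h\setminus V_h^{\mathsf{in}}$ strictly outside $V(\cupall \mathsf{Influence}_\mathfrak{R}(\mathcal{C}_{\bar{w}}))$, while $S_0\subseteq \mathsf{inn}_\mathfrak{R}(\mathcal{A})$ only touches the innermost part; any $U_h$-rooted minor model of treewidth exceeding $t$ would either have to be captured already by the formulas $\zeta_0,\ldots,\zeta_t$ of Lemma~\ref{formulaanntw} appearing in the layered $0$-type of the enhanced structures (hence coincide on $\mathfrak{L}$ and $\mathfrak{L}'$), or be supported entirely on $V_h\setminus V_h^{\mathsf{in}}$ (hence unchanged by the removal of $S_0$). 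This rules out any jump in annotated treewidth across the correspondence $\bar{V}\leftrightarrow \bar{U}$, so the base-case verification extends to the quantifier-free part of $\Psi_{\bar{w}}^{m,d,t}$; with this invariant established, the remainder of the induction goes through verbatim as in Lemma~\ref{lemma:intermediate1}, yielding the layered type equality on $\mathfrak{G}$ and $\mathfrak{G}'$.
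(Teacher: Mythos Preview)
Your proposal is correct and follows essentially the same approach as the paper, which simply states that ``the proof of \Cref{lem_equirep} can be reproduced in order to give the following result'' without further detail. Your observation that \Cref{lem_inter2equiv} is not needed here (because both hypothesis and conclusion are phrased in terms of $\mathsf{ltp}$ rather than passing from $\mathsf{ltp}$ to $\mathsf{tp}$) is a valid refinement: what must be reproduced is really the inductive core of \Cref{lemma:intermediate1}, now with layered types on the target side as well, and your sketch of the base case and inductive step tracks that proof faithfully.
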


\subsection{Item (c) of the Local-Global-Irrelevancy Condition}\label{subsec:3c}
In this subsection we prove item (b) of the~\textsl{Local-Global-Irrelevancy Condition} of the reduction
between
$(\mathsf{CMSO/tw}\!+\!\mathsf{dp},\mathsf{hw})$ and $(\mathsf{CMSO},\mathsf{tw})$ (see~\Cref{def_red}),
i.e., the existence of a tuple $(S_0,\ldots, S_{m+r})$ where $S_0\neq\emptyset$ for~\Cref{lem_equirep}; see~\Cref{lemma:finalirr}.
We start by stating the following result, which can be obtained from~\cite[Lemma 5.7]{ThilikosW23excl}; see~also~\cite[Theorem~1.1]{ChuzhoyT21towa}.

\begin{proposition}\label{prop:union}
There is a function $\newfun{fun:uniontw}:\mathbb{N}^2\to\mathbb{N}$ such that
for every $r,t\in\mathbb{N}$,
if $G$ is a graph and $X_1,\ldots,X_r \subseteq V(G)$
such that, for $i\in [r]$, $\tw(G,X_i) \leq t$, then $\tw(G,X_1\ldots \cup X_r)\leq \funref{fun:uniontw}(r,t)$.
\end{proposition}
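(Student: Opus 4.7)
The plan is to prove the statement by contrapositive via the grid-minor characterization of annotated treewidth: if $\tw(G, X_1\cup\cdots\cup X_r)$ is sufficiently large, then some $\tw(G, X_i)$ must exceed $t$. The starting point is the polynomial grid-minor theorem of Chuzhoy and Tan~\cite{ChuzhoyT21towa}, specialized to rooted minors as in~\cite[Section~5]{ThilikosW23excl}: if $\tw(G, X_1\cup\cdots\cup X_r) \geq N$, then $G$ contains the $(k\times k)$-grid $\Gamma_k$ as an $(X_1\cup\cdots\cup X_r)$-rooted minor for some $k$ polynomial in $N$. Let $\{S_v \mid v\in V(\Gamma_k)\}$ be the corresponding model, so by the rooted-minor definition each $S_v$ intersects $X_{i(v)}$ for some index $i(v)\in[r]$.

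Pigeonhole yields some color $i^*\in[r]$ such that at least $k^2/r$ vertices of $\Gamma_k$ carry color $i^*$. The key combinatorial step is a Demaine--Hajiaghayi-style lemma~\cite{demaine2004bidimensional} of precisely the kind already invoked for~\Cref{lemma:buffer}: if a large grid has an $\Omega(1/r)$-fraction of its vertices marked, then it contains a $(k'\times k')$-grid as a minor in which every branch set hits a marked vertex, with $k'$ polynomial in $k$ and $1/\sqrt{r}$. Composing this minor with the original model $\{S_v\}$ and selecting inside each composed branch set a vertex of $X_{i^*}$ from a marked original branch set, we obtain $\Gamma_{k'}$ as an $X_{i^*}$-rooted minor of $G$. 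Hence $\tw(G, X_{i^*}) \geq k'-1$.

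It remains to choose $N = \funref{fun:uniontw}(r,t)$ large enough so that the resulting $k'-1$ exceeds $t$, which is achieved by tracking the polynomial dependencies through the three reductions; an explicit bound polynomial in both $r$ and $t$ suffices. The main obstacle is the Demaine--Hajiaghayi-style extraction of a densely-marked subgrid as a rooted minor, since this is where the quantitative relation between $k$, $r$, and $k'$ must be controlled cleanly. An alternative route, taken in~\cite[Lemma~5.7]{ThilikosW23excl}, bypasses the grid-minor theorem entirely by working directly with brambles: one argues that a large ``$(X_1\cup\cdots\cup X_r)$-rooted bramble'' can be restricted, via pigeonhole on bramble elements together with a transversality argument, to a still large ``$X_i$-rooted bramble'' for some $i\in[r]$, which yields the same qualitative conclusion with potentially better constants.
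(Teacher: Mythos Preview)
The paper does not actually prove \Cref{prop:union}; it simply states that the result ``can be obtained from~\cite[Lemma~5.7]{ThilikosW23excl}; see also~\cite[Theorem~1.1]{ChuzhoyT21towa}'' and moves on. Your proposal is therefore not being compared against a proof but against a pair of citations, and in fact you have identified both routes those citations point to: the bramble-based argument of~\cite[Lemma~5.7]{ThilikosW23excl}, which you mention as the alternative, and the grid-minor route via~\cite{ChuzhoyT21towa}, which is the one you sketch in detail.

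Your grid-based sketch is sound. The only place one has to be a little careful is the ``Demaine--Hajiaghayi-style'' step: from $k^2/r$ marked vertices in $\Gamma_k$ one extracts a $\Gamma_{k'}$ as a marked-rooted minor. This is true, and as you note it is exactly the mechanism the paper already invokes (informally) in the discussion after~\Cref{lemma:buffer}; but the clean quantitative statement is not literally~\cite[Lemma~3.1]{demaine2004bidimensional}, so if you were writing this out in full you would either prove the needed rooted-grid extraction directly (a short pigeonhole-and-contraction argument suffices, since only \emph{some} function $\funref{fun:uniontw}$ is required, not a polynomial one) or cite a source that states it in rooted form. Compared to the bramble proof in~\cite{ThilikosW23excl}, your route is more elementary to state but imports the heavy grid-minor theorem; the bramble argument avoids that dependency at the cost of working with a less familiar object.
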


We first show how to find sets $S_1,\ldots,S_{m+r}$ that are irrelevant for the annotated sets.

\begin{lemma}\label{lemma:reducing_annotation}
There is a function $\newfun{fun:annotation}:\mathbb{N}^5\to\mathbb{N}$ such that for
every $l,m,d,r,t,z\in\mathbb{N}$, where
$r\leq d$, if
\begin{itemize}
\item $s:=\funref{fun:bid}(t)$,
$h:=m+d$,
$p :=\funref{@aristocracias}(d)$,
$b:=\funref{fun_ulinkastar}(d)$,
\item $G$ is a graph,
 $R_1,\ldots, R_{m+r}\subseteq V(G)$,
$\bar{a}\in V(G)^l$,
\item $(\mathcal{A},\mathfrak{R})$ is a well-aligned $(x,x)$-railed annulus flatness pair of $G\setminus V(\bar{a})$,
for $x\geq\funref{fun:annotation}(l,m,d,r,t,z)$, and
\item $Z:= V(\cupall\mathsf{Influence}_{\mathfrak{R}}(C_p))\cup\mathsf{inn}_{\mathfrak{R}}(\mathcal{A})$, where $C_p$ is the $p$-th cycle of $\mathcal{A}$,
\end{itemize}
then there is a $(z,z)$-railed annulus flatness pair $(\mathcal{A}',\mathfrak{R}')$ of $G\setminus V(\bar{a})$
and
sets $S_1,\ldots, S_{m+r}\subseteq Z$,
such that
\begin{itemize}
\item $V(\mathsf{Compass}_{\mathfrak{R}'}(\mathcal{A}'))\cup\mathsf{inn}_{\mathfrak{R}'}(\mathcal{A}')$ is a subset of
$Z$ and is disjoint from $\bigcup_{i\in[m+r]}R_i\setminus S_i$, and

\item if $R^\star_i = R_i\cap V(E_{\mathcal{A}}^{(s,h,p,b,l)}(G,\bar{a}))$, then
\begin{eqnarray*}
& \mathsf{ltp}_{m,d,t}^{(m+d,r)}
(E_{\mathcal{A}}^{(s,h,p,b,l)}(G,\bar{a}),R_1^\star,\ldots, R_{m+r}^\star)\\
& = \\
& \mathsf{ltp}_{m,d,t}^{(m+d,r)}
(E_{\mathcal{A}}^{(s,h,p,b,l)}(G,\bar{a}),R^\star_1\setminus S_1,\ldots, R^\star_{m+r}\setminus S_{m+r}).
\end{eqnarray*}
\end{itemize}
\end{lemma}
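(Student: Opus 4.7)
The plan is to construct $(\mathcal{A}',\mathfrak{R}')$ and the sets $S_1,\ldots,S_{m+r}$ by an iterative ``zoom-in'' procedure, building a descending chain of well-aligned flat railed annuli $\mathcal{A}\supseteq\mathcal{A}_1\supseteq\mathcal{A}_2\supseteq\cdots\supseteq\mathcal{A}_{m+r+1}=:\mathcal{A}'$ whose compasses all lie inside $Z$, together with representative sets $R_i^{\mathsf{rep}}\subseteq R_i\cap V(\mathsf{Compass}_{\mathfrak{R}}(\mathcal{A}_i))$ chosen so that the compass of $\mathcal{A}_{i+1}$ is disjoint from $R_i^{\mathsf{rep}}$. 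The starting annulus $\mathcal{A}_1$ is carved out of the cycles $C_{p+1},\ldots,C_x$ of $\mathcal{A}$, so its compass sits in $Z$ by construction, and each subsequent $\mathcal{A}_{i+1}$ is produced from $\mathcal{A}_i$ by Lemma~\ref{lemma:buffer}. At the end I set $S_i:=(R_i\cap V(\mathsf{Compass}_{\mathfrak{R}}(\mathcal{A}_i)))\setminus R_i^{\mathsf{rep}}$; then $\hat{R}_i:=R_i\setminus S_i = (R_i\setminus V(\mathsf{Compass}_{\mathfrak{R}}(\mathcal{A}_i)))\cup R_i^{\mathsf{rep}}$, so $S_i\subseteq Z$ and, because $\mathcal{A}'\subseteq\mathcal{A}_i$ avoids $R_i^{\mathsf{rep}}$, the compass of $\mathcal{A}'$ is disjoint from $\bigcup_i(R_i\setminus S_i)$.

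At stage $i$ the choice of $R_i^{\mathsf{rep}}$ must guarantee that replacing $R_i$ by $\hat{R}_i$ preserves $\mathsf{ltp}^{(m+d,r)}_{m,d,t}$, given that the already-reduced $\hat{R}_1,\ldots,\hat{R}_{i-1}$ are held fixed and the not-yet-processed $R_{i+1},\ldots,R_{m+r}$ are untouched. The key observation is that $\mathsf{ltp}^{(m+d-i,r)}_{m,d,t}$ depends only on $R_{i+1},\ldots,R_{m+r}$ (the outer $R_1,\ldots,R_i$ are consumed by the quantifiers above level $m+d-i+1$), so the reduction affects only the level-$(m+d-i+1)$ quantifier, which takes the union of sub-types over subsets $U\subseteq R_i^{\star}\cap Y_{\bar{w}_k}$. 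To preserve this union I introduce, for each $\bar{w}_k\in[0,s-1]$, a \emph{profile equivalence} on bounded-annotated-treewidth subsets of $R_i$: $U\sim U'$ iff the map $(\bar{w},\bar{V},\bar{v})\mapsto \mathsf{ltp}^{(m+d-i,r)}_{m,d,t}(\mathfrak{H},\bar{R},\bar{w}\bar{w}_k,\bar{V}U,\bar{v})$ agrees with that of $U'$. Viewing this profile as the layered type of the enriched structure in which $U$ plays the role of an additional unary predicate, Lemma~\ref{lemma_number_partial_patterns} bounds the number of equivalence classes by some $K_i$ depending only on $l,m,d,r,s,t,b$. For each realized class $P$ and each $\bar{w}_k$ I pick one representative $U_{P,\bar{w}_k}\subseteq R_i\cap Y_{\bar{w}_k}\cap V(\mathsf{Compass}_{\mathfrak{R}}(\mathcal{A}_i))$ of annotated treewidth at most $t$, and define $R_i^{\mathsf{rep}}$ as their union. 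Every relevant $U$ is then profile-equivalent to some $U_{P_U,\bar{w}_k}\subseteq \hat{R}_i$, so the set of achievable sub-types at every context coincides, propagating upward to the equality of $\mathsf{ltp}^{(m+d,r)}_{m,d,t}$.

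By Proposition~\ref{prop:union} we get $\tw(G,R_i^{\mathsf{rep}})\leq T_i:=\funref{fun:uniontw}(s\cdot K_i,t)$. Applying Lemma~\ref{lemma:buffer} inside the current $(X_i,X_i)$-railed annulus $\mathcal{A}_i$ then yields a sub-annulus $\mathcal{A}_{i+1}$ of size $(X_{i+1},X_{i+1})$ whose influence avoids $R_i^{\mathsf{rep}}$, provided that $X_i\geq\funref{fun:bid}(T_i)\cdot X_{i+1}$. Imposing $X_{m+r+1}=z$ and unwinding the recurrence over the $m+r$ stages produces the required starting size $\funref{fun:annotation}(l,m,d,r,t,z):=p+z\cdot\prod_{i=1}^{m+r}\funref{fun:bid}(T_i)$ (the extra summand $p$ accounts for restricting $\mathcal{A}_1$ to the cycles of $\mathcal{A}$ lying strictly inside $C_p$, so that its compass sits inside $Z$). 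Well-alignedness of each $\mathcal{A}_{i+1}$ is inherited from $\mathcal{A}$ by choosing subsequences of cycles and the corresponding sub-rails, standard for sub-annuli of a well-aligned railed annulus flatness pair.

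The main obstacle is the type-preservation argument: unlike the \textsf{FO} setting of~\cite{GolovachST22model,GolovachST22model_arXiv} in which a single vertex suffices as representative per sub-type, the \textsf{MSO} case forces the representatives to be entire \emph{subsets}, and the sub-type achieved by such a subset genuinely depends on outer contexts $(\bar{V},\bar{v})$ whose number is unbounded. The resolution is that the profile of a candidate $U$ is itself a layered annotated type of an enriched structure and hence, by Lemma~\ref{lemma_number_partial_patterns}, takes only boundedly many values; combined with Proposition~\ref{prop:union}, which keeps the union of representatives within bounded annotated treewidth, this is precisely what allows Lemma~\ref{lemma:buffer} to be invoked at every stage and makes the iterative zoom-in terminate within the prescribed size budget.
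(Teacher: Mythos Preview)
Your overall plan---iteratively zoom into smaller flat railed annuli $\mathcal{A}_1\supseteq\cdots\supseteq\mathcal{A}_{m+r+1}$, at each stage pick a bounded family of representative subsets of $R_i$, bound the annotated treewidth of their union via Proposition~\ref{prop:union}, and apply Lemma~\ref{lemma:buffer} to zoom in further---is exactly the paper's plan. The gap is in the type-preservation step, specifically in the claim that ``profile equivalence'' has boundedly many classes.

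You justify the bound by saying the profile of $U$ is ``the layered type of the enriched structure in which $U$ plays the role of an additional unary predicate.'' But equality of layered types of $(\mathfrak{H},U)$ and $(\mathfrak{H},U')$ only yields the back-and-forth property: for every context $(\bar{w},\bar{V})$ there exists \emph{some} context $(\bar{w}',\bar{V}')$ giving the same sub-type. What you actually need is pointwise agreement at the \emph{same} context, because the outer quantifiers over $\hat{R}_1,\ldots,\hat{R}_{i-1}$ are identical on both sides of the equality you want. A concrete obstruction: if $\mathfrak{H}$ has an automorphism swapping $U$ and $U'$, the enriched types coincide, yet for a $\bar{V}$ not fixed by that automorphism the sub-types at $(\bar{V},U)$ and $(\bar{V},U')$ can differ. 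So there is no a~priori bound on the number of profiles, and your choice of $R_i^{\mathsf{rep}}$ is not justified. The paper sidesteps this entirely: at stage $i$ it builds a \emph{new} layered structure $\mathfrak{G}^{\langle i\rangle}:=E_{\mathcal{A}_i}^{(s,h,p,b,l)}(G,\bar{a})$ from the current inner annulus $\mathcal{A}_i$; because the compass of $\mathcal{A}_i$ is disjoint from $\hat{R}_1,\ldots,\hat{R}_{i-1}$, the restricted annotations $R_j^{\langle i\rangle}$ for $j<i$ are empty, so the first $i-1$ levels of the layered type collapse to a single trivial context (equation~\eqref{eq_void}). Representatives are then chosen for that one context---directly bounded by Lemma~\ref{lemma_number_partial_patterns}---and the resulting equality of layered types is lifted from $\mathfrak{G}^{\langle i\rangle}$ back to $\mathfrak{G}^{\langle 1\rangle}$ via Lemma~\ref{lemmabis}. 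This local-to-global lifting is the missing ingredient in your argument.
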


\begin{proof}
We first set $q:=5/2\cdot b$.
Then, for every $i\in[m+r]$,
we set
\begin{eqnarray*}
\alpha_i  & : = & \funref{fun:uniontw}(t,g_i),\text{ where }g_i := \funref{fun:partial-patterns}(m+d-i,l,m,d,s,t),\\
p_i & := & s^h\cdot p + (\funref{fun:bid}(\alpha_i) \cdot h_i+1),\\
h_i & := & 2p_{i+1}+2, \text{ where $p_{m+r+1} = \max\{z,q\}$},\text{ and}\\
\funref{fun:annotation}(l,m,d,t,z) & := & p + p_1
\end{eqnarray*}
We set $\mathfrak{G}^{\langle 1\rangle}:=E_{\mathcal{A}}^{(s,h,p,b,l)}(G,\bar{a})$ and
for every $i\in[m+r]$, $R^{\langle 1 \rangle}_i = R^\star_i$, and $\bar{R}^{\langle 1\rangle}:=\bar{R}^\star$.
Our goal is to prove the following claim inductively.
Observe that the lemma follows for $i=m+r+1$.
\begin{claim}\label{cla}
Let $i\in[m+r+1]$.
There exist
$S_1,\ldots,S_{i-1}\subseteq V(\mathfrak{G}^{\langle 1\rangle})$  and a well-aligned $(p_i,p_i)$-railed annulus flatness pair
$(\mathcal{A}_i,\mathfrak{R}_i)$ of $G\setminus V(\bar{a})$
such that
\begin{itemize}
\item[(a)] $V(\mathsf{Compass}_{\mathfrak{R}_i}(\mathcal{A}_i))\cup \mathsf{inn}_{\mathfrak{R}_i}(\mathcal{A}_i)$ is a subset of $Z$ and disjoint from $\bigcup_{j\in[i-1]}R_j\setminus S_j$ and
\item[(b)] $\mathsf{ltp}_{m,d,t}^{(m+d,r)}
(\mathfrak{G}^{\langle 1 \rangle},\bar{R}^{\langle 1 \rangle})
 =   \mathsf{ltp}_{m,d,t}^{(m+d,r)}
(\mathfrak{G}^{\langle 1\rangle},R^{\langle 1\rangle}_1\setminus S_1,\ldots,R^{\langle 1 \rangle}_{i-1}\setminus S_{i-1},R^{\langle 1 \rangle}_{i},\ldots,R^{\langle 1 \rangle}_{m+r})$.
\end{itemize}
\end{claim}
\medskip

\noindent{\emph{Proof of~\Cref{cla}.}}
For the base case of the induction, i.e., $i=1$, we can set $(\mathcal{A}_1,\mathfrak{R}_1)$ to be a $(p_1,p_1)$-railed annulus flatness pair of $G\setminus V(\bar{a})$,
where $V(\mathsf{Compass}_{\mathfrak{R}_1} (\mathcal{A}_1))\cup \mathsf{inn}_{\mathfrak{R}_1}(\mathfrak{A}_1) \subseteq Z$.
Such $(\mathcal{A}_1,\mathfrak{R}_1)$ exists because $\mathcal{A}$ has
$s^h\cdot p + p_1$ cycles.
Note that in this case, (a) and (b) hold trivially.

Assume now that the claim holds for $i$. We prove that it also holds for $i+1$, i.e., we find an $S_{i}$ and an $(\mathcal{A}_{i+1},\mathfrak{R}_{i+1})$ that also satisfy (a) and (b).

Using $(\mathcal{A}_i,\mathfrak{R}_i)$, we consider $E_{\mathcal{A}_i}^{(s,h,p,b,l)}(G,\bar{a})$, which we denote by $\mathfrak{G}^{\langle i\rangle}$ for brevity.
Also, for every $j\in[i-1]$,
we set $R^{\langle i \rangle}_j= (R_j\setminus S_j) \cap V(\mathfrak{G}^{\langle i\rangle})$, for every $j\in[i,m+r]$, $R^{\langle i \rangle}_j := R_j \cap V(\mathfrak{G}^{\langle i \rangle})$,
and $\bar{R}^{\langle i\rangle}:=(R^{\langle i \rangle}_1,\ldots, R^{\langle i \rangle}_{m+r})$.
Note that, by (a), for every $j\in[i-1]$,
$R^{\langle i \rangle}_j = \emptyset$.
We fix $\bar{w}:=w_1 \ldots w_{i-1}$, where for each $j\in[i-1]$,
$w_j = s-1$.
It is crucial to observe that for every $j\in[i-1]$,
\begin{equation}
\label{eq_void}
\mathsf{ltp}_{m,d,t}^{(m+d-j+1,r)}(\mathfrak{G}^{\langle i \rangle},\bar{R}^{\langle i \rangle},w_1\ldots w_{j-1},\emptyset^{j-1}) = \{\mathsf{ltp}_{m,d,t}^{(m+d-j,r)}(\mathfrak{G}^{\langle i \rangle},\bar{R}^{\langle i \rangle},w_1\ldots w_{j-1}w_j,\emptyset^{j-1}\emptyset)\}.
\end{equation}
By~\Cref{lemma_number_partial_patterns}, the size of the set $\{ \mathsf{ltp}_{m,d,t}^{(m+d-i,r)}(\mathfrak{G}^{\langle i \rangle},\bar{R}^{\langle i \rangle},\bar{w}w_2,\emptyset^{i-1} X_2)\mid w_i\in [0,s-1],\ X_i\subseteq R_i^{\langle i \rangle}\}$ is upper-bounded by $\funref{fun:partial-patterns}(m+d-i,l,m,d,s,t) = g_i$.
For every $w_i\in [0,s-1]$,
we set $\mathcal{Q}^{w_i}$ to be the minimum size collection of subsets of $R_i^{\langle i \rangle}$ such that if
there is a set $X\subseteq R_i^{\langle i \rangle}$ satisfying
\begin{equation}
\tw(G,X)\leq t \text{ and } \mathsf{ltp}_{m,d,t}^{(m+d-i,r)}(\mathfrak{G}^{\langle i \rangle},\bar{R}^{\langle i \rangle},\bar{w}w_i,\emptyset^i X)\in  \mathsf{ltp}_{m,d,t}^{(m+d-i+1,r)}(\mathfrak{G}^{\langle i \rangle},\bar{R}^{\langle i \rangle},\bar{w},\emptyset^i),\label{eq_representi}
\end{equation}
then there is also a set $X'\in \mathcal{Q}^{w_i}$ satisfying~\eqref{eq_representi}.
We set $\mathcal{Q}_i = \bigcup_{w_i \in [0,\funref{fun:bid}(t)-1]} \mathcal{Q}^{w_i}$ and observe that $|\mathcal{Q}_i|\leq g_i$.
We also set $S_i := R_i \setminus \bigcup_{Q\in \mathcal{Q}_i} Q$ and
$\bar{R}^{\langle i \rangle}_{\sf new}$ to be the tuple obtained from $\bar{R}^{\langle i \rangle}$ by replacing $R^{\langle i \rangle}_i$ with $R^{\langle i \rangle}_i\setminus S_i$.
By definition of $\mathcal{Q}_i$ and~\eqref{eq_void},
we have that
\begin{equation}
\mathsf{ltp}_{m,d,t}^{(m+d,r)}
(\mathfrak{G}^{\langle i \rangle},\bar{R}^{\langle i\rangle})
 =   \mathsf{ltp}_{m,d,t}^{(m+d,r)}
(\mathfrak{G}^{\langle i\rangle},\bar{R}^{\langle i \rangle}_{\sf new}).\label{eq_part}
\end{equation}
Using~\Cref{lemmabis} and the inductive hypothesis that (b) holds for $i$,~\eqref{eq_part} implies that
\[\mathsf{ltp}_{m,d,t}^{(m+d,r)}
(\mathfrak{G}^{\langle 1\rangle},\bar{R}^{\langle 1 \rangle})
 =   \mathsf{ltp}_{m,d,t}^{(m+d,r)}
(\mathfrak{G}^{\langle 1\rangle},R^{\langle 1\rangle}_1\setminus S_1,\ldots,R^{\langle 1\rangle}_i\setminus S_i, R^{\langle 1\rangle}_{i+1},\ldots, R^{\langle 1\rangle}_{m+r}).
\]
This shows (b) for $i+1$.
Let us now show (a) for $i+1$.
Since $|\mathcal{Q}_i|\leq g_i$,
\Cref{prop:union} implies that
$\tw(G,\bigcup_{Q\in \mathcal{Q}_i} Q) \leq \funref{fun:uniontw}(t,g_i) = \alpha_i$.
Let $(\hat{\mathcal{A}}_i,\hat{\mathfrak{R}}_i)$ be a $(p_i-(s^h\cdot p),p_i-(s^h\cdot p))$-railed annulus flatness pair of $G\setminus V(\bar{a})$,
where $V(\mathsf{Compass}_{\hat{\mathfrak{R}}_i} (\hat{\mathcal{A}}_i))\cup \mathsf{inn}_{\hat{\mathfrak{R}}_i}(\hat{\mathfrak{A}}_i) \subseteq V(\cupall\mathsf{Influence}_{\mathfrak{R}}(C_{s^h\cdot p}^{i}))$ and $C_{s^h\cdot p}^i$ is the $(s^h\cdot p)$-th cycle of $\mathcal{A}_i$.
Such $(\hat{\mathcal{A}}_i,\hat{\mathfrak{R}}_i)$ exists because $\mathcal{A}_i$ has
$p_i$ cycles.
Since $p_i - (s^h\cdot p)  = \funref{fun:bid}(\alpha_i) \cdot h_i+1$,~\Cref{lemma:buffer} implies that there is a subsequence of cycles $\mathcal{C}_i '$ of $\hat{\mathcal{A}}_i$ of size $h_i$ such that $\bigcup_{Q\in \mathcal{Q}_i} Q$ does not intersect $V(\cupall\mathsf{Influence}_{\hat{\mathfrak{R}}_i}(\mathcal{C}_i))$.
Since $h_i = 2p_{i+1}+2$,
we may consider a $(p_{i+1},p_{i+1})$-railed annulus flatness pair $(\mathcal{A}_{i+1},\mathfrak{R}_{i+1})$ of $G\setminus V(\bar{a})$
such that $V(\mathsf{Compass}_{\mathfrak{R}_{i+1}} (\mathcal{A}_{i+1}))\cup \mathsf{inn}_{\mathfrak{R}_{i+1}}(\mathfrak{A}_{i+1})$ is a subset of $V(\cupall\mathsf{Influence}_\mathfrak{R}(\mathcal{C}_i'))$.
The latter implies that $\bigcup_{j\in[i]} R_i\setminus S_i$ and $V(\mathsf{Compass}_{\mathfrak{R}_{i+1}} (\mathcal{A}_{i+1}))\cup \mathsf{inn}_{\mathfrak{R}_{i+1}}(\mathfrak{A}_{i+1})$ are disjoint, showing (a) for $i+1$. This concludes the proof of the claim and of the lemma.
\end{proof}

To show item (c) of the \textsl{Local-Global-Irrelevancy Definition},
it remains to prove the existence of a non-empty set $S_0$ that can be discarded from our annotated structures (see~\Cref{def_red}). The removal of such a set $S_0$ should
not affect the satisfaction of any atomic formula (or its negation) of $\mathsf{CMSO/tw}\!+\mathsf{dp}$. For all atomic formulas except of the disjoint-paths predicate, this is trivially true.
For the disjoint-paths predicate, we use the following variant of the \textsl{Unique Linkage Theorem} that can be derived from~\cite[Theorem~23]{BasteST19hittIV} (see also~\cite[Proposition~3]{GolovachST22model_arXiv} and~\cite{RobertsonS09XXI,KawarabayashiW10asho}).

\begin{proposition}\label{lemma:irrelevant_area}
There is a function $\newfun{fun:irrelevant}:\mathbb{N}^2\to\mathbb{N}$ such that for every $l,d,y\in\mathbb{N}$, if $G$ is a graph, $(\mathcal{A},\mathfrak{R})$ is a railed annulus flatness pair of $G$, such that $C_1,\ldots, C_p$ are the cycles of $\mathcal{A}$, where $p\geq \funref{fun:irrelevant}(d,y)$, and $v_1,u_1,\ldots, v_d,u_d$ are vertices of $G$ that do not belong in $V(\mathsf{Compass}_{\mathfrak{R}}(\mathcal{A}))\cup\mathsf{inn}_{\mathfrak{R}}(\mathcal{A})$,
then for every linkage $L$ of $G$ such that $T(L)\subseteq \{v_1,u_1,\ldots,v_d,u_d\}$, there is an equivalent linkage $L'$ of $G$ that does not intersect $V(\cupall \mathsf{Influence}_{\mathfrak{R}}(C_y))\cup\mathsf{inn}_{\mathfrak{R}}(\mathcal{A})$.\end{proposition}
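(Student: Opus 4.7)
The plan is to derive this statement as a direct consequence of the \emph{Unique Linkage Theorem}~\cite{RobertsonS09XXI,KawarabayashiW10asho}, in the quantitative reformulation of~\cite[Theorem~23]{BasteST19hittIV} (see also~\cite[Proposition~3]{GolovachST22model_arXiv}). Informally, that theorem asserts the existence of a function $g \colon \mathbb{N} \to \mathbb{N}$ such that, whenever a graph $G$ contains a flat wall of depth at least $g(d)$ and a $d$-linkage of $G$ has all its terminals outside that wall, there exists an equivalent linkage avoiding a prescribed inner region carved out by the wall.

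First, I would translate the flat railed annulus setup of the statement into the flat-wall language required by~\cite[Theorem~23]{BasteST19hittIV}, using the standard correspondence between flat railed annuli and flat walls recalled in~\Cref{subsec_levelings} and~\cite[Subsection~3.4]{SauST24amor}; the rendition $(\Gamma,\sigma,\pi)$ certifying flatness of $\mathcal{A}$ already gives an embedding of $\mathsf{Compass}_{\mathfrak{R}}(\mathcal{A})$ into a closed annulus, from which a flat structure of depth $p$ can be extracted. Second, the hypothesis that $v_1,u_1,\ldots,v_d,u_d \notin V(\mathsf{Compass}_{\mathfrak{R}}(\mathcal{A})) \cup \mathsf{inn}_{\mathfrak{R}}(\mathcal{A})$ guarantees that the terminals of $L$ lie entirely outside the flat region, which is more than enough to satisfy the terminal-exclusion hypothesis of the Unique Linkage Theorem for the portion of $\mathcal{A}$ between $C_y$ and $C_p$.

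Third, I would set $\funref{fun:irrelevant}(d,y) := y + g(d)$. With $p \geq \funref{fun:irrelevant}(d,y)$, at least $g(d)$ concentric cycles of $\mathcal{A}$ lie strictly between $C_y$ and $C_p$; these, together with the inherited rendition, form a flat ``buffer'' of the required depth surrounding $V(\cupall \mathsf{Influence}_{\mathfrak{R}}(C_y)) \cup \mathsf{inn}_{\mathfrak{R}}(\mathcal{A})$. Applying~\cite[Theorem~23]{BasteST19hittIV} to $L$ within this buffer produces an equivalent linkage $L'$ whose paths can be routed around the buffer region instead of crossing it, and hence avoid $V(\cupall \mathsf{Influence}_{\mathfrak{R}}(C_y)) \cup \mathsf{inn}_{\mathfrak{R}}(\mathcal{A})$.

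The main obstacle is bookkeeping rather than genuine new combinatorics: one must carefully match the geometric objects in~\cite[Theorem~23]{BasteST19hittIV} with the flat railed annulus formalism here, ensuring that the ``irrelevant region'' isolated by the Unique Linkage Theorem corresponds exactly to $V(\cupall \mathsf{Influence}_{\mathfrak{R}}(C_y)) \cup \mathsf{inn}_{\mathfrak{R}}(\mathcal{A})$, and that the depth parameters (number of surrounding cycles versus wall depth) translate correctly under the annulus-to-wall conversion. The actual rerouting argument is fully black-boxed from the Unique Linkage Theorem, and no further combinatorial surgery is needed.
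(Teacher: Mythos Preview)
Your proposal is correct and matches the paper's treatment: the paper does not give a proof of this proposition but simply states that it ``can be derived from~\cite[Theorem~23]{BasteST19hittIV} (see also~\cite[Proposition~3]{GolovachST22model_arXiv} and~\cite{RobertsonS09XXI,KawarabayashiW10asho}),'' which is exactly the black-box derivation from the Unique Linkage Theorem that you outline. Your additional remarks on the annulus-to-wall translation and the choice $\funref{fun:irrelevant}(d,y) = y + g(d)$ are reasonable elaborations of bookkeeping that the paper leaves implicit.
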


Using~\Cref{lemma:reducing_annotation} and~\Cref{lemma:irrelevant_area},
we can show the following.
Recall that $\Psi_{m,d,s,t,b}^\sigma$ is the set of formulas from~\Cref{lemma_number_partial_patterns} (of vocabulary $\sigma_{m,r,t,l}\cup\{R_1,\ldots, R_{m+r}\}
$); see also~\Cref{lem_equirep}.

\begin{lemma}\label{lemma:finalirr}
There is a function $\newfun{fun:item3red}:\mathbb{N}^6\to\mathbb{N}$ such that for
every $l,m,d,r,t,y\in\mathbb{N}$, where
$r\leq d$, if
\begin{itemize}
\item $s:=\funref{fun:bid}(t)$,
$h:=m+d$,
$p :=\funref{@aristocracias}(d)$,
$b:=\funref{fun_ulinkastar}(d)$,
\item $G$ is a graph,
 $R_1,\ldots, R_{m+r}\subseteq V(G)$,
$\bar{a}\in V(G)^l$,
\item $(\mathcal{A},\mathfrak{R})$ is a well-aligned $(x,x)$-railed annulus flatness pair of $G\setminus V(\bar{a})$,
for $x\geq\funref{fun:item3red}(l,m,d,r,t,y)$, and
\item $Z:= V(\cupall\mathsf{Influence}_{\mathfrak{R}}(C_{s^h\cdot p}))\cup\mathsf{inn}_{\mathfrak{R}}(\mathcal{A})$, where $C_{s^h\cdot p}$ is the $(s^h\cdot p)$-th cycle of $\mathcal{A}$,
\end{itemize}
then there exist $S_0,\ldots,S_{m+r}\subseteq Z,$ where $S_0\neq\emptyset$,
 such that
 \[\text{$(S_0,\ldots, S_{m+r})$ is $\Psi_{m,r,t,l}$-irrelevant for $(E_{\mathcal{A}}^{(s,h,p,b,l)}(G,\bar{a}),\bar{R}^{\star}),$}\]
 where $\bar{R}^\star:= (R_1\cap V(E_{\mathcal{A}}^{(s,h,p,b,l)}(G,\bar{a})),\ldots, R_{m+r} \cap V(E_{\mathcal{A}}^{(s,h,p,b,l)}(G,\bar{a})))$ and $\Psi_{m,r,t,l}:=\Psi_{m,d,s,t,b}^{\{E\}^{\langle l \rangle}}$.
 \end{lemma}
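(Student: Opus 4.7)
The plan is to combine \Cref{lemma:reducing_annotation} (annotation reduction) with \Cref{lemma:irrelevant_area} (disjoint-paths rerouting) in a two-stage nested-annulus argument. Let $k^{\star}$ be the maximum arity of any disjoint-paths predicate appearing in any formula of the finite set $\Psi_{m,r,t,l}$ (bounded by a function of $l,m,d,r,t$ via \Cref{lemma_number_partial_patterns}). Set $y' := y + \mathcal{O}(1)$ (a small buffer specified below), choose $z := \funref{fun:irrelevant}(k^{\star}+d,\, y')$, and define $\funref{fun:item3red}(l,m,d,r,t,y)$ to dominate $\funref{fun:annotation}(l,m,d,r,t,z)$. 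Invoke \Cref{lemma:reducing_annotation} to obtain sets $S_1,\ldots,S_{m+r}\subseteq Z$ and a well-aligned $(z,z)$-railed annulus flatness pair $(\mathcal{A}',\mathfrak{R}')$ of $G\setminus V(\bar{a})$ whose compass together with $\mathsf{inn}_{\mathfrak{R}'}(\mathcal{A}')$ sits inside $Z$, is disjoint from $\bigcup_{i}R_i\setminus S_i$, and such that the layered annotated type of $(E_{\mathcal{A}}^{(s,h,p,b,l)}(G,\bar{a}),\bar{R}^{\star})$ is unchanged upon replacing each $R_i^{\star}$ by $R_i^{\star}\setminus S_i$. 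This takes care of the set-quantifier layers of the layered type; what remains is to single out a non-empty $S_0$ that is irrelevant with respect to the first-order layers and the disjoint-paths atoms.

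Next, letting $C'_{y'}$ denote the $y'$-th cycle of $\mathcal{A}'$, define
\[
S_0 \;:=\; \Big(V\big(\cupall\mathsf{Influence}_{\mathfrak{R}'}(C'_{y'})\big)\cup \mathsf{inn}_{\mathfrak{R}'}(\mathcal{A}')\Big) \;\setminus\; \big\{r_i^{\langle\bar{w}\rangle} : i\in[b],\ \bar{w}\in[0,s-1]^{h}\big\}.
\]
The buffer in $y'$ (together with the fact that $b\cdot s^{h}$ depends only on $l,m,d,r,t$) guarantees that $S_0\neq\emptyset$. To verify $\Psi_{m,r,t,l}$-irrelevance, I will show that removing $(S_0,S_1,\ldots,S_{m+r})$ preserves the full layered annotated type by a walk through the atomic formulas. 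Equalities, edges, unary-predicate memberships, and cardinality atoms are immediately preserved because every vertex occurring in them—namely the members of $\bar{v}$, the apices $\bar{a}$, the root constants $\bar{c}$, and the surviving annotations $R_i\setminus S_i$—lies outside $S_0$ by the first-stage construction. For disjoint-paths atoms whose terminals lie outside the compass of $(\mathcal{A}',\mathfrak{R}')$, \Cref{lemma:irrelevant_area} delivers, for every realizing linkage, an equivalent linkage avoiding $S_0$ (the factor $k^{\star}+d$ accounts for the extra terminals produced by the apex-expansion $\zeta_{\sf dp}$ of \Cref{lem:translation-dp-apices}), and the path-restriction clause $Y_{\bar{w}}\cup Z_{\bar{w}}\cup V(\bar{c})$ of $\Psi_{\bar{w}}^{m,d,t}$ is respected because $(\mathcal{A}',\mathfrak{R}')$ lies inside these layers for the innermost traces $\bar{w}$ that are relevant in the recursion.

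The main obstacle is that the first-order quantifiers at depth $i\le d-r$ range freely over $Y_{\bar{w}w}$, so the quantified vertex may land inside $S_0$; in that case \Cref{lemma:irrelevant_area} cannot be applied with that vertex as a terminal, and the quantification range $Y_{\bar{w}w}$ shrinks upon removing $S_0$. To resolve this I will mount a representative-vertex argument inside $(\mathcal{A}',\mathfrak{R}')$ that parallels the shrinking step in the proof of \Cref{lemma:reducing_annotation}: since the number of distinct layered sub-types at every level is bounded via \Cref{lemma_number_partial_patterns}, packing sufficiently many buffer layers into $(\mathcal{A}',\mathfrak{R}')$ guarantees that every sub-type realized by a vertex inside $S_0$ is also realized by a vertex outside $S_0$, so the multiset of sub-types over $Y_{\bar{w}w}$ is unchanged by the removal. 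Absorbing the required buffer into $z$ and $y'$ yields the stated $\funref{fun:item3red}$, completing the proof.
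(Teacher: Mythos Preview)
Your two-stage plan—reducing the annotations via \Cref{lemma:reducing_annotation}, then carving out a nonempty $S_0$ inside the residual annulus $(\mathcal{A}',\mathfrak{R}')$ and appealing to \Cref{lemma:irrelevant_area}—is exactly the combination the paper points to, and you correctly isolate the free first-order quantifiers at levels $i\le d-r$ as the one remaining obstacle. However, your representative-vertex argument does not close that gap. You argue only that ``every sub-type realised by a vertex inside $S_0$ is also realised by a vertex outside $S_0$,'' i.e.\ you address the shrinking of the \emph{quantification range} $Y_{\bar w w}\leadsto Y_{\bar w w}\setminus S_0$. What is missing is that deleting $S_0$ also changes the \emph{structure}: for a vertex $u'\notin S_0$ the atomic (disjoint-paths) formulas with $u'$ among the terminals may flip truth value between $\mathfrak H$ and $\mathfrak H\setminus S_0$, so the sub-type of $u'$ itself can change. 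Thus neither direction of the required set equality is established; you would need a genuine back-and-forth in which the chosen representatives sit far enough from $S_0$ that \Cref{lemma:irrelevant_area} applies with them as terminals, and then argue both inclusions.

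A cleaner route, and one more in line with the machinery already built, is to invoke \Cref{lemmabis} with $(\mathcal{A}',\mathfrak{R}')$ playing the role of the inner annulus (so choose $z\ge s^{h}\cdot p$ in \Cref{lemma:reducing_annotation}). In the nested structure $E_{\mathcal{A}'}^{(s,h,p,b,l)}$ every $Y_{\bar w}$ lies inside $V(\mathsf{Compass}_{\mathfrak{R}'}(\mathcal{A}'))\cup\mathsf{inn}_{\mathfrak{R}'}(\mathcal{A}')$, which by the first bullet of \Cref{lemma:reducing_annotation} is disjoint from each $R_i\setminus S_i$; hence every annotation-restricted quantifier there ranges over $\emptyset$. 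For $r\ge 1$ this collapses $\mathsf{ltp}^{(d,r)}$ to $\emptyset$ regardless of the underlying structure, so the nested layered type is a fixed object and is trivially preserved when $S_0$ is removed. For $r=0$ one has $d=0$, so there are no first-order variables at all; the base-level atoms involve only constant terminals (the apices and the $r_i^{\langle\bar w\rangle}$), all of which lie outside a suitable sub-annulus separating them from $S_0$, and \Cref{lemma:irrelevant_area} finishes. In either case \Cref{lemmabis} transfers the preservation back to $E_{\mathcal{A}}^{(s,h,p,b,l)}(G,\bar a)$, which together with the first stage yields the claimed $\Psi_{m,r,t,l}$-irrelevance. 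This avoids having to reason about free first-order values inside $S_0$ at all.
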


\section{Conclusions and directions for further research}
\label{sec_conclusions}

In this paper we proved that model checking on non-trivial
minor-closed graph classes is tractable for problems expressible
in the newly introduced logic $\mathsf{CMSO}/\tw+{\sf dp}$ (\Cref{main_theorem_intro}). We wish to stress
that our proof is constructive, in the sense that one can construct a  (meta-)algorithm that receives as an input a formula $φ\in \mathsf{CMSO}/\tw+{\sf dp}$ and an integer $h$, and outputs an algorithm $\mathbb{A}_{φ,h}$
that, given an $K_{h}$-minor-free graph $G$, outputs whether
$G\models φ$ or not. \Cref{main_theorem_intro} holds also for structures where the combinatorial restriction is that the Gaifman graph
excludes $H$ as a minor. In our presentation we adopted the graph-theoretic setting, as the proof for the ``more general'' version on structures is
not essentially different but notationally much heavier.

\paragraph{On the choice of $\tw$ in the parametrization of the quantifiers.}
Note that, in the definition of $\mathsf{CMSO}/\tw$, based on the parameterized quantification introduced in \eqref{quant}, annotated treewidth plays a pivotal role.  The annotated version of every
minor-monotone graph parameter $\p$ (such as treewidth, as it is our case) was defined in~\cite{ThilikosW23excl} by setting $\p(G,X):=\max\{\p(H)\mid \mbox{$H$ is an $X$-minor of $G$}\}$. Likewise, one may
parameterize the quantification in \textsf{CMSO} formulas by
setting
\[\exists_{\p≤k} X\ φ \coloneqq \exists X\ (\p(G,X)≤k \wedge φ),\]
hence defining the fragment  $\mathsf{CMSO}/\p$ of $\mathsf{CMSO}$.
Certainly, if $\p$ is any parameter that is functionally larger than $\tw$, then $\mathsf{CMSO}/\tw$  is more expressive than $\mathsf{CMSO}/\p$ (recall that a parameter $\p$ is \emph{functionally larger} than a parameter $\p'$ if there is a function $f:\N\to\N$ such that for every graph
$G$, $\p'(G)≤f(\p(G))$). But what if we consider $\mathsf{CMSO}/\p$
for some parameter $\p$ that is {\sl not} functionally larger than $\tw$? Such a parameter might be, for instance, the Euler genus, the apex\footnote{The \emph{apex number} of a graph $G$ is the minimum number of vertices whose removal yields a planar graph.} number, or the Hadwiger number. In such a case, the parameter $\p$ would be bounded for all planar graphs, which, in turn,
would imply that, on planar graphs, $\mathsf{CMSO}/\p$ would be as expressive as $\mathsf{CMSO}$. This, combined with the impossibility  results of \cite{Kreutzer12,KreutzerT10lowe,MakowskyM03tree} (see also~\cite[Section 9]{GroheK09})
implies that no AMT as the one of \Cref{main_theorem_intro} can be expected for $\mathsf{CMSO}/\p$
when $\p$ is not functionally larger than treewidth. In this sense,
the choice of treewidth in the definition of $\mathsf{CMSO}/\tw$ is the \textsl{best possible}
for an AMT as the one in \Cref{main_theorem_intro}. This justifies the use of $\exists_{k} X\ φ$ as a simplified way to write $\exists_{\tw≤k} X\ φ$ in \eqref{quant}, and indicates that  \Cref{main_theorem_intro} is ``optimal'' in what concerns the choice of the fragment $\mathsf{CMSO}/\tw$.

The introduction of $\mathsf{CMSO}/\p$ for every minor-monotone
parameter introduces a ``fine-grained'' viewpoint of \textsf{CMSO}
whose algorithmic properties, in our opinion, deserve
independent research. Is it possible to generate analogous hierarchies
for other partial ordering relations and other fragments of \textsf{CMSO}, such as \textsf{CMSO}$_{1}$ where quantification is allowed only on vertex sets?

\paragraph{Further than minors.}
The next question is whether and when we may go further than the
combinatorial threshold of minor-closedness. The next horizon for this might be topological minor-closedness, where we know that
{\sf FO+conn} and the more general  {\sf FO+dp} admit tractable model checking because of the results in \cite{PilipczukSSTV22algor} and \cite{SchirrmacherSSTV24mode}, respectively.
We are not in position to conjecture whether it is possible to replace, in \Cref{main_theorem_intro}, $H$-minor-freeness by
$H$-topological-minor-freeness.
However, we cannot exclude that this might be possible for $\mathsf{CMSO}/\p$ for some suitable choice of $\p$ such as ${\sf vc}$ (vertex cover) or ${\sf td}$ (treedepth).
Note that for all logics that subsume {\sf FO+conn}, such as {\sf FO+dp}, ${\sf \tilde{\Theta}}^{\sf dp}$, or {\sf CMSO/tw+dp} (cf.~\Cref{fig-Hasse}), $H$-topological-minor-freeness is the most general combinatorial condition we may hope for if we restrict to monotone graph classes (assuming a mild condition on the efficiency
of the encoding), as proved in~\cite{PilipczukSSTV22algor}.

\paragraph{Extending the predicates.}
Another direction is to consider variants or extensions of $\mathsf{CMSO}/\tw$ that use more expressive predicates than the disjoint paths. In the end of \Cref{def_twMSO} we introduced the
logic $\mathsf{CMSO/tw}\!+\!\mathsf{dp}^+$ where we additionally demand the disjoint paths to avoid some vertex set. Our proofs can be easily adapted in order to work for this more general logic as follows. First we note that after replacing the use of the disjoint-paths predicate by $\mathsf{dp}^+$ in the notion of annotated types, semi-annotated types, and layered types, all results from~\Cref{sec_gametrees} and~\Cref{sec_enhancedstruct} are directly obtained for $\mathsf{CMSO/tw}\!+\!\mathsf{dp}^+$. Moreover, in what concerns~\Cref{lem_colomodelsrerout} from~\Cref{subsec_conventions}, let us stress that because of the Linkage Combing Lemma~\cite[Corollary 3.4]{GolovachST23comb} the obtained linkages $\tilde{L}_1, \tilde{L}_2$ can be assumed to have the additional property that $(\tilde{L}_1 \cup \tilde{L}_2)\setminus V({\sf Leveling}_{(\mathcal{A}, \mathfrak{R})}^{\langle \mathcal{C}_{\bar{w}}\rangle}(G))\subseteq L\setminus V({\sf Leveling}_{(\mathcal{A}, \mathfrak{R})}^{\langle \mathcal{C}_{\bar{w}}\rangle}(G))$. Using this additional assumption, the core result of the proof of the \textsl{Local-Global-Irrelevancy Condition} in~\Cref{sec_exchangability}, namely~\Cref{lemma:intermediate1}, can be proved for $\mathsf{CMSO/tw}\!+\!\mathsf{dp}^+$. The only modification comes in the last part of the proof, where the additional assumption is used to guarantee that if the linkage $L$ is disjoint from the sets in $\bar{V}^{\mathsf{in}}$ then the same holds for the linkage $\tilde{L}_1 \cup \tilde{L}_2$.

Another direction is to consider parity conditions in the paths of ${\sf dp}_{k} (s_1,t_1,\ldots, s_k,t_k)$. We believe that our technique can be applied in this direction as well.

\paragraph{Irrelevant vertices in sub-quadratic time.}
An insisting open question is whether (or when) it is possible to improve the quadratic running time of the algorithms implied by \Cref{main_theorem_intro}. Notice that this running time
is quite linked to the idea of the irrelevant vertex technique, given that the irrelevant parts
are removed one by one and the cost their detection is already linear.
An improvement of the running time in \Cref{main_theorem_intro}
would require either the detection of ``many'' irrelevant vertices at once
(this is been achieved so far for particular modification problems such as  in~\cite{Kawarabayashi09}) or
would be based on new ideas beyond the irrelevant vertex technique.

\paragraph{Reductions between AMTs.}
In~\Cref{reduct_sec}  we defined a concept of reduction between AMTs  based on the notion of \textsl{vertex irrelevancy} and the fact that
``local irrelevancy'', certified by some AMT, may imply ``global irrelevancy'', yielding AMTs on more general combinatorial structures.
We defined this reducibility notion in a most  abstract way as we believe that it might be a good base  for the study, the understanding,  and the proof of other AMTs in the future.


\end{document}